\newtheoremstyle{note}
  {\topsep/2}               % ABOVE SPACE
  {\topsep/2}               % BELOW SPACE
  {}                      % BODY FONT
  {\parindent}            % INDENT (empty value is the same as 0pt)
  {\itshape}              % HEAD FONT
  {.}                     % HEAD PUNCTUATION
  {5pt plus 1pt minus 1pt}% HEAD SPACE
  {}
\theoremstyle{note}
\newtheorem{theorem}{Theorem}
\newtheorem{lemma}{Lemma}
\newtheorem{corollary}{Corollary}
\newtheorem{proposition}{Proposition}
\theoremstyle{definition}
\theoremstyle{remark}
\newcommand{\half}{\mbox{$\textstyle \frac{1}{2}$}}
\newcommand{\spa}{\operatorname{span}}
\newcommand{\tr}{\operatorname{tr}}
\newcommand{\imply}{\mathrel{\Rightarrow}}
 \newcommand{\rme}{\mathrm{e}}
 \newcommand{\rmi}{\mathrm{i}}
 \newcommand{\rmA}{\mathrm{A}}
 \newcommand{\rmB}{\mathrm{B}}
 \newcommand{\rmE}{\mathrm{E}}
 \newcommand{\caH}{\mathcal{H}}
 \newcommand{\caS}{\mathcal{S}}
 \newcommand{\caU}{\mathcal{U}}
  \newcommand{\scrD}{\mathscr{D}}
 \newcommand{\scrS}{\mathscr{S}}
  \newcommand{\scrT}{\mathscr{T}}
 \newcommand{\eff}{\mathrm{eff}}
\newcommand{\be}{\begin{equation}}
\newcommand{\ee}{\end{equation}}
\newcommand{\ba}{\begin{align}}
\newcommand{\ea}{\end{align}}
\def\<{\langle}  %% overriding the original command \<
\def\>{\rangle}  %% overriding the original command \>
\def\eqref#1{\textup{(\ref{#1})}}  %% overriding the original command \eqref
\newcommand{\eref}[1]{Eq.~\textup{(\ref{#1})}}
\newcommand{\Eref}[1]{Equation~\textup{(\ref{#1})}}
\newcommand{\esref}[1]{Eqs.~\textup{(\ref{#1})}}
\newcommand{\Esref}[1]{Equations~\textup{(\ref{#1})}}
\newcommand{\fref}[1]{Fig.~\ref{#1}}
\newcommand{\Fref}[1]{Figure~\ref{#1}}
\newcommand{\sref}[1]{Sec.~\ref{#1}}
\newcommand{\Sref}[1]{Section~\ref{#1}}
\newcommand{\thref}[1]{Theorem~\ref{#1}}
\newcommand{\Thref}[1]{Theorem~\ref{#1}}
\newcommand{\lref}[1]{Lemma~\ref{#1}}
\newcommand{\Lref}[1]{Lemma~\ref{#1}}
\newcommand{\lsref}[1]{Lemmas~\ref{#1}}
\newcommand{\pref}[1]{Proposition~\ref{#1}}
\newcommand{\crref}[1]{Corollary~\ref{#1}}
\newcommand{\Crref}[1]{Corollary~\ref{#1}}
\newcommand{\cref}[1]{Conjecture~\ref{#1}}
\newcommand{\Cref}[1]{Conjecture~\ref{#1}}
\newcommand{\aref}[1]{Appendix~\ref{#1}}
\newcommand{\rcite}[1]{Ref.~\cite{#1}}
\newcommand{\rscite}[1]{Refs.~\cite{#1}}
\begin{document}
\title{Minimum number of experimental settings required to verify bipartite pure states and unitaries}

\author{Yunting Li}
\affiliation{State Key Laboratory of Surface Physics and Department of Physics, Fudan University, Shanghai 200433, China}
\affiliation{Institute for Nanoelectronic Devices and Quantum Computing, Fudan University, Shanghai 200433, China}
\affiliation{Center for Field Theory and Particle Physics, Fudan University, Shanghai 200433, China}

\author{Haoyu Zhang}
\affiliation{State Key Laboratory of Surface Physics and Department of Physics, Fudan University, Shanghai 200433, China}
\affiliation{Institute for Nanoelectronic Devices and Quantum Computing, Fudan University, Shanghai 200433, China}
\affiliation{Center for Field Theory and Particle Physics, Fudan University, Shanghai 200433, China}

\author{Zihao Li}
\affiliation{State Key Laboratory of Surface Physics and Department of Physics, Fudan University, Shanghai 200433, China}
\affiliation{Institute for Nanoelectronic Devices and Quantum Computing, Fudan University, Shanghai 200433, China}
\affiliation{Center for Field Theory and Particle Physics, Fudan University, Shanghai 200433, China}

\author{Huangjun Zhu}
\email{zhuhuangjun@fudan.edu.cn}
\affiliation{State Key Laboratory of Surface Physics and Department of Physics, Fudan University, Shanghai 200433, China}
\affiliation{Institute for Nanoelectronic Devices and Quantum Computing, Fudan University, Shanghai 200433, China}
\affiliation{Center for Field Theory and Particle Physics, Fudan University, Shanghai 200433, China}

\date{\today}

\begin{abstract}
Efficient verification of quantum states and gates is crucial to the development of quantum technologies. 	Although the sample complexities of quantum state verification and quantum gate verification have been studied by many researchers, the number of 	experimental settings has received little attention and is poorly understood. In this work we study systematically quantum state verification and quantum gate verification with a focus on the number of experimental settings. 
We show that any bipartite pure state can be verified by only two measurement settings based on local projective measurements. Any bipartite unitary in dimension $d$ can be verified by $2d$ experimental settings based on local operations.  In addition, we introduce the concept of entanglement-free verification and clarify its connection with minimal-setting verification. 
Finally, we show that any two-qubit unitary can be verified with at most five experimental settings; moreover, a generic two-qubit unitary (except for a set of measure zero)
can be verified by an entanglement-free protocol based on four settings.	In the course of study we clarify the properties of Schmidt coefficients of two-qubit unitaries, which are of independent interest.
\end{abstract}

\maketitle

\section{Introduction}
Quantum information processing has attracted increasing attention recently due to  its great potential and profound implications. 
To harness the power of quantum information processing, it is crucial to verify the underlying quantum states and devices efficiently based on the accessible measurements. Unfortunately, traditional tomographic approaches are notoriously inefficient since the resource overhead increases exponentially with the system size under consideration. To overcome this problem, a number of alternative approaches have been proposed recently; see \rscite{Eisert,KlieschTheory,Theo2021,yu2021statistical} for an overview.

Among alternative approaches proposed so far, \emph{quantum state verification} (QSV) is particularly  appealing because it can achieve a high efficiency based on
local operations and classical communication (LOCC) \cite{Masahito2005,Aolita2015,Many-qubit,Pallister2018Optimal,Zhu2019Efficient,Zhu-general}. 
Notably, efficient verification protocols based on local projective measurements have been constructed for bipartite pure states \cite{Masahito2005,Zhu-entangled,LiHZ19,MHtwo-qubit,S.JWbipartite}, stabilizer states \cite{Stabilizer2015,Pallister2018Optimal,Kalev2019,Zhu-hypergraph,Zhu-general,LiGHZ,Tom-stabilizer}, hypergraph states \cite{Zhu-hypergraph}, weighted graph states \cite{Hayashi_weighted}, and Dicke states \cite{Liu-dicke,LiHSS21}. Moreover, the efficiency of QSV has been demonstrated in a number of experiments \cite{ExpEntangled,ExpLu,TowardsJiang,ClassicalZhang}. 
Recently, the idea of QSV was generalized to  \emph{quantum gate verification} (QGV) \cite{Liu2020Efficient,ZhuZ20,Zeng2019Quantum} (cf.  \rscite{Hofmann2005Complementary,Daniel2013Minimum,Mayer2018Quantum,Wu_2019,Cross2020}), which enables efficient verification of various quantum gates and quantum circuits based on LOCC.
Notably, all bipartite unitaries and Clifford unitaries can be verified with resources that are independent of the system size, while the resource required to verify the generalized controlled-NOT (CNOT) gate and generalized controlled-$Z$ (CZ) gate grows only linearly with the system size. The efficiency of QGV has also been demonstrated in several experiments recently \cite{zhang2021efficient,luo2021proofofprinciple}.

So far most works on QSV and QGV have exclusively focused on the sample efficiency as the main figure of merit. By contrast, the number of experimental settings has received little attention, although this figure of merit is also of key interest to both theoretical study and practical applications.  Even for bipartite pure states,  it is still not clear how many measurement settings are required to construct a reliable verification protocol. The situation is even worse in the case of bipartite unitaries, not to mention the multipartite scenario. This problem becomes particularly important when it is difficult or slow to switch measurement settings, which is the case in many practical scenarios.

In this work we study systematically QSV and QGV with a focus on the number of experimental settings based on LOCC. We show that any bipartite pure state can be verified by two measurement settings based on nonadaptive local projective measurements. By contrast, at least $d$ experimental settings based on local operations  are required to verify each bipartite unitary in dimension $d$, while $2d$ settings are sufficient.  In addition, we introduce the concept of entanglement-free verification, which is of special interest to both theoretical study and practical applications. Moreover, we  show that any entanglement-free verification protocol can be turned into a minimal-setting protocol, and vice versa.

For each two-qubit unitary, we determine the minimum number of required experimental settings explicitly. Our study shows that any two-qubit unitary can be verified using only five experimental settings, while
a generic two-qubit unitary (except for a set of measure zero) can be verified by an  entanglement-free protocol based on four  settings. Explicit entanglement-free protocols are constructed  for CNOT, CZ,  controlled-phase (C-Phase), and SWAP gates, respectively. In the course of study we clarify the properties of Schmidt coefficients of two-qubit unitaries and their implications for studying the equivalence relation under local unitary transformations, which are of interest beyond the main focus of this work.

The rest of this paper is organized as follows. In \sref{sec:QSGV},  we briefly review the basic frameworks of QSV  and QGV. In \sref{sec:BipartitePureState}, we determine the minimum number of measurement settings required to verify each bipartite pure state. In \sref{sec:UMinS}, we clarify the relation between minimal-setting verification and entanglement-free verification; in addition, we derive nearly tight lower and upper bounds for the minimum number of settings required to verify each bipartite unitary. In \sref{sec:TwoQubitU}, we clarify the properties of  Schmidt coefficients of two-qubit unitaries. In \sref{sec:VTwoQuibtU}, we determine the minimum number of  settings required to verify each two-qubit unitary. \Sref{sec:summary} summarizes the paper. To streamline the presentation, some technical proofs are relegated to the appendixes.

\section{\label{sec:QSGV}Quantum state and gate  verification}
In preparation for the later study, here we briefly review the basic frameworks of QSV \cite{Pallister2018Optimal,Zhu2019Efficient,Zhu-general} and QGV \cite{ZhuZ20,Liu2020Efficient,Zeng2019Quantum} (cf.  \rscite{Hofmann2005Complementary,Daniel2013Minimum,Mayer2018Quantum}).

\subsection{\label{sec:QSV}Quantum state verification}
 Consider a quantum system associated with the Hilbert space $\caH$. A quantum device  is supposed to produce the target state $|\Psi \rangle$, but actually produces the $N$ states $\rho_1,\rho_2,\dots,\rho_N$ in $N$ runs. To distinguish the two situations, we can perform a random test in each run.
Each test  is determined by a test operator $E_l$, which is associated with  a two-outcome measurement of the form $\{E_l, I-E_l\}$, where $I$ is the identity operator. Here the first outcome corresponds to passing the test. To guarantee that the target state can always pass the test, the test operator $E_l$ should satisfy the condition $\langle \Psi | E_l | \Psi \rangle =1$, which means $ E_l | \Psi \rangle =| \Psi \rangle$.

If the test  $E_l$ is performed with probability $p_l$, then the performance of the above verification procedure is determined by the
 verification operator $\Omega = \sum_l p_l E_l$. Suppose $\langle \Psi |\rho |\Psi \rangle \leq 1-\varepsilon$, then the maximal probability that $\rho$ can pass each test on average is \cite{Pallister2018Optimal,Zhu2019Efficient,Zhu-general}
\begin{equation}
\max_{\langle \Psi |\rho |\Psi \rangle \leq 1-\varepsilon} \tr(\Omega \rho)=1-[1-\beta(\Omega)]\varepsilon = 1-\nu(\Omega)\varepsilon,
\end{equation}
where $\beta(\Omega)$ is the second largest eigenvalue of $\Omega$, and $\nu(\Omega)=1-\beta(\Omega)$ is the spectral gap from the maximal eigenvalue. Note that a positive spectral gap is necessary and sufficient for verifying the target state reliably, assuming that the total number of tests is not limited.

Let $\varepsilon_j = 1-\< \Psi | \rho_j |\Psi \> $ be the infidelity of the state prepared in the $j$th run and  let  $\bar{\varepsilon} = \sum_j \varepsilon_j /N$ be the average infidelity. Suppose the states $\rho_1,\rho_2,\dots,\rho_N$ prepared in the $N$ runs are independent of each other. Then  the maximal probability that these states can pass all $N$ tests is $[1-\nu(\Omega)\bar{\varepsilon}]^N$. To ensure the condition $\bar{\varepsilon}<\varepsilon $ with significant level $\delta$, the minimum number of tests required reads \cite{Pallister2018Optimal,Zhu2019Efficient,Zhu-general}
\begin{equation}
N=\left \lceil \frac{\ln \delta}{\ln [1-\nu(\Omega)\varepsilon]} \right \rceil 
\approx \frac{\ln \delta^{-1}}{\nu(\Omega)\varepsilon}.
\end{equation}
Not surprisingly, a larger spectral gap means a higher efficiency.

\subsection{\label{sec:QGV}Quantum gate verification}

Consider a quantum  device that is expected to perform the unitary transformation $\mathcal{U}$ associated with the unitary operator $U$ on $\caH$, but  actually realizes an unknown quantum process $\Lambda$. In order to verify whether this quantum process is sufficiently  close to the target unitary transformation, we need to construct a set $\scrT=\{|\psi_j\>\}_j$ of test states. In each run we randomly prepare a test state from the set $\scrT$ and apply the quantum process $\Lambda$. Then we verify whether the output state $\Lambda(\rho_j)$  is sufficiently close  to the target output state $\mathcal{U}(\rho_j)=U\rho_j U^\dag$ by virtue of QSV as described in \sref{sec:QSV}, where $\rho_j=|\psi_j\>\<\psi_j|$ \cite{ZhuZ20,Liu2020Efficient}. By construction, the target unitary transformation can always pass each test.

Suppose the test state $|\psi_j\>$ is chosen with probability $p_j>0$; denote the verification operator for the output state $\mathcal{U}(\rho_j)$ by $\Omega_j$. Then the average probability that the process $\Lambda$ can pass each test reads \cite{ZhuZ20}
\begin{equation}
\sum_{j} p_j  \tr [\Omega_j \Lambda(\rho_j)]. \label{pass_test}
\end{equation}
The target unitary transformation $\caU$ can be verified reliably if only $\caU$ can pass each test with certainty. To clarify this condition, we need to introduce additional terminology.  Let $\nu_j$ be the spectral gap of $\Omega_j$. The test state $|\psi_j\>$ is effective if  $\nu_j>0$; the set of effective test states is denoted by $\scrT_{\eff}$. The verification protocol is \emph{ordinary} if $\nu_j>0$ for each $j$, in which case every test state is effective, so that $\scrT_{\eff}=\scrT$. Otherwise, the verification protocol is \emph{extraordinary}.

A set $\scrT=\{|\psi_j\> \}_j$ in $\caH$ can \emph{identify} the unitary transformation $\caU$ if the condition 
\begin{align}\label{eq:LambdaU}
\Lambda(|\psi_j\>\<\psi_j|)=\caU(|\psi_j\>\<\psi_j|),\quad \forall j
\end{align}
implies that $\Lambda=\caU$, that is, 
\begin{align}
\Lambda(\rho)=\caU(\rho), \quad \forall \rho \in \scrD(\caH),
\end{align}
where $\scrD(\caH)$ denotes the set of all density operators on the Hilbert space $\caH$.
In this case, the set $\scrT$ is referred to as an \emph{identification set} (IS). It turns out the set $\scrT$ can identify $\caU$ iff it can identify any other unitary transformation on $\caH$ \cite{Mayer2018Quantum}, so it is not necessary to refer to a specific unitary transformation. The significance of ISs to QGV is manifested in the following lemma. Further discussions on ISs will be presented in \sref{sec:MIS}.
\begin{lemma}\label{lem:QGVreliable}
	If the unitary transformation $\caU$ can be verified reliably by a protocol based on the set $\scrT=\{|\psi_j\>\}_j$ of test states, then $\scrT$ is an IS. If the set $\scrT_{\eff}$ of effective test states is an IS, then  the unitary transformation $\caU$ can be verified reliably. If the verification protocol is ordinary, then $\caU$ can be verified reliably iff $\scrT$ is an IS. 
\end{lemma}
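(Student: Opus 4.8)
The plan is to reduce all three claims to a single spectral fact about the verification operators $\Omega_j$ that is already implicit in the QSV discussion of \sref{sec:QSV}. Each $\Omega_j$ is a verification operator for the \emph{pure} target output state $\caU(\rho_j)=U\rho_j U^\dag$, so it has maximal eigenvalue $1$ with $\caU(\rho_j)$ as a corresponding eigenstate. The key step I would establish first is the following equivalence for an \emph{effective} test state (one with $\nu_j>0$): a process $\Lambda$ satisfies $\tr[\Omega_j\Lambda(\rho_j)]=1$ if and only if $\Lambda(\rho_j)=\caU(\rho_j)$. The forward direction is the crux. When $\nu_j>0$ the second largest eigenvalue $\beta(\Omega_j)$ is strictly less than $1$, so the eigenspace of $\Omega_j$ with eigenvalue $1$ is exactly the one-dimensional span of $\caU(\rho_j)$; hence $\tr[\Omega_j\sigma]=1$ for a density operator $\sigma$ forces $\sigma$ to be supported on that line, i.e.\ $\sigma=\caU(\rho_j)$. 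The converse is immediate from $\Omega_j\caU(\rho_j)=\caU(\rho_j)$.

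I would then record the elementary bookkeeping that ``passing each test with certainty'' is equivalent to $\tr[\Omega_j\Lambda(\rho_j)]=1$ for every $j$. Since all $p_j>0$ and each summand in the average passing probability \eref{pass_test} is bounded above by $1$, the average equals $1$ precisely when every individual term does. This lets me analyze the tests one at a time and bridge between the operational notion of reliable verification and the algebraic conditions $\Lambda(\rho_j)=\caU(\rho_j)$.

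With these two tools the three statements follow quickly. For the first, I argue by contrapositive: if $\scrT$ is not an IS, there is a process $\Lambda\neq\caU$ with $\Lambda(\rho_j)=\caU(\rho_j)$ for all $j$; by the converse direction of the key step such $\Lambda$ passes every test with certainty, so $\caU$ is not verified reliably. For the second, suppose $\scrT_{\eff}$ is an IS and let $\Lambda$ pass every test with certainty; the forward direction of the key step shows $\Lambda$ agrees with $\caU$ on every effective test state, and the IS property of $\scrT_{\eff}$ then forces $\Lambda=\caU$, so only $\caU$ passes and $\caU$ is verified reliably. The third statement is the specialization $\scrT_{\eff}=\scrT$: combining the first (necessity) with the second (sufficiency) yields the stated equivalence.

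The only genuinely delicate point is the nondegeneracy argument in the first step, together with the correct handling of the ineffective test states. When $\nu_j=0$ the eigenspace of $\Omega_j$ with eigenvalue $1$ has dimension greater than one, so passing the $j$th test with certainty no longer pins down $\Lambda(\rho_j)=\caU(\rho_j)$; this is exactly why the sufficient condition in the second statement must be phrased in terms of $\scrT_{\eff}$ rather than $\scrT$, whereas the necessary condition in the first statement involves all of $\scrT$. I would make this asymmetry explicit so that the restriction to effective test states in the middle claim, and the extra ``ordinary'' hypothesis needed to recover a clean equivalence in the last claim, are clearly motivated.
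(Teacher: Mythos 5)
Your proposal is correct and follows essentially the same route as the paper's proof: the necessity claim comes from the fact that any $\Lambda$ agreeing with $\caU$ on all test states passes every test with certainty, the sufficiency claim uses that $\tr[\Omega_j\Lambda(\rho_j)]=1$ forces $\Lambda(\rho_j)=\caU(\rho_j)$ precisely when $\nu_j>0$, and the ordinary case is the specialization $\scrT_{\eff}=\scrT$. The only difference is cosmetic — you prove the first claim by contrapositive and spell out the one-dimensionality of the eigenvalue-$1$ eigenspace of $\Omega_j$, which the paper leaves implicit.
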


\begin{proof}
	By construction, $\caU$  can pass each test with certainty, so any quantum process $\Lambda$ that satisfies the condition in \eref{eq:LambdaU} can also pass each test with certainty. Suppose $\caU$ can be verified reliably. Then only $\caU$ can pass each test with certainty, which implies that $\Lambda=\caU$ when \eref{eq:LambdaU} holds.  Therefore, $\scrT$ is an IS.

	Conversely, if a quantum process $\Lambda$ can pass each test with certainty, then  we have $\tr [\Omega_j \Lambda(|\psi_j\>\<\psi_j|)]=1$ for each $|\psi_j\>\in \scrT$, which implies that 
	\begin{align}\label{eq:LambdaUeff}
	\Lambda(|\psi_j\>\<\psi_j|)=\caU(|\psi_j\>\<\psi_j|),\quad \forall |\psi_j\>\in \scrT_{\eff},
	\end{align}
	given that $|\psi_j\>\in \scrT_{\eff}$ iff $\nu_j>0$.  Now suppose the set $\scrT_{\eff}$ of effective test states is an IS, then \eref{eq:LambdaUeff} implies that $\Lambda=\caU$. Therefore, only the target unitary transformation $\caU$ can pass each test with certainty, which means $\caU$ can be verified reliably. 
	
If the verification protocol is ordinary, then 	$\scrT_{\eff}=\scrT$, so the last statement in \lref{lem:QGVreliable} follows from the first two statements. 
\end{proof}

The sample complexity of QGV has been analyzed in \rscite{ZhuZ20,Liu2020Efficient,Zeng2019Quantum} based on the idea of channel-state duality, but the details are not necessary to the current study. It turns out the verification of the unitary transformation $\caU$ is closely tied to the verification of its Choi state, especially when the verification protocol is balanced, which means $\sum_j p_j\rho_j=I/d$ \cite{ZhuZ20}. However,  verification protocols with minimal settings are in general not balanced as we shall see later. This observation shows that some important features in QGV do not have natural analogs in QSV and deserve further studies.

\section{\label{sec:BipartitePureState}Verification of  bipartite pure states with minimal settings}

Given a bipartite or multipartite pure state $|\Psi\rangle$, how many measurement settings are necessary to verify $|\Psi\rangle$ reliably? This problem is trivial if we can perform arbitrary entangling measurements, in which case one setting is enough. Unfortunately, it is not easy to realize entangling measurements  in practice, so  here we focus on verification protocols based on nonadaptive local projective measurements, which are amenable to experimental realization. This is a fundamental problem in the study of QSV that is of practical interest. However, it is in general very difficult to solve such an optimization problem if not impossible given that the potential choices of measurement settings are countless.
Even in the bipartite case, this problem has not been solved in the literature, although it is known that any bipartite pure state can be verified by two distinct tests based on adaptive local projective measurements \cite{LiHZ19}. Note that one test based on adaptive local projective measurements may entail many different measurement settings, so the result presented in \rcite{LiHZ19} does not resolve the current  problem under consideration.

Here we show that any bipartite pure state can be verified by at most two measurement settings, thereby resolving the minimal-setting problem in the bipartite scenario completely. 
\begin{theorem}\label{theorem:bipartite fewest settings}
	Every bipartite pure product state can be verified by one measurement setting. 	Every bipartite pure entangled state can be verified by two  measurement settings, but not one  measurement setting. 
\end{theorem}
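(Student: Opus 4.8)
The plan is to separate the three assertions and reduce each to the spectral-gap criterion of \sref{sec:QSV}: a verification operator built from measurement settings verifies $\ket{\Psi}$ reliably iff its largest eigenvalue $1$ is nondegenerate with $\ket{\Psi}$ the unique top eigenvector, i.e.\ $\beta(\Omega)<1$. The structural fact underlying everything is that a single nonadaptive local projective setting, with bases $\{\ket{\alpha_k}\}$ on $A$ and $\{\ket{\beta_l}\}$ on $B$ and possibly randomized acceptance, always produces a test operator $E$ diagonal in the product basis $\{\ket{\alpha_k}\otimes\ket{\beta_l}\}$; hence its $+1$ eigenspace is spanned by a subset of those product vectors.

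For a product state $\ket{\Psi}=\ket{a}\otimes\ket{b}$ I would measure $A$ in a basis containing $\ket{a}$ and $B$ in one containing $\ket{b}$, accepting only the outcome $(a,b)$; this realizes $\Omega=\outer{\Psi}{\Psi}$ with one setting and gap $1$. For the impossibility half of the entangled case I invoke the structural fact: the condition $E\ket{\Psi}=\ket{\Psi}$ forces the acceptance probability to equal $1$ on every product vector in the support of $\ket{\Psi}$, and an entangled state has support of size at least two in every product basis, so the $+1$ eigenspace of $\Omega=E$ is at least two-dimensional. Thus $\beta(\Omega)=1$, the gap vanishes, and one setting can never verify an entangled state.

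The substantive part is that two settings suffice for any entangled $\ket{\Psi}=\sum_i\sqrt{\lambda_i}\ket{ii}$ (Schmidt form, $\lambda_i>0$, Schmidt rank $r\ge2$). I take setting~$1$ to be the Schmidt basis on both sides accepting matched outcomes, so that $E_1=\sum_i\outer{ii}{ii}$ is a projector with $+1$ eigenspace equal to the diagonal subspace $\caD=\spa\{\ket{ii}\}$. I take setting~$2$ to be a rotated product measurement that rejects exactly the outcomes orthogonal to $\ket{\Psi}$; then $E_2$ is a projector with $E_2\ket{\Psi}=\ket{\Psi}$ whose $+1$ eigenspace $V_2$ is the orthocomplement of the rejected outcomes. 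Since $E_1,E_2$ are projectors, the top eigenspace of $\Omega=\tfrac12(E_1+E_2)$ is exactly $\caD\cap V_2$, so it remains only to choose setting~$2$ so that $\caD\cap V_2=\spa\{\ket{\Psi}\}$, which makes the top eigenvalue nondegenerate and the gap positive.

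Everything thus reduces to a linear-algebra construction, which I expect to be the main obstacle. Realizing the second local bases through real orthogonal matrices $O,O'$ applied to the Schmidt bases, the functional that outcome $(k,l)$ induces on $\caD$ is $c\mapsto(O^{\mathsf{T}}\diag(c)O')_{kl}$, and the rejected outcomes are precisely those with $(O^{\mathsf{T}}\Sigma O')_{kl}=0$, where $\Sigma=\diag(\sqrt{\lambda_i})$. The requirement $\caD\cap V_2=\spa\{\ket{\Psi}\}$ then becomes: the only $c$ up to scale for which $O^{\mathsf{T}}\diag(c)O'$ vanishes at every zero of $M:=O^{\mathsf{T}}\Sigma O'$ is $c\propto(\sqrt{\lambda_i})$. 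I would choose $O,O'$ so that $M$ is bidiagonal with the prescribed singular values $\sqrt{\lambda_i}$ (a standard bidiagonal reduction of the SVD), which supplies $(r-1)^2$ zero positions. The delicate point, and the crux of the theorem, is to check that this forces the kernel to be one-dimensional so that no spurious vector survives; in particular one must avoid the degeneracy that arises when $O=O'$, where $\diag(\mathbf{1})\mapsto I$ is always ``diagonal'' and the uniform vector $\mathbf{1}$ (i.e.\ the unnormalized maximally entangled state) contaminates the kernel. Taking genuinely distinct $O\neq O'$ removes this, and a dimension count within the bidiagonal family, together with a direct treatment of coincident Schmidt coefficients, should pin the kernel down to $\spa\{(\sqrt{\lambda_i})\}$.
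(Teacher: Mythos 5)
Your handling of the product-state case and of the one-setting impossibility for entangled states is correct and essentially the paper's argument: a single nonadaptive local projective setting yields a test operator diagonal in a product basis, $E\ket{\Psi}=\ket{\Psi}$ forces acceptance probability $1$ on every product vector supporting $\ket{\Psi}$, and an entangled state has at least two such vectors, so the pass eigenspace has dimension at least two and the gap vanishes. The two-setting construction is where you diverge from the paper, and it is where the genuine gap lies. You want to realize the second setting as a full product-basis measurement $\{O\ket{k}\otimes O'\ket{l}\}$ rejecting the outcomes orthogonal to $\ket{\Psi}$, with $M=O^{\mathrm{T}}\Sigma O'$ an unreduced bidiagonal matrix. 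But such $O,O'$ do not exist whenever the Schmidt spectrum is degenerate: if $B$ is an invertible upper bidiagonal matrix with all superdiagonal entries nonzero, then $BB^{\mathrm{T}}$ is an unreduced symmetric tridiagonal (Jacobi) matrix and hence has simple spectrum, so $B$ necessarily has \emph{distinct} singular values. In particular, for the maximally entangled state every bidiagonal matrix with the prescribed singular values is forced to be diagonal, the zero pattern degenerates, and your kernel becomes all of $\spa\{\ket{jj}\}$ rather than $\spa\{\ket{\Psi}\}$; for partially degenerate spectra the matrix decouples into blocks and the kernel acquires one dimension per block. The clause ``a direct treatment of coincident Schmidt coefficients'' therefore conceals not a routine check but a separate construction for an important family of states. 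Even in the nondegenerate case the crux — that the kernel is one-dimensional — is asserted by a dimension count rather than proved (it can be proved: if $O^{\mathrm{T}}DO'$ is supported on the bidiagonal pattern, then $B^{-1}(O^{\mathrm{T}}DO')$ is upper triangular and symmetric, hence diagonal, and the nonvanishing superdiagonal of $B$ forces it to be scalar; but this argument is absent from your writeup).

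The paper sidesteps all of this by not using a full product basis for the second setting: it takes $P_2=I-\outer{u}{u}\otimes(I-\outer{v}{v})$ with $\ket{u}=r^{-1/2}\sum_j\ket{j}$ and $\ket{v}=\sum_j\lambda_j\ket{j}$, realized by a two-outcome projective measurement on each side. A diagonal vector $\sum_j c_j\ket{jj}$ then lies in the range of both $P_1$ and $P_2$ iff $\sum_j c_j\ket{j}\propto\ket{v}$, which singles out $\ket{\Psi}$ for every entangled state regardless of spectral degeneracy, and the gap is obtained in closed form as $\bigl(1-\sqrt{(r-1)/r}\bigr)/2$. To repair your proof you should either adopt a rank-one ``trap'' of this kind or replace the bidiagonal pattern by one that demonstrably exists and has a one-dimensional kernel for repeated Schmidt coefficients.
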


\begin{proof}
Suppose the  bipartite system is associated with the bipartite Hilbert space $\caH_\rmA\otimes \caH_\rmB$ of dimension $d_\rmA\otimes d_\rmB$.
In the Schmidt basis, any bipartite pure state in  $\caH_\rmA\otimes \caH_\rmB$ can be written as 
\begin{equation}
|\Psi\>=\sum_{j=0}^{r-1} \lambda_j |jj\>,
\end{equation}	
where $r=\min\{d_\rmA,d_\rmB\}$, and $\lambda_j$ are the Schmidt coefficients of $|\Psi\>$ arranged in nonincreasing order. 

If $|\Psi\rangle$ is a product state, then $\lambda_j=\delta_{j0}$ and $|\Psi\rangle =|00\rangle$. In this case $|\Psi\rangle$ can be verified by a verification protocol composed of the single test  $P_0=|\Psi\>\<\Psi|=|00\> \<00|$. In addition, $P_0$ can be realized by one measurement setting, that is, the projective measurement onto the Schmidt basis.

If $|\Psi\>$ is entangled, then it cannot be verified by one measurement setting based on a nonadaptive local projective measurement because the pass eigenspace of any such verification operator has dimension at least 2, which means the spectral gap is zero. To prove \thref{theorem:bipartite fewest settings}, it remains to show that $|\Psi\>$ can be verified by two measurement settings. Let 
\begin{align}
P_1:=&\sum_{j=0}^{r-1} |jj\>\<jj| , \\
P_2:=&I-|u\>\<u| \otimes I + |u\>\<u| \otimes |v\>\<v|,
\end{align}
where
\begin{gather}
|u\>: = \frac{1}{\sqrt{r}} \sum_{j=0}^{r-1} |j\>, \\
|v\> := \lambda_0 |0\> + \lambda_1 |1\> +\dots + \lambda_{r-1} |r-1\>.
\end{gather}
Then $P_1$ and $P_2$ are two test projectors for $|\Psi\rangle$ that can be realized by nonadaptive local projective measurements. 
To realize $P_1$, both Alice and Bob perform projective measurements on the Schmidt basis, and the test is passed if they obtain the same outcome $j$ for $j=0,1,2,\ldots, r-1$. 
To realize $P_2$, Alice performs the two-outcome projective measurement $\{|u\>\<u|, I-|u\>\<u|\}$ and Bob performs the two-outcome projective measurement $\{|v\>\<v|, I-|v\>\<v|\}$; the test  is passed except when  Alice obtains the first outcome, while Bob obtains the second outcome.

Now we can construct a simple verification protocol for $|\Psi\>$ by performing the two tests $P_1$ and $P_2$ with probability $1/2$ each. The resulting verification operator reads $\Omega=(P_1+P_2)/2$. According  to Lemma~1 in \rcite{LiHSS21}, the spectral gap of $\Omega$ is given by 
$\nu(\Omega) = (1-\sqrt{q})/2>0 $ with
\begin{equation}
q=\|\bar{P}_1\bar{P}_2\bar{P}_1\|=\Bigl\|\frac{r-1}{r}\bar{P}_1\Bigr\|=\frac{r-1}{r},
\end{equation}
where $\bar{P}_j = P_j - |\Psi\>\<\Psi|$ for $j=1,2$.
Therefore,  $|\Psi\>$ can be verified by the strategy $\Omega$, which can be realized by two measurement settings based on nonadaptive local projective measurements. 
\end{proof}

\section{\label{sec:UMinS}Verification of unitary transformations with minimal settings}

In this section we explore verification protocols of unitary transformations with minimal settings. In addition we  introduce the concept of entanglement-free verification and clarify its connection with minimal-setting verification. Verification of bipartite unitaries is then discussed in more detail.

\subsection{\label{sec:MIS}Minimal identification sets}

Recall that a set of pure states $\scrT=\{|\psi_j\> \}_j$ in $\caH$ is an IS if it can identify unitary transformations on $\caH$  (cf. \sref{sec:QGV}) \cite{Mayer2018Quantum}. Here we are particularly interested in ISs with as few elements as possible. 
 The set $\scrT$ is a \emph{minimal identification set} (MIS) if, in addition, any proper subset is not an IS. MISs are crucial to constructing verification protocols for unitary transformations with minimal settings.

To understand the properties of ISs and MISs, we need to introduce several additional concepts. 
A set  of pure states $\scrT=\{|\psi_j\> \}_j$ in $\caH$ is a spanning set if it spans $\caH$; it is a basis if it is a spanning set that is also linearly independent. 
The \emph{transition graph} of the set $\scrT$ is a graph whose vertices are in one-to-one correspondence with the states $|\psi_j\>$; two vertices $j,k$ are adjacent if $\<\psi_j|\psi_k\>\neq 0$. The set $\scrT$ is connected if its transition graph is connected; note that here the definition is different from the usual definition in topology.
The set is a \emph{connected spanning  set} if it is a spanning set  that is connected; the set $\scrT$ is a connected linearly independent set (CLIS) if it  is a linearly independent set that is connected. A connected basis is a CLIS that is also a connected spanning set. 
By definition a CLIS can contain at most $d$ states, where $d$ is the dimension of $\caH$. Suppose the set $\scrT$  is nonempty; then  a  CLIS contained in $\scrT$ is maximal if it is not contained in any other CLIS contained in $\scrT$. Note that each state in $\scrT$ is contained in at least one maximal CLIS. In particular, $\scrT$ contains at least one maximal CLIS as a subset.

The following result proved in \rcite{Mayer2018Quantum} clarifies the conditions under which a set of  pure states can identify  unitary transformations on $\caH$. 
\begin{lemma}\label{lem:IS}
A set of pure states  in $\mathcal{H}$ is an IS iff it is a connected spanning set. 
\end{lemma}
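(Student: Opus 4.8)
The plan is to recast identification as a fixed-point problem for channels. Given any process $\Lambda$ satisfying the agreement condition $\Lambda(|\psi_j\>\<\psi_j|)=\caU(|\psi_j\>\<\psi_j|)$ for all $j$, I would set $\caM:=\caU^{-1}\circ\Lambda$; this is again a quantum channel, and it now \emph{fixes} every test state, $\caM(|\psi_j\>\<\psi_j|)=|\psi_j\>\<\psi_j|$. Identifying $\caU$ is then equivalent to proving $\caM=\mathrm{id}$. The elementary input is that a channel can map a pure state to itself only if each of its Kraus operators acts as a scalar on that state: writing $\caM(\rho)=\sum_a M_a\rho M_a^\dagger$, purity of the fixed output forces $M_a|\psi_j\>=c_{a,j}|\psi_j\>$ for scalars $c_{a,j}$, and trace preservation gives $\sum_a M_a^\dagger M_a=I$, hence $\sum_a|c_{a,j}|^2=1$ for each $j$.

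For the direction ``connected spanning set $\Rightarrow$ IS,'' I would propagate the eigenrelation along edges of the transition graph. For adjacent $j,k$ (so $\<\psi_j|\psi_k\>\neq0$), inserting $I=\sum_a M_a^\dagger M_a$ gives
\begin{equation}
\<\psi_j|\psi_k\>=\Bigl(\textstyle\sum_a \bar c_{a,j}\,c_{a,k}\Bigr)\<\psi_j|\psi_k\>,
\end{equation}
so $\sum_a \bar c_{a,j}c_{a,k}=1$. Together with $\sum_a|c_{a,j}|^2=\sum_a|c_{a,k}|^2=1$, the Cauchy--Schwarz inequality is saturated, which forces the coefficient vectors to coincide, $c_{a,j}=c_{a,k}$ for all $a$. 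Connectedness then makes $c_{a,j}=c_a$ independent of $j$, and since $\scrT$ spans $\caH$ we get $M_a=c_a I$, whence $\caM=\mathrm{id}$ and $\Lambda=\caU$. I expect this Cauchy--Schwarz saturation step to be the main obstacle: for a \emph{unitary} witness $W=V^\dagger U$ one can equate eigenvalues across an edge directly, because $W$ is normal and $\<\psi_j|W=c_j\<\psi_j|$; but for a genuine channel the $M_a$ need not be normal, so this shortcut fails and the normalization-plus-equality argument above is what salvages the graph propagation.

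For the converse I would argue by contraposition, exhibiting a nonscalar unitary $W$ (hence $\caW\neq\mathrm{id}$) that fixes every test state, so that $\Lambda=\caU\circ\caW\neq\caU$ agrees with $\caU$ on $\scrT$. If $\scrT$ fails to span, I take $W=I$ on $\spa\scrT$ and $W=e^{i\theta}I$ (with $\theta\not\equiv0$) on $(\spa\scrT)^{\perp}$; this $W$ fixes each $|\psi_j\>$ but has two distinct eigenvalues. If $\scrT$ spans but is disconnected, then states in distinct components of the transition graph are orthogonal, so the component-spans are mutually orthogonal and, by spanning, decompose $\caH$; assigning distinct phases $e^{i\theta_m}$ blockwise yields a nonscalar $W$ fixing every test state. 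In either case $\scrT$ is not an IS, completing the equivalence.
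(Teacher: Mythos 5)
Your proof is correct. Note that the paper does not prove this lemma itself but imports it from Ref.~\cite{Mayer2018Quantum}; your argument is essentially the standard one used there: reduce to showing that the only channel fixing every $|\psi_j\>\<\psi_j|$ is the identity, observe that purity of the fixed output forces each Kraus operator to act as a scalar $c_{a,j}$ on $|\psi_j\>$, and propagate $c_{a,j}=c_{a,k}$ along edges of the transition graph via the normalization identity $\sum_a \bar c_{a,j}c_{a,k}=1$ (indeed $\|u-v\|^2=2-2\operatorname{Re}\<u,v\>=0$ gives equality of the coefficient vectors directly, without even invoking Cauchy--Schwarz saturation). The converse construction of a block-phase unitary witnessing failure of spanning or connectedness is likewise the standard one and is sound.
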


By \lref{lem:IS}, at least $d$ test states are required to identify unitaries on $\caH$.  
To saturate the lower bound $d$, the test states must form a connected  basis. 
\begin{lemma}\label{lemma:MIS}
A set  of pure states in $\mathcal{H}$ is a MIS iff it is a connected basis. 
\end{lemma}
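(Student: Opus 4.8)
The plan is to prove \lref{lemma:MIS} by characterizing exactly when a connected spanning set (an IS, by \lref{lem:IS}) fails to remain an IS upon deletion of any single element. Since a MIS is an IS for which no proper subset is an IS, and since removing elements can only break the spanning or connectivity conditions, the claim should reduce to showing that minimality forces the set to be simultaneously linearly independent and connected-spanning, i.e.\ a connected basis.

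First I would establish the easy direction: a connected basis is a MIS. A connected basis is by definition a connected spanning set, hence an IS by \lref{lem:IS}. For minimality, I would argue that deleting any state destroys the spanning property. Since a basis has exactly $d$ linearly independent elements spanning a $d$-dimensional space, removing any one leaves only $d-1$ states, which cannot span $\caH$; the remaining set therefore fails to be a spanning set and, by \lref{lem:IS}, is not an IS. Hence no proper subset is an IS, so a connected basis is a MIS.

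For the converse, I would assume $\scrT$ is a MIS and show it must be a connected basis. By definition a MIS is an IS, so by \lref{lem:IS} it is already a connected spanning set; it remains only to show linear independence. Suppose toward a contradiction that $\scrT$ is linearly dependent, so it contains strictly more than $d$ states or at least some state lying in the span of the others. The natural move is to exhibit a proper subset that is still a connected spanning set, contradicting minimality. Concretely, I would try to delete a carefully chosen redundant state so that the remaining set still spans $\caH$ and stays connected. Spanning is preserved by removing any state that is a linear combination of the others, so the spanning condition is not the obstacle; the delicate point is connectivity.

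The main obstacle is therefore verifying that one can always remove a redundant (linearly dependent) state while keeping the transition graph connected. A redundant state need not be a leaf of the transition graph, so its deletion could in principle disconnect the graph. I expect the resolution to rest on a counting or graph-theoretic argument: a connected graph on $n$ vertices has at least $n-1$ edges, and I would relate the structure of the transition graph to the linear (in)dependence of the states. The cleanest route is probably to choose the vertex to delete among those that are not cut vertices of the transition graph — every connected graph with at least two vertices has at least two non-cut vertices (its block-tree leaves) — and then argue that among a set of states that is connected and properly contains a spanning subset, some non-cut vertex corresponds to a linearly redundant state, so its removal preserves both connectivity and spanning. This yields a smaller connected spanning set, i.e.\ a smaller IS, contradicting the minimality of $\scrT$ and forcing $\scrT$ to be linearly independent, hence a connected basis.
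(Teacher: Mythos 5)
Your forward direction (connected basis $\Rightarrow$ MIS) is complete and matches the paper: deleting any element of a basis destroys the spanning property, so by \lref{lem:IS} no proper subset is an IS. The converse is where the issue lies. You correctly reduce it to the claim that a linearly dependent connected spanning set admits a proper subset that is still a connected spanning set, and you correctly identify connectivity as the obstacle. But the step that carries all the weight --- ``some non-cut vertex of the transition graph corresponds to a linearly redundant state'' --- is asserted, not proved. The facts you invoke (a connected graph on $n$ vertices has at least $n-1$ edges; every connected graph with at least two vertices has at least two non-cut vertices) only tell you that the set of non-cut vertices and the set of redundant states each have at least two elements; nothing forces these two sets to intersect. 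Proving that they do intersect is essentially the entire content of the lemma, so as written the argument is circular at its key point.

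The claim is in fact true, and the paper's route supplies the missing proof in a way that bypasses cut-vertex analysis entirely. Take a \emph{maximal} connected linearly independent subset $\scrS\subseteq\scrT$ (\lref{lem:CLISmax}): if $\scrS$ did not span $\caH$, then, writing $\caH_1=\spa(\scrS)$ and $\caH_2=\caH_1^\perp$, connectivity and spanning of $\scrT$ force some $|\psi\>\in\scrT$ supported in neither $\caH_1$ nor $\caH_2$; such a $|\psi\>$ is linearly independent of $\scrS$ yet non-orthogonal to it, so $\scrS\cup\{|\psi\>\}$ is a larger connected linearly independent set, contradicting maximality. Hence $\scrS$ is a connected basis, and the second half of \lref{lem:SpanBasis} (any superset of a connected spanning set is connected spanning, because every nonzero vector is non-orthogonal to some element of a spanning set) shows that $\scrT\setminus\{|\psi\>\}$ remains a connected spanning set for any $|\psi\>\in\scrT\setminus\scrS$ --- which, incidentally, also proves your intersection claim, since every such $|\psi\>$ is simultaneously redundant and non-cut. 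If $\scrT$ is linearly dependent then $|\scrT|>d=|\scrS|$, so such a $|\psi\>$ exists and $\scrT$ is not minimal. To repair your write-up, either adopt this maximal-CLIS construction or supply an independent proof of your intersection claim; the counting and block-tree observations alone do not close the gap.
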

\Lref{lemma:MIS} clarifies the properties of MISs; it is a simple corollary of \lref{lem:IS} above and \lsref{lem:CLISmax} and \ref{lem:SpanBasis} below, which are proved in 
Appendix~\ref{app:lem:SpanBasis}. 

\begin{lemma}\label{lem:CLISmax}
	Suppose $\scrT$ is a connected spanning set in $\caH$. Then any maximal CLIS contained in $\scrT$ is a connected basis. 
\end{lemma}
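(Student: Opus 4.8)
The plan is to argue by contradiction, exploiting that a maximal CLIS is already connected and linearly independent, so the only thing left to establish is that it spans $\caH$. Write $\caS$ for a maximal CLIS contained in $\scrT$ and set $V:=\spa(\caS)$. Since $\caS$ is linearly independent it is a basis of $V$, and it is connected by hypothesis; hence if I can show $V=\caH$, then $\caS$ is simultaneously a basis and a connected spanning set, i.e.\ a connected basis, which is exactly the claim. So I would assume for contradiction that $V\subsetneq\caH$ and aim to enlarge $\caS$ to a strictly larger CLIS inside $\scrT$, contradicting maximality.

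First I would locate an edge of the transition graph that crosses the boundary of $V$. Because $\scrT$ spans $\caH$ while $V$ is a proper subspace, some state of $\scrT$ must lie outside $V$, so $\scrT\setminus V\neq\emptyset$; on the other hand $\caS\subseteq\scrT\cap V$, so $\scrT\cap V\neq\emptyset$. This splits the vertex set $\scrT$ into two nonempty parts, and since $\scrT$ is connected there must be an edge joining them: states $|\phi\>\in\scrT\cap V$ and $|\chi\>\in\scrT\setminus V$ with $\<\phi|\chi\>\neq0$.

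The main obstacle is that this crossing edge does not immediately supply a neighbor of $\caS$ that I may append. Its inside endpoint $|\phi\>$ lies in $V$ but need not belong to $\caS$, so I cannot add $|\phi\>$ (it is linearly dependent on $\caS$), and a priori $|\chi\>$ need not be adjacent to any element of $\caS$ itself. I would resolve this by expanding $|\phi\>$ in the basis $\caS$ of $V$, say $|\phi\>=\sum_{|\psi_j\>\in\caS}c_j|\psi_j\>$, and using linearity of the inner product: from $\<\chi|\phi\>=\sum_j c_j\<\chi|\psi_j\>\neq0$ at least one summand must be nonzero, so there is a state $|\psi_j\>\in\caS$ with $\<\chi|\psi_j\>\neq0$. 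Thus $|\chi\>$ is adjacent to $\caS$.

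Finally I would assemble the contradiction. Since $|\chi\>\notin V=\spa(\caS)$, the set $\caS\cup\{|\chi\>\}$ is linearly independent; since $|\chi\>$ is adjacent to the connected set $\caS$, the enlarged set is also connected; hence $\caS\cup\{|\chi\>\}$ is a CLIS contained in $\scrT$ that strictly contains $\caS$, contradicting the maximality of $\caS$. Therefore $V=\caH$ and $\caS$ is a connected basis. I expect the expansion-of-$|\phi\>$ step to be the only nonroutine point; everything else is bookkeeping about spanning, linear independence, and the elementary fact that a connected graph has an edge across any partition of its vertices into two nonempty sets.
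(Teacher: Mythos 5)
Your proof is correct and follows essentially the same route as the paper's: assume the maximal CLIS $\scrS$ does not span, use the connectivity of $\scrT$ to locate a state of $\scrT$ outside $\spa(\scrS)$ that is nonorthogonal to $\scrS$, append it, and contradict maximality. The only cosmetic difference is how adjacency to $\scrS$ is certified --- you find a crossing edge and expand its inside endpoint in the basis $\scrS$, whereas the paper picks a ket supported neither in $\spa(\scrS)$ nor in its orthogonal complement --- and your version makes explicit an adjacency step the paper leaves implicit.
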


\begin{lemma}\label{lem:SpanBasis}
	Every connected spanning set in $\caH$ contains a subset that forms a connected basis. Every set in $\caH$ that contains a connected spanning subset is a connected spanning set. 
\end{lemma}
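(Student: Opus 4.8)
The statement has two independent parts, and my plan is to treat them separately: the first by invoking \lref{lem:CLISmax}, and the second by a direct argument combining the spanning and connectivity hypotheses.

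For the first part I would argue as follows. Let $\scrT$ be a connected spanning set; since $\caH\neq\{0\}$ the set $\scrT$ is nonempty, so (as already noted just before the lemma) it contains at least one maximal CLIS, say $\scrS\subseteq\scrT$. By \lref{lem:CLISmax}, every maximal CLIS contained in a connected spanning set is a connected basis, so $\scrS$ is the desired subset. Thus this part is essentially a corollary of \lref{lem:CLISmax}, and the only thing to verify independently is the existence of a maximal CLIS, which is immediate: a single nonzero vector already forms a trivially connected CLIS, every CLIS contains at most $\dim\caH$ vectors, and so any CLIS can be enlarged only finitely many times before becoming maximal.

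For the second part, let $\scrT$ be a set containing a connected spanning subset $\scrS$. Spanning is immediate, since $\spa(\scrT)\supseteq\spa(\scrS)=\caH$. For connectedness I would show that every vertex of the transition graph of $\scrT$ attaches to $\scrS$. Take any $|\psi\rangle\in\scrT$; because $\scrS$ spans $\caH$ and $|\psi\rangle\neq0$, the vector $|\psi\rangle$ cannot be orthogonal to all of $\scrS$, for otherwise it would be orthogonal to $\spa(\scrS)=\caH$ and hence vanish. Therefore $\inner{\psi}{\phi}\neq0$ for some $|\phi\rangle\in\scrS$, i.e.\ $|\psi\rangle$ is adjacent to a vertex of $\scrS$. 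Since the induced subgraph on $\scrS$ is connected by hypothesis and every remaining vertex of $\scrT$ has an edge into $\scrS$, the full transition graph of $\scrT$ is connected, so $\scrT$ is a connected spanning set.

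I do not expect a genuine obstacle, because the substance of the result is packaged into \lref{lem:CLISmax}. The only place requiring care is the connectedness claim in the second part: it relies crucially on the \emph{spanning} hypothesis, through the observation that a nonzero vector cannot be orthogonal to a spanning set. If one wished to prove the first part without appealing to \lref{lem:CLISmax}, the real work would be to show that a maximal CLIS $\scrS$ necessarily spans $\caH$; the key idea I would use is that, by maximality, any vertex of $\scrT$ lying outside $W:=\spa(\scrS)$ must be non-adjacent to every element of $\scrS$, hence orthogonal to $\spa(\scrS)=W$ and so contained in $W^\perp$. Since $W\perp W^\perp$, there can be no edge between the $W$-part and the $W^\perp$-part of $\scrT$, contradicting connectedness unless $W=\caH$.
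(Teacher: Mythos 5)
Your proposal is correct and follows essentially the same route as the paper: the first part is obtained by extracting a maximal CLIS and invoking \lref{lem:CLISmax}, and the second part by noting that spanning is inherited and that every ket in the larger set must be non-orthogonal to some ket of the connected spanning subset, so the transition graph stays connected. The extra details you supply (existence of a maximal CLIS, and why non-orthogonality follows from the spanning hypothesis) are exactly the points the paper leaves implicit.
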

Suppose $\scrT$ is a connected spanning set that is composed of $k$ pure states. As an implication of  \lref{lem:SpanBasis}, $\scrT$ contains a connected spanning subset that is composed of $k'$ pure states as long as $d\leq k'\leq k$.
To illustrate the above results, here we present a connected spanning set $\scrS$  that is composed of the computational basis and one additional state \cite{Daniel2013Minimum}:
\begin{align}
\scrT=\{|j\>\}_{j=0}^{d-1} \cup \{|\varphi\>\},
\end{align} 
where
\begin{equation}\label{eq:totally rotated state}
|\varphi\> = \frac{1}{\sqrt{d}} \sum_{j=0}^{d-1}|j\>.
\end{equation}
A  connected basis contained in $\scrT$ can be constructed as follows,
\begin{align}\label{eq:MIS}
\scrS=\{|j\>\}_{j=1}^{d-1} \cup \{|\varphi\>\}. 
\end{align}
According to \lref{lemma:MIS}, $\scrS$ is also a MIS.

\subsection{Minimal-setting verification and Entanglement-free verification} \label{sec:MSV&EFV}

Let $U$ be a unitary operator on $\caH$ and $\caU$ the associated unitary transformation.   
Recall that a general verification protocol for $U$ (which means a verification protocol for $\caU$)  consists of a set of input test states and the verification protocol for the output state associated with each input state. For simplicity, here we assume that each test  state is a pure product state, and the verification protocol for each output state is based on nonadaptive local projective measurements. Such verification protocols are most amenable to experimental realization.

 We are particularly interested in the minimum number of experimental settings required to verify $U$ by ordinary verification protocols, which  is denoted by $\mu(U)$ henceforth. When extraordinary verification protocols are allowed, the minimum number is denoted by $\mu_\rme(U)$. To be specific, one experimental setting means the preparation of a pure product input state and a 
nonadaptive local projective measurement on the output state. Note that the number of experimental settings required by any verification protocol is at least the number of test states involved. In conjunction with \lsref{lem:QGVreliable} and \ref{lem:IS}, this observation implies that  
\begin{align}\label{eq:muULB}
\mu(U)\geq \mu_\rme(U)\geq d
\end{align}
for any unitary operator $U$ acting on a $d$-dimensional Hilbert space. For a simple noncomposite system, the two inequalities  can always be saturated, and the verification problem is trivial. In the rest of this paper we shall focus on   composite systems and consider only ordinary verification protocols, in which case it is in general highly nontrivial to determine $\mu(U)$. Although it is even more difficult to determine $\mu_\rme(U)$, our results on $\mu(U)$ provide valuable upper bounds for $\mu_\rme(U)$, which are nearly tight in the bipartite setting.

A verification protocol for $U$ is \emph{entanglement free} if all input test states and the corresponding output states (after the action of $U$) are product states; in addition, all measurements are based on local projective measurements. 
An entanglement-free protocol does not generate any entanglement  in the verification procedure and hence the name. Such verification protocols are particularly appealing to both theoretical study and experimental realization.
It turns out entanglement-free verification is intimately connected to  minimal-setting verification. To clarify this point, we need to introduce some additional terminology.

Denote by $\mathrm{Prod}$ the set of pure  product states; denote by $\mathrm{Prod}(U)$ the  set of product states that remain product states after the action of  $U$:
\begin{equation}\label{eq:ProdU}
\mathrm{Prod}(U) = \{ |\psi\> \in \mathrm{Prod}\ | \ U |\psi\> \in \mathrm{Prod} \}.
\end{equation}
 The dimension of the span of the set $\mathrm{Prod}(U) $ is denoted by $d_\mathrm{Prod}(U)$:
\begin{equation}
d_\mathrm{Prod}(U) = \mathrm{dim}\ \mathrm{span} (\mathrm{Prod}(U)),
\end{equation}
which satisfies $0 \le d_\mathrm{Prod}(U) \le d$. A state $|\psi\>$ in $\caH$ satisfies the \emph{product-state constraint} associated with $U$  if $|\psi\>\in \mathrm{Prod}(U)$. A set of states satisfies the product-state constraint if it is contained in $\mathrm{Prod}(U)$, so that each state satisfies the constraint.

An entanglement-free IS (EFIS) $\scrT$ for $U$ is an IS that satisfies the product-state constraint, which implies that $\scrT\subseteq \mathrm{Prod}(U)$. 
Similarly, an entanglement-free  MIS (EFMIS)
is a MIS that satisfies the product-state constraint.  Note that the definition of an EFIS (EFMIS) depends on the specific unitary transformation under consideration, although the definition of an IS (MIS) is independent of a specific unitary transformation. The unitary operator $U$ can be verified by an entanglement-free protocol iff it admits an EFMIS, in which case $\mathrm{Prod}(U)$ contains an IS. \Lref{lem:TestStateProdU} and \thref{thm:minS-entFree} below
further clarify the connections among the product-state constraint as determined by $\mathrm{Prod}(U)$, minimal-setting verification, and entanglement-free verification. 
The proof of \Lref{lem:TestStateProdU} is presented in \aref{app:lem:TestStateProdU}.

\begin{lemma}\label{lem:TestStateProdU}
Suppose  $U$ is a unitary operator acting on a composite Hilbert space $\caH$ of dimension $d$.
Suppose $\scrT$ is the set of test states of an entanglement-free verification protocol for $U$ or an ordinary  verification protocol composed of $d$ experimental settings based on local operations. Then  $\scrT\subseteq \mathrm{Prod}(U)$. 
\end{lemma}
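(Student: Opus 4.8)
The plan is to treat the two hypotheses separately, because the entanglement-free case is essentially definitional whereas the minimal-setting case carries all the work. First suppose $\scrT$ is the test-state set of an entanglement-free protocol. By the very definition of entanglement-free verification, every input $|\psi_j\>\in\scrT$ is a product state and the corresponding output $U|\psi_j\>$ is again a product state. These are exactly the two conditions appearing in the definition of $\mathrm{Prod}(U)$ in \eref{eq:ProdU}, so $|\psi_j\>\in\mathrm{Prod}(U)$ for each $j$ and hence $\scrT\subseteq\mathrm{Prod}(U)$ with no further argument.

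The substantive case is an ordinary protocol using exactly $d$ experimental settings, and I would begin by pinning down its combinatorial structure. Each setting consumes one product input state together with a single nonadaptive local projective measurement on the output, so the number of distinct test states is at most $d$. On the other hand the set $\scrT$ must be an IS (by \lref{lem:QGVreliable}, since the protocol is ordinary and verifies $U$ reliably), and an IS is a connected spanning set by \lref{lem:IS}, hence contains at least $d$ states; this is the lower bound already recorded in \eref{eq:muULB}. Therefore $|\scrT|=d$ exactly, the $d$ settings are in bijection with the $d$ test states, and the output $U|\psi_j\>$ of each test state is verified by a \emph{single} local projective measurement. In particular the verification operator $\Omega_j$ for that output equals the projector $E_j$ onto the passing subspace of that one measurement.

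The crux, and the step I expect to be the main obstacle, is then to show that each output $U|\psi_j\>$ must be a product state. Because the protocol is ordinary, every test state is effective, i.e.\ $\nu_j>0$. Since $\Omega_j=E_j$ is a projector (eigenvalues $0$ and $1$) whose unit eigenspace contains the target $U|\psi_j\>$, a positive gap $\nu_j>0$ forces this unit eigenspace to be exactly the one-dimensional span of $U|\psi_j\>$; otherwise the second-largest eigenvalue equals $1$ and $\nu_j=0$. Now the unit eigenspace of any nonadaptive local projective measurement is a sum of tensor products of local eigenspaces, one factor per subsystem, and is therefore spanned by product vectors. A one-dimensional space spanned by product vectors is the span of a single product vector, so $U|\psi_j\>$ itself must be a product state; equivalently, an entangled output would force the passing subspace to contain at least two product vectors and hence have dimension at least two, giving $\nu_j=0$, which is precisely the obstruction encountered in the entangled case of \thref{theorem:bipartite fewest settings}. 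Combining product input with product output yields $|\psi_j\>\in\mathrm{Prod}(U)$ for every $j$, so $\scrT\subseteq\mathrm{Prod}(U)$, completing the argument. The delicate point that makes this work is the counting step guaranteeing one measurement per output, without which $\Omega_j$ need not be a single projector and could acquire a positive gap on an entangled output (as happens with the two-setting protocol of \thref{theorem:bipartite fewest settings}).
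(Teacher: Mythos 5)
Your proof is correct and follows essentially the same route as the paper: the entanglement-free case is settled by definition, and for the $d$-setting case the same counting argument ($d\leq|\scrT|\leq d$ from the spanning/IS lower bound and the one-setting-per-state upper bound) forces exactly one local projective measurement per output, whence each output must be a product state because a single such setting cannot verify an entangled state. Your explicit justification of that last step via the product-vector structure of the passing eigenspace is simply an unpacking of the argument the paper delegates to Theorem~\ref{theorem:bipartite fewest settings}.
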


\begin{theorem}\label{thm:minS-entFree}
Suppose  $U$ is a unitary operator on a composite Hilbert space $\caH$ of dimension $d$. Then the following five statements are equivalent:
\begin{enumerate}

\item  $\mu(U) = d$.
	
\item $\mathrm{Prod}(U)$ is a connected spanning set. 

\item  $\mathrm{Prod}(U)$ contains a connected  basis as a subset.

\item $U$ admits an EFMIS.

\item $U$ can be verified by an entanglement-free  protocol.
\end{enumerate}	
\end{theorem}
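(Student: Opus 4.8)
The plan is to establish the five equivalences through a single cycle of implications, $(1)\Rightarrow(2)\Rightarrow(3)\Rightarrow(4)\Rightarrow(5)\Rightarrow(1)$, exploiting the fact that statements (2)--(4) are purely structural conditions on the set $\mathrm{Prod}(U)$, while statements (1) and (5) are operational conditions about verification protocols. The structural implications will follow almost immediately from \lsref{lem:IS}, \ref{lemma:MIS}, and \ref{lem:SpanBasis}, whereas the operational implications will rely on \lref{lem:TestStateProdU} together with \lref{lem:QGVreliable}.

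First I would dispatch the structural block $(2)\Rightarrow(3)\Rightarrow(4)$. For $(2)\Rightarrow(3)$, if $\mathrm{Prod}(U)$ is a connected spanning set then by the first part of \lref{lem:SpanBasis} it contains a connected basis. For $(3)\Rightarrow(4)$, a connected basis contained in $\mathrm{Prod}(U)$ is a MIS by \lref{lemma:MIS} and simultaneously satisfies the product-state constraint, so it is an EFMIS by definition, whence $U$ admits an EFMIS. These two steps are essentially bookkeeping once the earlier lemmas are in hand.

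The key constructive step is $(4)\Rightarrow(5)$. Given an EFMIS $\scrT=\{|\psi_j\>\}_{j=1}^{d}$, each input $|\psi_j\>$ is a product state whose image $U|\psi_j\>$ is again a product state. I would verify each output state by the single test projector $P_j=U|\psi_j\>\<\psi_j|U^\dag$, which factorizes across the parties and is therefore realizable by one nonadaptive local projective measurement; its verification operator is $P_j$ itself, with second largest eigenvalue $0$ and hence unit spectral gap $\nu_j=1>0$. Choosing each $|\psi_j\>$ with probability $1/d$ then yields an ordinary entanglement-free protocol whose set of effective test states is $\scrT$. Since $\scrT$ is a MIS and hence an IS, \lref{lem:QGVreliable} guarantees that $U$ is verified reliably, establishing (5). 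The same construction uses exactly $d$ settings, so it already shows $\mu(U)\le d$, a fact I will reuse below.

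Finally I would close the cycle with $(5)\Rightarrow(1)$ and $(1)\Rightarrow(2)$, which is where the real content lies. For $(5)\Rightarrow(1)$, an entanglement-free protocol has, by \lref{lem:TestStateProdU}, its test states contained in $\mathrm{Prod}(U)$; reliability forces this set to be an IS by \lref{lem:QGVreliable}, hence a connected spanning set by \lref{lem:IS}, so by \lref{lem:SpanBasis} it contains a connected basis, which is an EFMIS, and the construction above gives $\mu(U)\le d$; combined with the universal lower bound $\mu(U)\ge d$ from \eref{eq:muULB} this yields $\mu(U)=d$. The implication $(1)\Rightarrow(2)$ is analogous: an ordinary protocol attaining $\mu(U)=d$ settings again has test states in $\mathrm{Prod}(U)$ by \lref{lem:TestStateProdU} and forming an IS by \lref{lem:QGVreliable}, hence a connected spanning subset of $\mathrm{Prod}(U)$, so that $\mathrm{Prod}(U)$ itself is a connected spanning set by the second part of \lref{lem:SpanBasis}. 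The main obstacle, and the reason the theorem is not merely a tautology, is precisely this last link: tying the counting condition $\mu(U)=d$ and the geometric condition of entanglement-freeness back to the algebraic structure of $\mathrm{Prod}(U)$ hinges entirely on \lref{lem:TestStateProdU}, which certifies that minimal-setting and entanglement-free protocols cannot escape the product-state constraint. Once that lemma is available, the remaining steps are routine applications of the earlier structural lemmas.
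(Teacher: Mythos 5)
Your proposal is correct and follows essentially the same route as the paper: the identical cycle $(1)\Rightarrow(2)\Rightarrow(3)\Rightarrow(4)\Rightarrow(5)\Rightarrow(1)$, with the structural steps handled by \lsref{lem:IS}, \ref{lemma:MIS}, and \ref{lem:SpanBasis} and the operational links supplied by \lsref{lem:QGVreliable} and \ref{lem:TestStateProdU}. Your only addition is to spell out the rank-one local test projectors in $(4)\Rightarrow(5)$, which the paper leaves implicit as following from the definition.
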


\begin{corollary}\label{cor:minS-entFree}
Suppose  $U$ is a unitary operator on a composite Hilbert space $\caH$ of dimension $d$.	If $\mu(U) = d$ or 
if $U$ can be verified by an entanglement-free  protocol, then  $d_\mathrm{Prod}(U)= d$.
\end{corollary}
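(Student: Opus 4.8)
The plan is to obtain this corollary as an immediate consequence of \thref{thm:minS-entFree}, since both hypotheses appear verbatim in the list of five equivalent statements established there. Specifically, the hypothesis $\mu(U) = d$ is statement~1 of the theorem, and the hypothesis that $U$ can be verified by an entanglement-free protocol is statement~5. By the equivalence asserted in the theorem, either hypothesis implies statement~2, namely that $\mathrm{Prod}(U)$ is a connected spanning set in $\caH$.

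First I would invoke \thref{thm:minS-entFree} to pass from whichever hypothesis is assumed to statement~2. The crucial step is then to unpack what "spanning set" means: by definition a spanning set spans the entire Hilbert space, so $\mathrm{span}(\mathrm{Prod}(U)) = \caH$. Taking dimensions on both sides yields
\begin{equation}
d_\mathrm{Prod}(U) = \mathrm{dim}\ \mathrm{span}(\mathrm{Prod}(U)) = \mathrm{dim}\ \caH = d,
\end{equation}
which is exactly the claim. Since the inequality $d_\mathrm{Prod}(U) \le d$ holds by construction, only the lower bound $d_\mathrm{Prod}(U) \ge d$ carries content, and it is supplied precisely by the spanning property.

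I do not expect any genuine obstacle here, because the corollary is a direct logical consequence of the theorem together with the elementary fact that a spanning subset of a $d$-dimensional space has a span of dimension $d$. The only points requiring mild care are to match each hypothesis to the correct numbered statement of \thref{thm:minS-entFree} and to observe that the \emph{connectedness} half of "connected spanning set" plays no role in this particular conclusion: the dimension count depends solely on the spanning property. In this sense the corollary extracts a weaker quantitative shadow of the full structural equivalence proved in the theorem.
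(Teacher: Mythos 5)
Your argument matches the paper's: the corollary is stated there as an immediate consequence of Theorem~\ref{thm:minS-entFree}, and your route (either hypothesis gives statement~2 of that theorem, hence $\mathrm{Prod}(U)$ spans $\caH$ and $d_\mathrm{Prod}(U)=d$) is exactly the intended one. The proof is correct; no issues.
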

\Crref{cor:minS-entFree} is an immediate consequence of \thref{thm:minS-entFree}. 

\begin{proof}[Proof of \thref{thm:minS-entFree}]
Suppose $\mu(U)=d$. Then $U$ can be verified by an ordinary protocol composed of $d$ experimental settings that are based on local operations. Let $\scrT$ be the set of test states; then $\scrT$ forms a connected basis according to \lsref{lem:QGVreliable} and \ref{lem:IS}. In addition, $\scrT\subseteq\mathrm{Prod}(U)$ according to \lref{lem:TestStateProdU}. Therefore, $\mathrm{Prod}(U)$ is a connected spanning set according to \lref{lem:SpanBasis}, which confirms the implication $1\imply 2$.

Next, suppose $\mathrm{Prod}(U)$ is a connected spanning set. Then $\mathrm{Prod}(U)$ contains a connected basis as a subset according to  \lref{lem:SpanBasis}, which confirms the implication $2\imply 3$.

Next, suppose $\mathrm{Prod}(U)$ contains a connected basis $\scrT$. Then $\scrT$ satisfies the product-state  constraint and is a MIS according to \lref{lemma:MIS}. Therefore, $\scrT$ is an EFMIS for $U$,  which confirms the implication $3\imply 4$.

The  implication $4\imply 5$ follows from the definition, given that any EFMIS for $U$ can serve as a set of test states of an entanglement-free verification protocol.

Finally, suppose $U$ can be verified by an entanglement-free protocol; let $\scrT$ be the set of test states.  Then $\scrT$ is an IS contained in $\mathrm{Prod}(U)$ by \lref{lem:QGVreliable} and is thus a connected spanning set by \lref{lem:IS}. 
According to \lref{lem:SpanBasis}, $\scrT$ contains a connected basis $\scrS$, which enables us to construct a reliable verification protocol for $U$ using only  $d$ experimental settings. Therefore, $\mu(U)=d$, which confirms the implication $5\imply 1$ and  completes the proof of \thref{thm:minS-entFree}.
\end{proof}

\subsection{Minimal settings for verifying bipartite unitaries}

In this section we  focus on the verification of general bipartite unitaries and show that the minimum number of settings required to verify a generic bipartite unitary  grows linearly with the total dimension. 
\begin{theorem}\label{theorem:minimal settings}
Suppose $U$ is a unitary operator acting on a $d$-dimensional bipartite Hilbert space $\caH$. Then the minimum number of experimental  settings $\mu(U)$ required to verify $U$  satisfies $d \le \mu(U) \le 2d$.
\end{theorem}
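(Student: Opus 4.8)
The lower bound $\mu(U)\geq d$ is already established in \eqref{eq:muULB}, so the entire content of the theorem lies in the upper bound $\mu(U)\leq 2d$.  The plan is to exhibit, for an \emph{arbitrary} bipartite unitary $U$ on $\caH=\caH_\rmA\otimes\caH_\rmB$, an explicit ordinary verification protocol that uses exactly $2d$ experimental settings, where each setting prepares a pure product input state and measures the output with a nonadaptive local projective measurement.  By \lsref{lem:QGVreliable} and \ref{lem:IS}, it suffices to produce a set $\scrT$ of $2d$ product test states that (i) is a connected spanning set, and (ii) admits, for each member, a reliable verification protocol for the corresponding output state $U|\psi_j\>$ based on local projective measurements.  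Part (ii) is handled uniformly by \thref{theorem:bipartite fewest settings}: any bipartite pure output state can be verified with at most two measurement settings, so each test state contributes at most two settings.

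The natural construction is to build $\scrT$ from a tensor-product computational basis.  First I would take the $d=d_\rmA d_\rmB$ product states $\{|j\>_\rmA|k\>_\rmB\}$, which form a basis of $\caH$ and are all product states; this handles the spanning requirement.  The issue is connectivity: the transition graph of an orthonormal basis is totally disconnected (every inner product vanishes), so this set is not an IS.  To repair connectivity I would add a small number of extra ``linking'' product states whose overlaps with the basis states weld the components together into a single connected graph, mirroring the totally-rotated-state trick of \eqref{eq:totally rotated state}--\eqref{eq:MIS}.  The resulting augmented set remains a spanning set and becomes connected, hence an IS by \lref{lem:IS}.

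The key budgeting step is to arrange that the total number of \emph{settings} — not merely the number of test states — is at most $2d$.  Here I would exploit the fact that a product test state $|\psi_j\>$ whose output $U|\psi_j\>$ is \emph{also} a product state costs only one setting (verify it by a single local projective measurement onto that product state, as in the product-state case of \thref{theorem:bipartite fewest settings}), whereas a test state with an entangled output costs two.  The cleanest way to guarantee a $2d$ total is to choose $\scrT$ to consist of exactly $d$ product test states forming a connected basis (a MIS, by \lref{lemma:MIS}), since each such state then contributes at most two settings, giving at most $2d$ in total.  So the refined plan is: construct a connected product-state basis of $\caH$ (for instance by starting from the computational product basis and replacing a suitable subset of basis vectors with rotated product vectors that create the needed edges in the transition graph while preserving linear independence), and invoke \thref{theorem:bipartite fewest settings} to verify each of the $d$ output states with at most two local-measurement settings.

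The main obstacle is establishing connectivity with product states only: I must verify that one can always select a connected \emph{basis} of $\caH$ consisting entirely of product vectors, so that every test state is a genuine product state (preparable by a local operation) while the transition graph is connected.  This is where the bipartite tensor structure is essential — product vectors are plentiful enough to span $\caH$ and to realize arbitrary required overlaps, so a connected product basis exists; the routine verification that a concrete such basis is linearly independent and connected is the only computational work, and once it is in place the setting count $\le 2d$ follows immediately from \thref{theorem:bipartite fewest settings}.
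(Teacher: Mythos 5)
Your refined plan is exactly the paper's proof: take the $d$-element MIS of \eqref{eq:MIS} (the $d-1$ computational product-basis states plus the uniform superposition $|\varphi\>$, which factorizes as a product state when the computational basis is the standard product basis), and invoke \thref{theorem:bipartite fewest settings} to verify each of the $d$ output states with at most two local measurement settings, giving $\mu(U)\le 2d$. The only piece you leave implicit --- that the linking state in a connected product basis can itself be taken to be a product vector --- is resolved in the paper precisely by the observation that $|\varphi\>$ of \eqref{eq:totally rotated state} is a tensor product of local uniform superpositions, so your proposal is correct and essentially identical in approach.
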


\begin{proof}
The inequality $d \le \mu(U)$	follows from the general lower bound in \eref{eq:muULB}.  
To prove the upper bound $\mu(U) \le 2d$, note that the MIS $\scrS$ in \eref{eq:MIS} can serve as a set of test states; in addition, all states in  $\scrS$ are product states as long as the computational basis coincides with the standard product basis. According to  \thref{theorem:bipartite fewest settings}, the output state associated with each input state can be verified by either one or two measurement settings based on nonadaptive local projective measurements. Therefore, $\mu(U)\leq 2d$, which completes the proof of \thref{theorem:minimal settings}. 
\end{proof}

The following proposition clarifies the relation between $\mu(U)$ and $d_\mathrm{Prod}(U)$; see \aref{app:proposition:mu(U)} for a proof. 
\begin{proposition}\label{pro:muUdprod}
Let  $U$ be a unitary operator acting on a $d$-dimensional bipartite Hilbert space $\caH$.	If $d_\mathrm{Prod}(U)<d$, then 
	\begin{equation}\label{eq:mu(U)}
	\mu(U) = d_\mathrm{Prod}(U) + 2[d-d_\mathrm{Prod}(U)].
	\end{equation}
In the case $d_\mathrm{Prod}(U)=d$, we have $\mu(U)=d$ if the set $\mathrm{Prod(U)}$ is connected and $\mu(U)=d+1$ otherwise.
\end{proposition}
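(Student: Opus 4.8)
The plan is to establish \pref{pro:muUdprod} by combining the structural characterization from \thref{thm:minS-entFree} with the two-setting output verification from \thref{theorem:bipartite fewest settings}, treating the regimes $d_\mathrm{Prod}(U)<d$ and $d_\mathrm{Prod}(U)=d$ separately. The central idea is that any ordinary verification protocol must use a set of test states $\scrT$ that is a connected spanning set (by \lsref{lem:QGVreliable} and \ref{lem:IS}), and each such test state contributes either one setting (if both it and its image $U|\psi\>$ are product states, i.e.\ $|\psi\>\in\mathrm{Prod}(U)$) or two settings (if the output is entangled and must be verified by \thref{theorem:bipartite fewest settings}) to the total count. Thus the setting count of a protocol based on $\scrT$ equals $|\scrT\cap\mathrm{Prod}(U)| + 2\,|\scrT\setminus\mathrm{Prod}(U)|$, and minimizing $\mu(U)$ amounts to choosing a connected spanning set that maximizes the number of its states lying in $\mathrm{Prod}(U)$ while keeping the total cardinality as small as possible.

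First I would handle the case $d_\mathrm{Prod}(U)<d$. Here the key combinatorial observation is that $\mathrm{span}(\mathrm{Prod}(U))$ has dimension $d_\mathrm{Prod}(U)<d$, so any spanning set of $\caH$ must include at least $d-d_\mathrm{Prod}(U)$ states outside $\mathrm{Prod}(U)$ (each entangled-output state costing two settings). The lower bound then follows by a dimension-counting argument: a connected spanning set contributes at least $d_\mathrm{Prod}(U)$ settings from its product-output states and at least $2[d-d_\mathrm{Prod}(U)]$ from the entangled-output states needed to complete a spanning set, giving $\mu(U)\geq d_\mathrm{Prod}(U)+2[d-d_\mathrm{Prod}(U)]$. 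For the matching upper bound I would explicitly construct an economical connected spanning set: take a basis of $\mathrm{span}(\mathrm{Prod}(U))$ drawn from $\mathrm{Prod}(U)$ together with $d-d_\mathrm{Prod}(U)$ additional product states whose images are entangled, arranged so the transition graph is connected (connectivity can be arranged using a ``totally rotated'' state as in \eref{eq:totally rotated state} or by linking the two groups through a shared state). The product-output states cost one setting each and the remaining states cost two each by \thref{theorem:bipartite fewest settings}, matching the lower bound.

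Next I would treat $d_\mathrm{Prod}(U)=d$, where $\mathrm{Prod}(U)$ spans $\caH$ and all settings can in principle be one-setting tests. If $\mathrm{Prod}(U)$ is connected, then it is a connected spanning set, so by \thref{thm:minS-entFree} (the equivalence $2\equi 1$) we immediately get $\mu(U)=d$. If $\mathrm{Prod}(U)$ is disconnected, the subtlety is that no entanglement-free MIS exists, so \thref{thm:minS-entFree} forces $\mu(U)>d$; the task is to show the deficit is exactly one. For the lower bound $\mu(U)\geq d+1$, I would argue that any connected spanning set $\scrT$ achieving only $d$ settings would have to lie entirely in $\mathrm{Prod}(U)$ (since $d$ settings and $d$ states force every test state to be product-output) and be a connected basis, contradicting disconnectedness of $\mathrm{Prod}(U)$ via \lref{lem:CLISmax}-style reasoning. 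For the upper bound $\mu(U)\le d+1$, I would bridge two connected components of $\mathrm{Prod}(U)$ by adding a single extra product state with entangled output that is nonorthogonal to states in both components, making the enlarged set a connected spanning set of $d+1$ states; since it contains at most one state outside $\mathrm{Prod}(U)$ and by \lref{lem:SpanBasis} one can prune back to exactly $d+1$ states while preserving connectedness and spanning, the setting count is at most $d+1$.

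The main obstacle I anticipate is the disconnected subcase of $d_\mathrm{Prod}(U)=d$, specifically proving $\mu(U)=d+1$ rather than some larger value. The delicate point is guaranteeing that a single bridging state suffices to connect possibly many components of $\mathrm{Prod}(U)$ while still yielding a connected spanning set of exactly $d+1$ states---one must verify that such a bridging product state (with entangled image) exists and that after adding it the pruning guaranteed by \lref{lem:SpanBasis} does not destroy connectivity or re-introduce the need for a second extra setting. Establishing that one extra state always suffices (regardless of the number of components) will require showing that the bridging state can be chosen nonorthogonal to at least one state in each component simultaneously, which is a genericity/nonorthogonality argument that I would expect to be the technical crux of the proof.
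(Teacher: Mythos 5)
Your proposal is correct and follows essentially the same route as the paper: the same one-versus-two setting count via \thref{theorem:bipartite fewest settings}, the same dimension-counting lower bound $\mu(U)\geq 2d-d_\mathrm{Prod}(U)$, and the same style of upper-bound construction built from linearly independent states of $\mathrm{Prod}(U)$ completed to a connected basis. The obstacle you flag in the disconnected $d_\mathrm{Prod}(U)=d$ case is not actually an issue: instead of bridging components of $\mathrm{Prod}(U)$, the paper takes $d-1$ linearly independent states from $\mathrm{Prod}(U)$ and adds a single product state outside their span that is nonorthogonal to every one of them, so the transition graph is a star (connected regardless of the component structure of $\mathrm{Prod}(U)$) and the count is $(d-1)+2=d+1$ with no pruning step required.
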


\section{\label{sec:TwoQubitU}Two-qubit unitaries}
In this section we discuss the basic properties of two-qubit unitaries that are relevant to  studying the minimal-setting verification and entanglement-free verification presented in the next section. Here the discussion builds on the previous works \rscite{Kraus2001Optimal,D2002Optimal}.

\subsection{\label{sec:CanonicalForm}Canonical form of two-qubit unitaries}

Let $\caH = \caH_\rmA \otimes \caH_\rmB$ be the Hilbert space associated with a two-qubit system shared by A and B. According to \rscite{Kraus2001Optimal,D2002Optimal},
any two-qubit unitary operator $U_{\rmA\rmB}$ acting on $\caH$ can be expressed as follows,
\begin{equation}\label{eq:canonical form 1}
U_{\rmA \rmB} = V_\rmA \otimes W_\rmB U \tilde{V}_\rmA \otimes \tilde{W}_\rmB,
\end{equation}
where $V_\rmA,W_\rmB,\tilde{V}_\rmA,\tilde{W}_\rmB$ are four qubit unitary operators,
\begin{equation}\label{eq:canonical form 2}
\begin{gathered}
U=U(\alpha_1, \alpha_2,\alpha_3)=\rme^{-\rmi H(\alpha_1,\alpha_2,\alpha_3)},\\
H(\alpha_1,\alpha_2,\alpha_3)=\sum_{k=1}^3\alpha_k H_k, \\ 
0 \leq |\alpha_3| \leq \alpha_2 \leq \alpha_1 \leq \pi/4, \\
H_1=\sigma_1 \otimes \sigma_1, \quad
H_2=\sigma_2 \otimes \sigma_2,\quad
H_3=\sigma_3 \otimes \sigma_3,
\end{gathered}
\end{equation}
and $\sigma_1,\sigma_2,\sigma_3$ are the three Pauli operators. 
The operator $U(\alpha_1,\alpha_2,\alpha_3)$ can further be expressed as 
\begin{equation}\label{eq:U}
U(\alpha_1,\alpha_2,\alpha_3) 
=\sum_{k=0}^3 \zeta_k \sigma_k \otimes \sigma_k,
\end{equation}
where $\sigma_0$ is the identity operator
and the coefficients $\zeta_k$ are given by
\begin{equation}\label{eq:zetak}
\begin{aligned}
\zeta_0 = \cos\alpha_1 \cos\alpha_2 \cos\alpha_3 - \mathrm{i} \sin\alpha_1 \sin\alpha_2 \sin\alpha_3,\\
\zeta_1 = \cos\alpha_1 \sin\alpha_2 \sin\alpha_3 - \mathrm{i} \sin\alpha_1 \cos\alpha_2 \cos\alpha_3,\\
\zeta_2 = \sin\alpha_1 \cos\alpha_2 \sin\alpha_3 - \mathrm{i} \cos\alpha_1 \sin\alpha_2 \cos\alpha_3,\\
\zeta_3 = \sin\alpha_1 \sin\alpha_2 \cos\alpha_3 - \mathrm{i} \cos\alpha_1 \cos\alpha_2 \sin\alpha_3.
\end{aligned}
\end{equation}

According to the equation
\begin{align}
\sigma_3^\rmA  U(\alpha_1,\alpha_2,-\alpha_3)\sigma_3^\rmA&=U(-\alpha_1,-\alpha_2,-\alpha_3)\nonumber\\
&=U^*(\alpha_1,\alpha_2,\alpha_3),
\end{align}
$U(\alpha_1,\alpha_2,-\alpha_3)$ is equivalent to $U^*(\alpha_1,\alpha_2,\alpha_3)$. Therefore, any two-qubit unitary operator is equivalent to $U(\alpha_1,\alpha_2,\alpha_3)$ or $U^*(\alpha_1,\alpha_2,\alpha_3)$ with 
\begin{equation}\label{eq:para range}
0 \leq \alpha_3 \leq \alpha_2 \leq \alpha_1 \leq \pi/4.
\end{equation}
Since most quantities we are interested in, such as Schmidt coefficients and the minimum number of experimental settings, are invariant under local unitary transformations and complex conjugation, so we can focus on  $U(\alpha_1,\alpha_2,\alpha_3)$ with the parameter range in \eref{eq:para range} in the following discussion.

\subsection{\label{sec:Schmidt}Schmidt coefficients of two-qubit unitaries}

To further clarify the properties of two-qubit unitary operators, we need to find suitable invariants. Given a two-qubit unitary operator $U$ acting on the Hilbert space $\caH = \caH_\rmA \otimes \caH_\rmB$, 
 its Choi state 
\begin{equation}\label{eq:PsiU}
|\Psi_U\> := U |\Phi\>_{\rmA\rmA'} \otimes |\Phi\>_{\rmB\rmB'}
\end{equation}
 is a four-qubit pure state on $\caH \otimes \caH$, where
 \begin{align}
|\Phi\>_{\rmA\rmA'}=\frac{1}{\sqrt{2}} \sum_k |k\>_{\rmA}|k\>_{\rmA'},\quad |\Phi\>_{\rmB\rmB'}=\frac{1}{\sqrt{2}} \sum_k |k\>_{\rmB}|k\>_{\rmB'}
 \end{align}
are two-qubit maximally entangled states shared by parties $\rmA\rmA'$ and  $\rmB\rmB'$, respectively. The Schmidt coefficients (rank) of $U$ are defined as the Schmidt coefficients (rank) of $|\Psi_U\>$ with respect to the partition between $\rmA\rmA'$ and $\rmB\rmB'$.  Note that the Schmidt coefficients and  Schmidt rank of $U$ are invariant under local unitary transformations.

Let 
\begin{align}
|\tilde{\Phi}_k\>=\sigma_k \otimes I |\Phi\>, \quad  k=0,1,2,3.
\end{align}
Then the set
$\{ |\tilde{\Phi}_k\> \}_{k=0}^3$ forms a Bell basis, which is equivalent to the magic basis \cite{PhysRevLett.78.5022} up to overall phase factors. When $U=U(\alpha_1,\alpha_2,\alpha_3)$ is the canonical two-qubit unitary  defined in \sref{sec:CanonicalForm},  by virtue of \eref{eq:U}, the Choi state $|\Psi_U\>$ can be expressed as 
\begin{align}
|\Psi_U\> 
&= \sum_{k=0}^3 \zeta_k |\tilde{\Phi}_k\>_{\rmA\rmA'} \otimes |\tilde{\Phi}_k\>_{\rmB\rmB'}.
\end{align} 
Now it is clear that the Schmidt coefficients of $|\Psi_U\>$ with respect to the partition between $\rmA\rmA'$ and $\rmB\rmB'$ are $|\zeta_k|$  for $k=0,1,2,3$, where $\zeta_k$ are given in 	\eref{eq:zetak}. Therefore, the two-qubit unitary $U(\alpha_1,\alpha_2,\alpha_3)$ has Schmidt coefficients $|\zeta_k|$  for $k=0,1,2,3$, which satisfy the following normalization condition:
\begin{equation}\label{eq:SchmidtCoeffNorm}
|\zeta_0|^2+|\zeta_1|^2+|\zeta_2|^2+|\zeta_3|^2=1.
\end{equation}
Note that $U^*(\alpha_1,\alpha_2,\alpha_3)$ and $U(\alpha_1,\alpha_2,\alpha_3)$ have the same Schmidt coefficients and  Schmidt rank. So we can focus on  the parameter range in \eref{eq:para range}  when studying the Schmidt coefficients and Schmidt rank of $U(\alpha_1,\alpha_2,\alpha_3)$.

The Schmidt rank of $U(\alpha_1,\alpha_2,\alpha_3)$ is determined in  \rcite{D2002Optimal} as reproduced in the following lemma, which can also be verified directly by virtue of \eref{eq:zetak}.
\begin{lemma}\label{lem:SchmidtRank}
	Suppose $0 \leq \alpha_3 \leq \alpha_2 \leq \alpha_1 \leq \pi/4$. Then  the Schmidt rank of $U(\alpha_1,\alpha_2,\alpha_3)$ is 1 if $\alpha_1=\alpha_2=\alpha_3=0$, is 2 if $\alpha_1>0$ and $\alpha_2=\alpha_3=0$, and is 4 if  $\alpha_1\geq \alpha_2>0$. 
\end{lemma}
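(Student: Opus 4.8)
The plan is to read off the Schmidt rank from the explicit coefficients in \eqref{eq:zetak}, since by the discussion preceding the lemma the Schmidt rank of $U(\alpha_1,\alpha_2,\alpha_3)$ equals the number of nonzero Schmidt coefficients $|\zeta_k|$, hence the number of indices $k$ for which $\zeta_k\neq0$. Each $\zeta_k$ has the form $a_k-\mathrm{i}b_k$ with $a_k,b_k$ real products of three trigonometric factors, so $\zeta_k=0$ iff $a_k=b_k=0$. The structural fact I would record first is that on the range $0\le\alpha_3\le\alpha_2\le\alpha_1\le\pi/4$ every cosine satisfies $\cos\alpha_k\ge1/\sqrt2>0$, while $\sin\alpha_k\ge0$ with $\sin\alpha_k=0$ iff $\alpha_k=0$. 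Consequently each of the products $a_k,b_k$ vanishes iff at least one of its sine factors has a vanishing angle, which reduces the problem to elementary bookkeeping of which sines are forced to be zero and removes any possibility of sign cancellation within a given real or imaginary part.

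Next I would dispose of the three cases, which are exhaustive and mutually exclusive on the given range (they correspond respectively to $\alpha_1=0$, to $\alpha_1>0=\alpha_2$, and to $\alpha_2>0$, the inequality $\alpha_1\ge\alpha_2$ being automatic). When $\alpha_1=\alpha_2=\alpha_3=0$ the formulas give $\zeta_0=1$ and $\zeta_1=\zeta_2=\zeta_3=0$, so the rank is $1$. When $\alpha_1>0$ and $\alpha_2=\alpha_3=0$, substitution yields $\zeta_0=\cos\alpha_1$, $\zeta_1=-\mathrm{i}\sin\alpha_1$, and $\zeta_2=\zeta_3=0$; since $0<\alpha_1\le\pi/4$ both $\cos\alpha_1$ and $\sin\alpha_1$ are nonzero, so exactly two coefficients survive and the rank is $2$.

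The only case requiring genuine work is $\alpha_1\ge\alpha_2>0$, where I must show that all four $\zeta_k$ are nonzero. I would argue this by exhibiting, for each $k$, one of $a_k$ or $b_k$ that is a product of three strictly positive factors: $\operatorname{Re}\zeta_0=\cos\alpha_1\cos\alpha_2\cos\alpha_3>0$; the imaginary parts obey $\operatorname{Im}\zeta_1\propto\sin\alpha_1\cos\alpha_2\cos\alpha_3\neq0$ and $\operatorname{Im}\zeta_2\propto\cos\alpha_1\sin\alpha_2\cos\alpha_3\neq0$ because $\alpha_1,\alpha_2>0$; and $\operatorname{Re}\zeta_3=\sin\alpha_1\sin\alpha_2\cos\alpha_3>0$ for the same reason. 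Hence every $\zeta_k\neq0$ and the rank is $4$.

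The main obstacle is organizational rather than conceptual: in the four-dimensional case one must certify each coefficient nonzero through a single, carefully chosen real or imaginary part, resisting the temptation to expand $|\zeta_k|^2$ in full. The positivity of all cosines on $[0,\pi/4]$ is exactly what makes this clean, since it localizes every possible vanishing to the sine factors and thereby turns the claim into the case analysis above.
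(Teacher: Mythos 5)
Your proof is correct and matches the paper's approach: the paper gives no written-out argument, citing Ref.~[D2002Optimal] and remarking that the lemma "can also be verified directly by virtue of \eref{eq:zetak}," which is precisely the direct verification you carry out. Your case analysis is exhaustive on the stated parameter range, and the key observation that all cosines are bounded below by $1/\sqrt{2}$ on $[0,\pi/4]$ correctly localizes every possible vanishing to the sine factors, so each $\zeta_k$ is certified nonzero (or zero) by inspecting a single real or imaginary part.
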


\begin{figure}[t]
	\centering
	\includegraphics[scale=0.17]{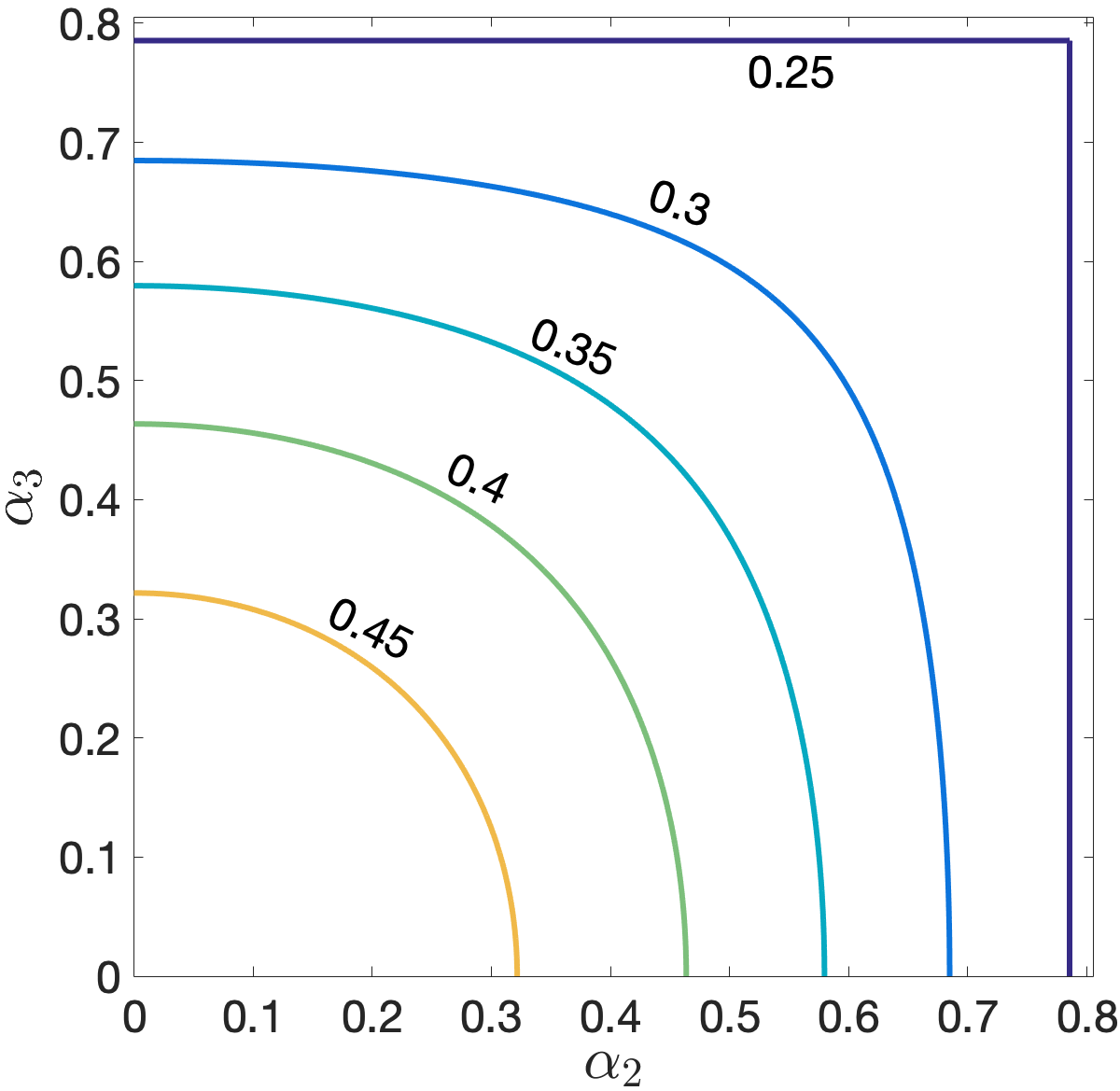}
	\caption{Contour plot of $|\zeta_0|^2$ in the plane of $\alpha_2-\alpha_3$, where  $|\zeta_0|$ is the largest Schmidt coefficient of $U(\alpha_1=\pi/4,\alpha_2,\alpha_3)$.  The other three Schmidt coefficients are determined by $|\zeta_0|^2$ according to \eref{eq:zetakSqSpecial}. All unitaries corresponding to a given contour line share the same Schmidt coefficients. 	
	}
	\label{fig:zeta0contour}
\end{figure}

The properties of Schmidt coefficients of two-qubit unitaries are summarized in \lsref{lem:SchmidtCoeffOrder}-\ref{lem:SchmidtCoeffSame} and \crref{cor:SchmidtCoeffSame} below, which are proved in \aref{app:lem:SchmidtCoeff}. 

\begin{lemma}\label{lem:SchmidtCoeffOrder}
Suppose $0 \leq \alpha_3 \leq \alpha_2 \leq \alpha_1 \leq \pi/4$. Then the  Schmidt coefficients of 
 $U(\alpha_1,\alpha_2,\alpha_3)$ satisfy the following relation:
	\begin{equation}\label{eq:SchmidtCoeffOrder}
	|\zeta_0| \ge |\zeta_1| \ge |\zeta_2| \ge |\zeta_3|\geq 0.
	\end{equation}
The first inequality saturates iff $\alpha_1 = \pi/4$; the second inequality saturates iff $\alpha_2=\alpha_1$; the third inequality saturates iff $\alpha_1=\frac{\pi}{4}$ or $\alpha_3=\alpha_2$; and the last inequality saturates iff $\alpha_2=\alpha_3=0$. 
\end{lemma}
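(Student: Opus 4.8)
The plan is to reduce the lemma to explicit trigonometric identities for the squared Schmidt coefficients and then factor the consecutive differences. First I would abbreviate $c_k=\cos\alpha_k$ and $s_k=\sin\alpha_k$, and observe that each $\zeta_k$ in \eref{eq:zetak} is the sum of a real part and a purely imaginary part, so their squared moduli add termwise:
\begin{align}
|\zeta_0|^2 &= c_1^2 c_2^2 c_3^2 + s_1^2 s_2^2 s_3^2, &
|\zeta_1|^2 &= c_1^2 s_2^2 s_3^2 + s_1^2 c_2^2 c_3^2, \nonumber\\
|\zeta_2|^2 &= s_1^2 c_2^2 s_3^2 + c_1^2 s_2^2 c_3^2, &
|\zeta_3|^2 &= s_1^2 s_2^2 c_3^2 + c_1^2 c_2^2 s_3^2. \nonumber
\end{align}

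Next I would compute the three consecutive differences and factor each, using the double-angle identity $c^2-s^2=\cos 2\theta$ together with the sum-angle identities $c_2c_3\mp s_2 s_3=\cos(\alpha_2\pm\alpha_3)$ and $s_1 c_2\pm c_1 s_2=\sin(\alpha_1\pm\alpha_2)$. The result is a product of three elementary factors in each case:
\begin{align}
|\zeta_0|^2 - |\zeta_1|^2 &= \cos(2\alpha_1)\cos(\alpha_2+\alpha_3)\cos(\alpha_2-\alpha_3), \nonumber\\
|\zeta_1|^2 - |\zeta_2|^2 &= \cos(2\alpha_3)\sin(\alpha_1+\alpha_2)\sin(\alpha_1-\alpha_2), \nonumber\\
|\zeta_2|^2 - |\zeta_3|^2 &= \cos(2\alpha_1)\sin(\alpha_2+\alpha_3)\sin(\alpha_2-\alpha_3). \nonumber
\end{align}
Under the hypothesis $0\le\alpha_3\le\alpha_2\le\alpha_1\le\pi/4$ every factor is manifestly nonnegative, since the arguments $2\alpha_1,2\alpha_3,\alpha_1+\alpha_2,\alpha_2+\alpha_3$ all lie in $[0,\pi/2]$ and the differences $\alpha_1-\alpha_2,\alpha_2-\alpha_3$ are nonnegative and bounded by $\pi/4$. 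This establishes the chain $|\zeta_0|\ge|\zeta_1|\ge|\zeta_2|\ge|\zeta_3|$, while $|\zeta_3|\ge0$ is trivial.

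For the saturation statements I would read off the vanishing locus of each product, keeping the ordering constraints in force. The first difference vanishes iff $\cos(2\alpha_1)=0$, i.e.\ $\alpha_1=\pi/4$, because $\cos(\alpha_2+\alpha_3)=0$ would force $\alpha_2=\alpha_3=\pi/4$ (hence $\alpha_1=\pi/4$ again) and $\cos(\alpha_2-\alpha_3)$ never vanishes on the range. The second difference vanishes iff $\sin(\alpha_1-\alpha_2)=0$, i.e.\ $\alpha_1=\alpha_2$, since $\cos(2\alpha_3)=0$ forces all angles to equal $\pi/4$ and $\sin(\alpha_1+\alpha_2)=0$ forces $\alpha_1=\alpha_2=0$, both of which are special cases of $\alpha_1=\alpha_2$. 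Likewise the third difference vanishes iff $\alpha_1=\pi/4$ or $\alpha_2=\alpha_3$. For the last inequality I would argue directly from the formula for $|\zeta_3|^2$: because $c_1,c_2,c_3>0$ throughout the closed range, the second term forces $s_3=0$ and the first term forces $s_1=0$ or $s_2=0$; the ordering then collapses this to $\alpha_2=\alpha_3=0$, matching the claim.

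The factorizations are the enabling computational step and are routine to verify. I expect the only genuinely delicate point to be the saturation bookkeeping: each difference is a product of three factors, and one must check, using $\alpha_3\le\alpha_2\le\alpha_1\le\pi/4$, that the auxiliary sum-angle factors can vanish only in degenerate configurations already subsumed by the advertised condition. Ruling out these spurious coincidences is the main (though modest) obstacle.
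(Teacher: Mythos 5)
Your proposal is correct and follows essentially the same route as the paper: the same explicit formulas for $|\zeta_k|^2$, factorization of the consecutive differences, and a case check of the vanishing loci for the saturation conditions. The only cosmetic difference is that you factor each difference further into sines and cosines of angle sums, whereas the paper stops at products like $(c_1^2-s_1^2)(c_2^2c_3^2-s_2^2s_3^2)$ and reads off signs from the ordering $0\le s_3\le s_2\le s_1\le 1/\sqrt{2}\le c_1\le c_2\le c_3$.
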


\begin{lemma}\label{lem:SchmidtCoeffSpecial}
	Suppose $0 \leq \alpha_3 \leq \alpha_2 \leq \alpha_1 \leq \pi/4$. Then  the  four Schmidt coefficients of $U(\alpha_1,\alpha_2,\alpha_3)$  satisfy   $|\zeta_0| > |\zeta_1| = |\zeta_2| = |\zeta_3|>0$ iff $0<\alpha_3=\alpha_2=\alpha_1<\pi/4$.
\end{lemma}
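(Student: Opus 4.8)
The plan is to derive this lemma directly from the detailed saturation conditions already recorded in \lref{lem:SchmidtCoeffOrder}, rather than re-deriving everything from the explicit formulas in \eref{eq:zetak}. The key observation is that the desired spectral pattern $|\zeta_0| > |\zeta_1| = |\zeta_2| = |\zeta_3| > 0$ is nothing but a prescription for which of the four inequalities in the chain \eref{eq:SchmidtCoeffOrder} are switched on (strict) and which are switched off (saturated): the first must be strict, the second and third must both saturate (so that the triple equality $|\zeta_1| = |\zeta_2| = |\zeta_3|$ holds), and the last must be strict (so that $|\zeta_3| > 0$). Since \lref{lem:SchmidtCoeffOrder} states exactly when each individual inequality saturates, the whole lemma reduces to reading off and intersecting four parameter conditions.

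Concretely, I would translate the four requirements one at a time. The strict inequality $|\zeta_0| > |\zeta_1|$ means the first inequality is non-saturated, which by \lref{lem:SchmidtCoeffOrder} is equivalent to $\alpha_1 < \pi/4$. The equality $|\zeta_1| = |\zeta_2|$ saturates the second inequality, equivalent to $\alpha_2 = \alpha_1$. The equality $|\zeta_2| = |\zeta_3|$ saturates the third inequality, whose saturation condition is the disjunction $\alpha_1 = \pi/4$ or $\alpha_3 = \alpha_2$. Finally $|\zeta_3| > 0$ means the last inequality is non-saturated, equivalent to excluding $\alpha_2 = \alpha_3 = 0$, i.e.\ $\alpha_2 > 0$ given the ordering.

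The one place that needs a little care—and the step I would flag as the only real obstacle—is the third inequality, because its saturation condition has two branches. Here I would invoke the fact that the first requirement has already forced $\alpha_1 < \pi/4$, which kills the branch $\alpha_1 = \pi/4$ and leaves only $\alpha_3 = \alpha_2$. Intersecting the four resulting conditions $\alpha_1 < \pi/4$, $\alpha_2 = \alpha_1$, $\alpha_3 = \alpha_2$, and $\alpha_2 > 0$ then collapses to exactly $0 < \alpha_3 = \alpha_2 = \alpha_1 < \pi/4$. Because each translation is an equivalence, this establishes both directions of the claim at once; for the converse one just substitutes $0 < \alpha_3 = \alpha_2 = \alpha_1 < \pi/4$ back into the four criteria and reads off $|\zeta_0| > |\zeta_1|$, $|\zeta_1| = |\zeta_2| = |\zeta_3|$, and $|\zeta_3| > 0$.

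If a self-contained argument were preferred instead, the same conclusion follows by factoring the consecutive squared-modulus differences coming from \eref{eq:zetak}; for instance $|\zeta_0|^2 - |\zeta_1|^2 = \cos 2\alpha_1 \cos(\alpha_2+\alpha_3)\cos(\alpha_2-\alpha_3)$, with the next two differences admitting analogous products of $\cos 2\alpha_k$ and $\sin(\alpha_i \pm \alpha_j)$. On the range \eref{eq:para range} every factor is nonnegative, so the vanishing locus of each difference—hence each saturation condition used above—can be read off immediately. Since this factorization is essentially the computation already underlying \lref{lem:SchmidtCoeffOrder}, invoking that lemma is the economical route and I would present it as the primary proof.
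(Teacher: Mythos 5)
Your proposal is correct and follows essentially the same route as the paper: both directions are obtained by reading off the saturation conditions recorded in \lref{lem:SchmidtCoeffOrder} and intersecting them, with the already-established $\alpha_1<\pi/4$ eliminating the $\alpha_1=\pi/4$ branch of the third condition. Your write-up merely makes explicit a step the paper leaves implicit, so no further comparison is needed.
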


When $\alpha_2=\alpha_1=\pi/4$, all Schmidt coefficients of the unitary operator $U(\alpha_1,\alpha_2,\alpha_3)$ are equal to $1/2$ irrespective of the value of $\alpha_3$ [cf. \eref{eq:zetak}]. Such coincidence can also occur when  $\alpha_1=\pi/4$ and $\alpha_2\leq\pi/4$, 
in which case we have
\begin{equation}\label{eq:zetakSqSpecial}
\begin{aligned}
|\zeta_0|^2=|\zeta_1|^2=\frac{1}{4}[1+\cos(2\alpha_2)\cos(2\alpha_3)],\\ 
|\zeta_2|^2=|\zeta_3|^2=\frac{1}{4}[1-\cos(2\alpha_2)\cos(2\alpha_3)],
\end{aligned}
\end{equation}
so all Schmidt coefficients of  $U(\alpha_1,\alpha_2,\alpha_3)$ are completely determined by the product $\cos(2\alpha_2)\cos(2\alpha_3)$ or any given Schmidt coefficient, as illustrated in \fref{fig:zeta0contour}. 
A specific choice of two inequivalent unitary operators with the same  Schmidt coefficients is shown in \aref{app:eg same coeff}.
 On the other hand, the following lemma shows that such coincidence of Schmidt coefficients cannot occur when $\alpha_1<\pi/4$. 
\begin{lemma}\label{lem:SchmidtCoeffSame}
Suppose $0 \leq \alpha_3 \leq \alpha_2 \leq \alpha_1 \leq \pi/4$ and $0 \leq \alpha_3' \leq \alpha_2' \leq \alpha_1' \leq \pi/4$. Then $U(\alpha_1,\alpha_2,\alpha_3)$ and $U(\alpha_1',\alpha_2',\alpha_3')$ have the same Schmidt coefficients iff one of the following two conditions holds,
\begin{gather}
\alpha_1=\alpha_1',\quad \alpha_2=\alpha_2',\quad \alpha_3=\alpha_3'; \label{eq:SchmidtCoeffSameCon1}\\ 
\alpha_1=\alpha_1'=\frac{\pi}{4},\quad \cos(2\alpha_2)\cos(2\alpha_3)= \cos(2\alpha_2')\cos(2\alpha_3'). \label{eq:SchmidtCoeffSameCon2}
\end{gather}
\end{lemma}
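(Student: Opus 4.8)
The plan is to reduce the whole statement to an elementary question about products of cosines. Writing $c_k := \cos(2\alpha_k)$, I would first convert the moduli in \eref{eq:zetak} into functions of the $c_k$ using $\cos^2\theta = (1+\cos2\theta)/2$ and $\sin^2\theta = (1-\cos2\theta)/2$. A direct expansion collapses all the cross terms and should yield the Hadamard-like system
\begin{gather}
4|\zeta_0|^2 = 1 + c_1 c_2 + c_1 c_3 + c_2 c_3,\\
4|\zeta_1|^2 = 1 - c_1 c_2 - c_1 c_3 + c_2 c_3,\\
4|\zeta_2|^2 = 1 - c_1 c_2 + c_1 c_3 - c_2 c_3,\\
4|\zeta_3|^2 = 1 + c_1 c_2 - c_1 c_3 - c_2 c_3,
\end{gather}
which is consistent with \eref{eq:zetakSqSpecial} upon setting $c_1 = 0$ and with the normalization \eref{eq:SchmidtCoeffNorm} upon summing. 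This is the key step, and I expect the only real work to be verifying that the expansion is as clean as claimed.

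Since the coefficient matrix of this system is, up to scale, a Hadamard matrix, it is trivially invertible, and I would read off
\begin{gather}
c_1 c_2 = |\zeta_0|^2 - |\zeta_1|^2 - |\zeta_2|^2 + |\zeta_3|^2,\\
c_1 c_3 = |\zeta_0|^2 - |\zeta_1|^2 + |\zeta_2|^2 - |\zeta_3|^2,\\
c_2 c_3 = |\zeta_0|^2 + |\zeta_1|^2 - |\zeta_2|^2 - |\zeta_3|^2.
\end{gather}
By \lref{lem:SchmidtCoeffOrder} both unitaries already have their Schmidt coefficients sorted as $|\zeta_0| \ge |\zeta_1| \ge |\zeta_2| \ge |\zeta_3|$, so having the same Schmidt coefficients (as multisets) is equivalent to $|\zeta_k| = |\zeta_k'|$ for every $k$, hence to equality of the three products $(c_1 c_2, c_1 c_3, c_2 c_3)$ for the two triples. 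The problem thus becomes: within the range $1 \ge c_3 \ge c_2 \ge c_1 \ge 0$, the image of \eref{eq:para range} under the injective map $\alpha \mapsto \cos 2\alpha$, when do two triples share the same pairwise products?

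For the final step I would split on whether $c_1 = 0$, i.e.\ $\alpha_1 = \pi/4$. If $c_1 > 0$ then all three $c_k$ are positive, the three products are positive, and the triple is recovered uniquely via $c_1 = \sqrt{(c_1 c_2)(c_1 c_3)/(c_2 c_3)}$ and its analogues; by injectivity of $\alpha \mapsto \cos 2\alpha$ this forces $(\alpha_1,\alpha_2,\alpha_3) = (\alpha_1',\alpha_2',\alpha_3')$, giving \eqref{eq:SchmidtCoeffSameCon1}. One also checks that $c_1 > 0$ forces $c_1' > 0$: if $c_1' = 0$ then $c_1 c_2 = c_1' c_2' = 0$ would force $c_2 = 0 \le c_1$, a contradiction. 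If instead $c_1 = 0$ then $c_1 c_2 = c_1 c_3 = 0$, so only $c_2 c_3$ is constrained; a matching triple must then also satisfy $c_1' = 0$ (again because a positive $c_1'$ cannot produce vanishing products with positive $c_2',c_3'$), and the sole surviving constraint is $c_2 c_3 = c_2' c_3'$, i.e.\ \eqref{eq:SchmidtCoeffSameCon2}. The degenerate case $c_1 = 0$, where the pairwise products no longer determine the individual cosines, is the only subtle point and is exactly the source of the second branch of the lemma. The converse directions are immediate: \eqref{eq:SchmidtCoeffSameCon1} trivially gives equal coefficients, and \eqref{eq:SchmidtCoeffSameCon2} does so through \eref{eq:zetakSqSpecial}, which expresses all four coefficients in terms of $\cos(2\alpha_2)\cos(2\alpha_3)$ alone.
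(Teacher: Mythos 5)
Your proposal is correct and follows essentially the same route as the paper: both reduce equality of the Schmidt coefficients to equality of the three pairwise products $\cos(2\alpha_i)\cos(2\alpha_j)$ [the paper via its Eq.~\eqref{eq:dk_properties}, you via the inverted Hadamard system, which is a cleaner packaging and incidentally reveals a sign typo in the first line of the paper's Eq.~\eqref{eq:dk_properties}, whose correct form is $|\zeta_0|^2+|\zeta_3|^2=\frac12(1+C_1C_2)$], and then both split on whether $\alpha_1=\pi/4$, recovering the cosines uniquely when all products are positive and isolating $\cos(2\alpha_2)\cos(2\alpha_3)$ as the only surviving invariant otherwise. No gaps.
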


\begin{corollary}\label{cor:SchmidtCoeffSame}
Suppose $U$ and $U'$ are two two-qubit unitary operators that have the same Schmidt coefficients $s_0,s_1,s_2,s_3$, which satisfy $s_0 > s_1 \ge s_2 \ge s_3$. Then $U'$ is equivalent to either $U$ or $U^*$ under local unitary transformations. In other words, $U'$ can be expressed as
\begin{equation}
U' = V_\rmA \otimes W_\rmB \tilde{U} \tilde{V}_\rmA \otimes \tilde{W}_\rmB,
\end{equation}
where $\tilde{U}=U$ or $U^*$, and $V_\rmA,W_\rmB,\tilde{V}_\rmA,\tilde{W}_\rmB$ are suitable qubit unitary operators.
\end{corollary}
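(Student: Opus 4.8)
The plan is to reduce both unitaries to the canonical form recalled in \sref{sec:CanonicalForm} and then read off the conclusion from the classification of Schmidt coefficients in \lref{lem:SchmidtCoeffOrder} and \lref{lem:SchmidtCoeffSame}. First I would invoke the canonical-form result: each of $U$ and $U'$ is equivalent under local unitary transformations to $U(\vec\alpha)$ or $U^*(\vec\alpha)$ with parameters in the standard range \eqref{eq:para range}. Since Schmidt coefficients are invariant under both local unitaries and complex conjugation, I may assume $U$ carries the Schmidt coefficients of some $U(\alpha_1,\alpha_2,\alpha_3)$ and $U'$ those of some $U(\alpha_1',\alpha_2',\alpha_3')$, both triples lying in the standard range and, by hypothesis, yielding the same coefficients $s_0,s_1,s_2,s_3$.

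Next I would exploit the \emph{strict} inequality $s_0>s_1$, which is the crucial assumption. By \lref{lem:SchmidtCoeffOrder} the top inequality $|\zeta_0|\ge|\zeta_1|$ saturates precisely when the first angle equals $\pi/4$; hence $s_0>s_1$ forces $\alpha_1<\pi/4$ and $\alpha_1'<\pi/4$. I would then apply \lref{lem:SchmidtCoeffSame}: since $U(\alpha_1,\alpha_2,\alpha_3)$ and $U(\alpha_1',\alpha_2',\alpha_3')$ share the same Schmidt coefficients, one of the conditions \eqref{eq:SchmidtCoeffSameCon1} and \eqref{eq:SchmidtCoeffSameCon2} must hold. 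Condition \eqref{eq:SchmidtCoeffSameCon2} requires $\alpha_1=\alpha_1'=\pi/4$ and is ruled out by the previous step, so condition \eqref{eq:SchmidtCoeffSameCon1} holds, i.e.\ $\alpha_1=\alpha_1'$, $\alpha_2=\alpha_2'$, $\alpha_3=\alpha_3'$. Thus the two canonical forms are literally identical.

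Finally I would unwind the reductions. Writing $\vec\alpha$ for the common triple, $U$ is equivalent to $U(\vec\alpha)$ or $U^*(\vec\alpha)$, and likewise $U'$; here I use that local-unitary equivalence is preserved under complex conjugation, since conjugating $(V_\rmA\otimes W_\rmB)\tilde U(\tilde V_\rmA\otimes \tilde W_\rmB)$ merely conjugates each local factor. A short check over the four sign choices then shows that $U'$ is equivalent to $U$ when the two reductions agree on whether a conjugation was used, and to $U^*$ otherwise, which is exactly the asserted form. I expect this last bookkeeping over $U$ versus $U^*$ to be the only delicate point; the real mathematical content is carried entirely by the sharpness of the saturation condition in \lref{lem:SchmidtCoeffOrder}, which is what allows the strict hypothesis $s_0>s_1$ to eliminate the exceptional family \eqref{eq:SchmidtCoeffSameCon2} where distinct parameters can share Schmidt coefficients.
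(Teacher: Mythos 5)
Your proposal is correct and follows essentially the same route as the paper's proof: reduce both operators to the canonical form $U(\vec\alpha)$ or $U^*(\vec\alpha)$, use the strict inequality $s_0>s_1$ together with the saturation condition in \lref{lem:SchmidtCoeffOrder} to force $\alpha_1,\alpha_1'<\pi/4$, invoke \lref{lem:SchmidtCoeffSame} to rule out the exceptional family \eqref{eq:SchmidtCoeffSameCon2} and conclude that the parameters coincide. Your extra bookkeeping over the four $U$-versus-$U^*$ sign choices is a valid elaboration of a step the paper leaves implicit.
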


\begin{figure}[t]
	\centering
	\includegraphics[scale=0.17]{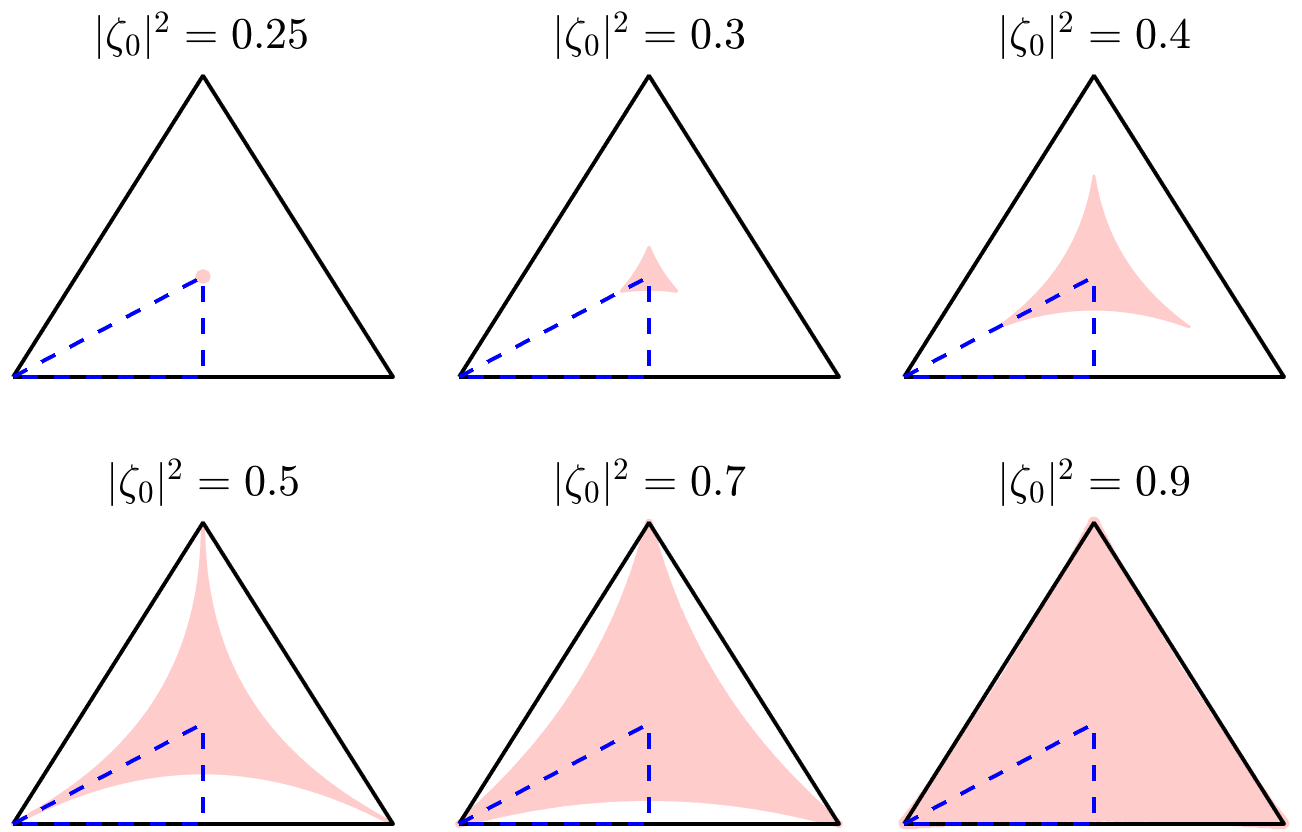}
	\caption{Accessible Schmidt coefficients of two-qubit unitaries $U(\alpha_1,\alpha_2,\alpha_3)$ for the parameter range $0 \leq \alpha_3,  \alpha_2, \alpha_1 \leq \pi/4$. The red-shaded region in each ternary diagram represents the set of  accessible points specified by the barycentric coordinate $(\xi_1,\xi_2,\xi_3)=(|\zeta_1|^2, |\zeta_2|^2, |\zeta_3|^2)/(1-|\zeta_0|^2)$, where $|\zeta_0|$ is the largest Schmidt coefficient, and $|\zeta_1|, |\zeta_2|, |\zeta_3|$ are the other three  Schmidt coefficients; cf. \eref{eq:zetak}. The left, right, and top corners of the big black triangle correspond to the coordinates $(1,0,0)$, $(0,1,0)$, and $(0,0,1)$, respectively. 
		The shaded region within each blue dashed triangle represents the set of accessible points for the smaller parameter range $0 \leq \alpha_3 \leq \alpha_2 \leq \alpha_1 \leq \pi/4$, in which case $|\zeta_1|$, $|\zeta_2|$, $|\zeta_3|$ are in nonincreasing order.}
	\label{fig:triangle_dis}
\end{figure}

The above analysis clarifies the properties of Schmidt coefficients of two-qubit unitary operators. Given the assumption $0 \leq \alpha_3 \leq \alpha_2 \leq \alpha_1 \leq \pi/4$, the Schmidt coefficients of $U(\alpha_1,\alpha_2,\alpha_3)$ must satisfy the conditions in  \esref{eq:SchmidtCoeffNorm} and \eqref{eq:SchmidtCoeffOrder}.
 However, the two conditions are not enough to guarantee the existence of a two-qubit unitary with a given set of Schmidt coefficients. 
To demonstrate this point, we can  determine the ranges of the four Schmidt coefficients of  $U(\alpha_1,\alpha_2,\alpha_3)$ by virtue of  \eref{eq:zetak}, with the result
\begin{equation}
\begin{gathered}
\frac{1}{2}\leq |\zeta_0| \leq 1, \quad 0\leq |\zeta_1|\leq \frac{1}{\sqrt{2}},\\
0\leq |\zeta_2|\leq \frac{1}{2}, \quad 0\leq |\zeta_3|\leq \frac{1}{2}. 	
\end{gathered}
\end{equation}
By contrast, the constraints in  \esref{eq:SchmidtCoeffNorm} and \eqref{eq:SchmidtCoeffOrder} alone would imply that $0\leq |\zeta_2|\leq 1/\sqrt{3}$.

To further clarify the constraints  on the Schmidt coefficients of two-qubit unitaries, it is convenient to introduce some additional variables. 
Let 
\begin{align}\label{eq:xi}
\xi_j=\frac{|\zeta_j|^2}{1-|\zeta_0|^2},\quad j=1,2,3.
\end{align}
Geometrically, $(|\zeta_0|^2, |\zeta_1|^2,  |\zeta_2|^2, |\zeta_3|^2)$ can be regarded as the barycentric coordinate of a point in a three-dimensional probability simplex according to \eref{eq:SchmidtCoeffNorm}. The accessible Schmidt coefficients correspond to a subset in the probability simplex. 
In addition, when $|\zeta_0|<1$,  $(\xi_1, \xi_2, \xi_3)$ is the barycentric coordinate of a point in a two-dimensional probability simplex, which corresponds to a normalized cross section of the three-dimensional probability simplex.

\Fref{fig:triangle_dis} illustrates the accessible region of Schmidt coefficients 
for six normalized cross sections associated with six distinct values of $|\zeta_0|$, where $|\zeta_0|$ is the largest Schmidt coefficient.
The shaded region within each blue dashed triangle represents 
the set of accessible ordered Schmidt coefficients as determined by $(\xi_1, \xi_2, \xi_3)$
for the parameter range $0 \leq \alpha_3 \leq \alpha_2 \leq \alpha_1 \leq \pi/4$. 
By contrast, the whole red-shaded region in each ternary diagram represents the set of accessible Schmidt coefficients for the larger parameter range $0 \leq \alpha_3,  \alpha_2, \alpha_1 \leq \pi/4$. In the latter case,    \eref{eq:SchmidtCoeffOrder} no longer applies, but we have 
\begin{equation}
|\zeta_0|\geq |\zeta_j| \quad j=1,2,3,
\end{equation}
so $|\zeta_0|$ is still the largest Schmidt coefficient.

\section{\label{sec:VTwoQuibtU}Verification of  two-qubit unitaries with minimal settings}

\subsection{Product-state constraint}
To construct a minimal-setting  protocol for verifying  the two-qubit unitary $U(\alpha_1,\alpha_2,\alpha_3)$, we first need to
clarify the product-state constraint, which is tied to  the set $\mathrm{Prod}(U)$ defined in \eref{eq:ProdU}.

To better understand the product-state constraint, it is instructive to consider the magic basis \cite{PhysRevLett.78.5022}, which is composed of the four maximally entangled states
\begin{equation}
\begin{aligned}
|\Phi_1 \rangle =\frac{1}{\sqrt{2}} (|00\rangle + |11\rangle), \;\;
|\Phi_2 \rangle =\frac{\mathrm{i}}{\sqrt{2}} (|00\rangle - |11\rangle),\\
|\Phi_3 \rangle =\frac{\mathrm{i}}{\sqrt{2}} (|01\rangle + |10\rangle), \; \; 
|\Phi_4 \rangle =\frac{1}{\sqrt{2}} (|01\rangle - |10\rangle).
\end{aligned}
\end{equation}
Suppose the input state $|\phi_0\rangle$ employed  has the form
$
|\phi_0\rangle = \sum_{k=1}^4 \gamma_k |\Phi_k \rangle$ with $\sum_{k=1}^4 |\gamma_k|^2 =1$. 
Then the concurrence \cite{PhysRevLett.78.5022} of the input state reads 
\begin{equation}
C(|\phi_0 \>)= \left| \sum_{k=1}^4 \gamma_k^2 \right|.
\end{equation}
After the action of  $U(\alpha_1,\alpha_2,\alpha_3)$, the output state has the expansion 
\begin{equation}
  |\phi \> = \sum_{k=1}^4 \rme^{-\rmi \lambda_k} \gamma_k |\Phi_k \> ,
\end{equation}
where
\begin{equation}
  \begin{aligned}
  &\lambda_1 = \alpha_1-\alpha_2+\alpha_3,\\
  &\lambda_2 = -\alpha_1+\alpha_2+\alpha_3,\\
  &\lambda_3 =\alpha_1+\alpha_2-\alpha_3,\\
  &\lambda_4 = -\alpha_1-\alpha_2-\alpha_3.
  \end{aligned}
\end{equation}
The concurrence of the output state reads
\begin{equation}
C(|\phi \>)= \left| \sum_{k=1}^4 \rme^{-2\rmi \lambda_k}\gamma_k^2 \right|.
\end{equation}

The product-state constraint demands $C(|\phi_0 \>)=0$ and $C(|\phi \>)=0$:
\begin{align}\label{eq:PSCmagic}
\sum_{k=1}^4 \gamma_k^2=0,\quad \sum_{k=1}^4 \rme^{-2\rmi \lambda_k}\gamma_k^2=0.
\end{align}
When $0 < \alpha_1+\alpha_2 < \pi/2$, \eref{eq:PSCmagic} is equivalent to the following equations:
\begin{equation}\label{eq:product-state-constraint}
\begin{aligned}
\gamma_3^2 &= r_{31} \gamma_1^2 + r_{32} \gamma_2^2,\quad 
\gamma_4^2 = r_{41} \gamma_1^2 + r_{42} \gamma_2^2, \\
r_{31}
&=\exp[\mathrm{i} (2\alpha_2-2\alpha_3+\pi)] \frac{\sin(2\alpha_1+2\alpha_3)}{\sin(2\alpha_1+2\alpha_2)},\\
r_{32}
&=\exp[\mathrm{i} (2\alpha_1-2\alpha_3+\pi)] \frac{\sin(2\alpha_2+2\alpha_3)}{\sin(2\alpha_1+2\alpha_2)},\\
r_{41}
&=\exp[\mathrm{i} (-2\alpha_1-2\alpha_3+\pi)] \frac{\sin(2\alpha_2-2\alpha_3)}{\sin(2\alpha_1+2\alpha_2)},\\
r_{42}
&=\exp[\mathrm{i} (-2\alpha_2-2\alpha_3+\pi)] \frac{\sin(2\alpha_1-2\alpha_3)}{\sin(2\alpha_1+2\alpha_2)}. 
\end{aligned}
\end{equation}
If  the product-state constraint holds, then $\gamma_3^2$ and $\gamma_4^2$ are completely determined by $\gamma_1$ and $\gamma_2$. Taking into account the normalization condition $\sum_{k=1}^4 |\gamma_k|^2 =1$ and ignoring the overall phase factors, we can deduce that there are in general two free real parameters. 

When $\alpha_1+\alpha_2=0$ or $\alpha_1+\alpha_2 = \pi/2$, \eref{eq:product-state-constraint} does not apply, in which case it is more convenient to consider the product-state constraint in the computational basis. Now any  two-qubit pure product state can be expressed as
\begin{equation}
|\phi_0 \rangle = 
\begin{pmatrix}{}
a_1 \\
a_2
\end{pmatrix}
\otimes
\begin{pmatrix}
b_1 \\
b_2
\end{pmatrix}=
\begin{pmatrix}
a_1 b_1 \\
a_1 b_2 \\
a_2 b_1 \\
a_2 b_2
\end{pmatrix}.
\end{equation}
After the action of  $U(\alpha_1,\alpha_2,\alpha_3)$, the output state reads
\begin{equation}\label{eq:outputStateCB}
|\phi \rangle = U |\phi_0 \rangle = 
\begin{pmatrix}
c_1 \\
c_2 \\
c_3 \\
c_4
\end{pmatrix},
\end{equation}
where
\begin{equation}
\begin{aligned}
c_1 = (\zeta_0+\zeta_3)a_1 b_1+(\zeta_1-\zeta_2)a_2 b_2, \\
c_2 = (\zeta_0-\zeta_3)a_1 b_2+(\zeta_1+\zeta_2)a_2 b_1, \\
c_3 = (\zeta_0-\zeta_3)a_2 b_1+(\zeta_1+\zeta_2)a_1 b_2, \\
c_4 = (\zeta_0+\zeta_3)a_2 b_2+(\zeta_1-\zeta_2)a_1 b_1, 
\end{aligned}
\end{equation}
and $\zeta_k$ for $k=0,1,2,3$ are defined in \eref{eq:zetak}. According to \rcite{PhysRevLett.78.5022},
the concurrence $C$ of the output state reads
\begin{align}\label{eq:concurrence}
C(|\phi \>) =  2 | c_1c_4-c_2c_3 |.
\end{align}
To satisfy the product-state constraint, the concurrence  $C(|\phi \>)$ should vanish, which means
\begin{equation}\label{eq:PSCstandard}
  c_1c_4-c_2c_3 = 0.
\end{equation}

\subsection{Minimal-setting and entanglement-free verification of two-qubit unitaries}

In this section we determine the minimum number of experimental settings required to verify an arbitrary two-qubit unitary and derive a simple criterion for determining  whether a general two-qubit unitary can be verified by an entanglement-free  protocol. Our main result is summarized in the following theorem. 
\begin{theorem}\label{thm:MinStwoqubitG}
	Suppose $U$ is a two-qubit unitary operator with Schmidt coefficients $s_0, s_1, s_2, s_3$	 arranged in nonincreasing order. Then 
	\begin{equation}
	\mu(U)=\begin{cases}
	5 & \mbox{if } s_0>s_1=s_2=s_3>0,\\
	4&\mbox{otherwise};
	\end{cases}
	\end{equation}
in addition, the unitary operator $U$ can be verified by an entanglement-free protocol unless $s_0>s_1=s_2=s_3>0$. 
\end{theorem}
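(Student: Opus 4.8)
The plan is to pin down $\mu(U)$ by computing $d_\mathrm{Prod}(U)$ together with the connectivity of $\mathrm{Prod}(U)$, and then invoking \pref{pro:muUdprod} and \thref{thm:minS-entFree}. Since $\mu(U)$, the Schmidt coefficients, and the set $\mathrm{Prod}(U)$ are all invariant under local unitary transformations and complex conjugation, I would first reduce to the canonical form $U=U(\alpha_1,\alpha_2,\alpha_3)$ with $0\le\alpha_3\le\alpha_2\le\alpha_1\le\pi/4$. By \lref{lem:SchmidtCoeffSpecial} the exceptional condition $s_0>s_1=s_2=s_3>0$ is equivalent to $0<\alpha_3=\alpha_2=\alpha_1<\pi/4$, so the theorem reduces to showing that $d_\mathrm{Prod}(U)=3$ in this case and $d_\mathrm{Prod}(U)=4$ with $\mathrm{Prod}(U)$ connected in every other case. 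The entanglement-free claim then follows for free from \thref{thm:minS-entFree}, which identifies $\mu(U)=4$ with the existence of an entanglement-free protocol.

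The core step is the dimension count. In the generic regime $0<\alpha_1+\alpha_2<\pi/2$ I would work in the magic basis and use \eref{eq:product-state-constraint}: for each fixed $(\gamma_1,\gamma_2)$ the constraint fixes $\gamma_3^2,\gamma_4^2$, hence $\gamma_3,\gamma_4$ up to sign, producing four states in $\mathrm{Prod}(U)$. Taking linear combinations of these four sign variants isolates the magic basis vectors $|\Phi_3\>$ and $|\Phi_4\>$ in the span whenever the relevant coefficients $r_{3j},r_{4j}$ do not all vanish; letting $(\gamma_1,\gamma_2)$ be $(1,0)$ and $(0,1)$ also recovers $|\Phi_1\>$ and $|\Phi_2\>$. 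A short inspection of the explicit formulas shows $r_{31},r_{32}\neq0$ throughout this regime, so $|\Phi_1\>,|\Phi_2\>,|\Phi_3\>$ always lie in the span, whereas $|\Phi_4\>$ lies in the span iff $(r_{41},r_{42})\neq(0,0)$, i.e.\ iff $U$ is not the exceptional unitary with $\alpha_1=\alpha_2=\alpha_3$. This yields $d_\mathrm{Prod}(U)=3$ in the exceptional case and $d_\mathrm{Prod}(U)=4$ otherwise. The two boundary values $\alpha_1+\alpha_2=0$ (the product unitaries, where $\mathrm{Prod}(U)=\mathrm{Prod}$) and $\alpha_1+\alpha_2=\pi/2$ (i.e.\ $\alpha_1=\alpha_2=\pi/4$) fall outside \eref{eq:product-state-constraint}; there I would instead use the computational-basis constraint \eref{eq:PSCstandard}, which collapses to $\cos(2\alpha_3)\,a_1a_2b_1b_2=0$ and makes $d_\mathrm{Prod}(U)=4$ transparent.

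With the dimension in hand the exceptional case is immediate: $d_\mathrm{Prod}(U)=3<4$, so \pref{pro:muUdprod} gives $\mu(U)=3+2(4-3)=5$, and \thref{thm:minS-entFree} rules out any entanglement-free protocol. For every non-exceptional $U$ I have $d_\mathrm{Prod}(U)=4$, and it remains only to verify that $\mathrm{Prod}(U)$ is connected, after which \pref{pro:muUdprod} gives $\mu(U)=4$ and \thref{thm:minS-entFree} supplies the entanglement-free protocol.

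I expect the connectivity to be the main obstacle, because the naive candidate bases built from a few magic basis vectors are typically mutually orthogonal and hence yield a disconnected transition graph. I would handle it by exhibiting, in each parameter stratum, four states of $\mathrm{Prod}(U)$ that form a connected basis, i.e.\ an EFMIS: fixing a generic $(\gamma_1,\gamma_2)$, varying the signs of $\gamma_3,\gamma_4$, and adjoining one further generic choice produces states with pairwise nonzero overlaps. A more uniform alternative is to note that $\mathrm{Prod}(U)$ is the zero locus of the bidegree-$(2,2)$ form $c_1c_4-c_2c_3$ in $\mathbb{CP}^1\times\mathbb{CP}^1$; whenever this curve is connected it is path-connected, and since neighboring states along any continuous path overlap, a finite connected spanning subset can be extracted. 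The residual work is to confirm that this curve never splits into mutually orthogonal pieces outside the exceptional case, for instance by checking that its components always intersect, as happens at the boundary $\alpha_1=\alpha_2=\pi/4$ where the locus degenerates into the four pairwise-meeting lines $a_1a_2b_1b_2=0$.
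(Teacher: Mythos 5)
Your overall architecture matches the paper's: reduce to the canonical form, use \lref{lem:SchmidtCoeffSpecial} to identify the exceptional case with $0<\alpha_1=\alpha_2=\alpha_3<\pi/4$, show $d_{\mathrm{Prod}}(U)=3$ there and invoke \pref{pro:muUdprod} to get $\mu(U)=5$ (with \thref{thm:minS-entFree} excluding entanglement-free protocols), and otherwise exhibit a connected spanning subset of $\mathrm{Prod}(U)$. The paper does exactly this, except that instead of arguing via $d_{\mathrm{Prod}}(U)=4$ plus connectivity of all of $\mathrm{Prod}(U)$, it directly writes down a four-element EFMIS in each of three parameter strata ($\alpha_1=\alpha_2=\pi/4$; the identity; and $\alpha_1>\alpha_3$, $\alpha_2<\pi/4$) and verifies spanning and connectivity by an explicit Gram-matrix computation.

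The genuine gap is precisely the step you flag as ``the main obstacle'' and then do not close. First, your candidate family --- fix $(\gamma_1,\gamma_2)$ and take the four sign variants $(\gamma_1,\gamma_2,\pm\gamma_3,\pm\gamma_4)$ --- spans only a three-dimensional subspace (all four vectors share the same first two components), so it cannot itself be a basis; the paper's choice in \eref{eq:case3-input} flips signs across $\gamma_1,\gamma_2,\gamma_4$ in a pattern chosen so that the Gram determinant is $64|\gamma_1\gamma_2\gamma_3\gamma_4|^2\neq0$. Second, even after fixing independence, pairwise nonorthogonality is not automatic: overlaps such as $\<\phi_1|\phi_2\>=1-2|\gamma_2|^2$ can vanish for specific parameter values, and ruling this out requires the inequalities $g_1,g_2,g_3,g_4\neq0$ proved in \aref{app:case3-EFMIS} using the hypotheses $\alpha_1>\alpha_3$ and $\alpha_2<\pi/4$; nothing in your sketch performs this check. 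Your alternative algebro-geometric route (connectivity of the bidegree-$(2,2)$ curve in $\mathbb{CP}^1\times\mathbb{CP}^1$) likewise ends with ``the residual work is to confirm that this curve never splits into mutually orthogonal pieces,'' which is the whole point at issue. A smaller inaccuracy: $r_{32}\propto\sin(2\alpha_2+2\alpha_3)$ does vanish inside the regime $0<\alpha_1+\alpha_2<\pi/2$ (e.g.\ when $\alpha_2=\alpha_3=0$), so the claim that $r_{31},r_{32}\neq0$ throughout is false, although the spanning conclusion survives because $r_{31}\neq0$ suffices to isolate $|\Phi_3\>$. In short, the plan is the right one and agrees with the paper's, but the quantitative verification that carries the non-exceptional case is missing.
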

\Thref{thm:MinStwoqubitG} is a corollary of \lref{lem:SchmidtCoeffSpecial} in \sref{sec:Schmidt} and   \thref{thm:MinStwoqubit} below. 
Define
\begin{align}\label{eq:ent-free area}
\mathcal{S}:=&\Bigl\{  (\alpha_1,\alpha_2,\alpha_3)\Big| 0 \leq \alpha_3 \leq \alpha_2 \leq \alpha_1 \leq \frac{\pi}{4} \Bigr\},\\
\caS_\rmE:=&\Bigl\{ (\alpha,\alpha,\alpha) \Big| 0 < \alpha< \frac{\pi}{4} \Bigr\},\quad
\caS_\mathrm{EF}:=\caS\setminus \caS_\rmE.
\end{align}

\begin{theorem}\label{thm:MinStwoqubit}
Suppose $0 \leq \alpha_3 \leq \alpha_2 \leq \alpha_1 \leq \pi/4$. Then 
\begin{align}\label{eq:MinStwoqubit}
\mu(U(\alpha_1,\alpha_2,\alpha_3))=\begin{cases}
4 & \mbox{if } (\alpha_1,\alpha_2,\alpha_3) \in \caS_\mathrm{EF},\\
5 & \mbox{if } (\alpha_1,\alpha_2,\alpha_3) \in \caS_\rmE. 
\end{cases}
\end{align}
$U(\alpha_1,\alpha_2,\alpha_3)$ can be verified by an entanglement-free protocol iff $(\alpha_1,\alpha_2,\alpha_3) \in \caS_\mathrm{EF}$. 
\end{theorem}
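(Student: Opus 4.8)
The plan is to reduce the entire statement to the computation of $d_\mathrm{Prod}(U)$ together with the connectedness of $\mathrm{Prod}(U)$, and then read off both $\mu(U)$ and the entanglement-free criterion from \pref{pro:muUdprod} and \thref{thm:minS-entFree}. Since $d=4$ here, \pref{pro:muUdprod} says that $\mu(U)=4$ precisely when $d_\mathrm{Prod}(U)=4$ and $\mathrm{Prod}(U)$ is connected, whereas any deviation (either $d_\mathrm{Prod}(U)=3$, giving $\mu=3+2=5$, or $d_\mathrm{Prod}(U)=4$ with $\mathrm{Prod}(U)$ disconnected, giving $\mu=5$) raises the count to $5$. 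By \thref{thm:minS-entFree}, $\mu(U)=4$ is moreover equivalent to entanglement-free verifiability, so the two assertions of the theorem collapse into one. Recalling \lref{lem:SchmidtCoeffSpecial}, $\caS_\rmE$ is exactly the locus where $|\zeta_0|>|\zeta_1|=|\zeta_2|=|\zeta_3|>0$, so the dichotomy to establish is: $d_\mathrm{Prod}(U)=3$ on $\caS_\rmE$, and $\mathrm{Prod}(U)$ is a connected spanning set on $\caS_\mathrm{EF}$.

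For $\caS_\rmE$ I would work in the magic basis. Setting $\alpha_1=\alpha_2=\alpha_3=\alpha$ collapses the phases to $\lambda_1=\lambda_2=\lambda_3=\alpha$ and $\lambda_4=-3\alpha$, so the two constraints in \eqref{eq:PSCmagic} read $\sum_{k=1}^4\gamma_k^2=0$ and $\rme^{-2\rmi\alpha}(\gamma_1^2+\gamma_2^2+\gamma_3^2)+\rme^{6\rmi\alpha}\gamma_4^2=0$. Eliminating $\gamma_1^2+\gamma_2^2+\gamma_3^2=-\gamma_4^2$ gives $(\rme^{6\rmi\alpha}-\rme^{-2\rmi\alpha})\gamma_4^2=0$; since $0<\alpha<\pi/4$ means $0<8\alpha<2\pi$, the prefactor is nonzero and so $\gamma_4=0$. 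Hence $\mathrm{Prod}(U)\subseteq\spa(|\Phi_1\>,|\Phi_2\>,|\Phi_3\>)$, giving $d_\mathrm{Prod}(U)\le 3$, while the product states with magic-basis coordinates $(1,\rmi,0,0)$, $(1,0,\rmi,0)$, $(0,1,\rmi,0)$ (each satisfying $\sum\gamma_k^2=0$ and $\gamma_4=0$) are linearly independent, so $d_\mathrm{Prod}(U)=3$. Then \pref{pro:muUdprod} yields $\mu(U)=3+2(4-3)=5$, and \crref{cor:minS-entFree} rules out any entanglement-free protocol because $d_\mathrm{Prod}(U)<d$.

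For $\caS_\mathrm{EF}$ the goal is to exhibit, for every admissible $(\alpha_1,\alpha_2,\alpha_3)\notin\caS_\rmE$, four product states in $\mathrm{Prod}(U)$ that are linearly independent with connected transition graph, i.e. a connected basis, so that condition~$3$ of \thref{thm:minS-entFree} delivers $\mu(U)=4$ and entanglement-free verifiability at once. I would organize this by the degeneracy structure. In the generic stratum $0<\alpha_1+\alpha_2<\pi/2$ (angles not all equal) I use the solved constraint \eqref{eq:product-state-constraint}: taking $(\gamma_1,\gamma_2)=(1,0),(0,1),(1,1)$ with suitable branch choices for the square roots produces a full-rank set, the key point being that $\sqrt{r_{31}+r_{32}}\neq\sqrt{r_{31}}+\sqrt{r_{32}}$ forces the mixed solution out of the span of the two pure ones, while flipping the sign of $\gamma_4$ supplies the $|\Phi_4\>$ direction. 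The boundary stratum $\alpha_1=\alpha_2=\pi/4$ splits on whether $\cos(2\alpha_3)$ vanishes: for $\alpha_3<\pi/4$ the constraints decouple into $\gamma_1^2+\gamma_2^2=0$ and $\gamma_3^2+\gamma_4^2=0$, whereas $\alpha_3=\pi/4$ makes the output constraint automatic so that $\mathrm{Prod}(U)=\mathrm{Prod}$. The low-rank stratum $\alpha_2=\alpha_3=0$ with $\alpha_1>0$ is the diagonal C-Phase gate, for which $|00\>,|0{+}\>,|{+}0\>,|1{+}\>$ is a connected basis of product-preserving states, and the origin gives $U$ local with $\mathrm{Prod}(U)=\mathrm{Prod}$.

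The main obstacle is this affirmative direction on $\caS_\mathrm{EF}$: one must verify not only that the constructed states span (carefully tracking vanishing factors such as $r_{41}\propto\sin(2\alpha_2-2\alpha_3)$ when $\alpha_2=\alpha_3$) but also that they are connected in the transition-graph sense, and the latter is genuinely delicate because the most symmetric choices tend to be mutually orthogonal (as for $(\pi/4,\pi/4,\alpha_3)$), so the basis must retain nonzero overlaps. I would therefore handle connectedness uniformly by showing $\mathrm{Prod}(U)$ is itself connected, so that \lref{lem:SpanBasis} extracts a connected basis from any spanning subset; this reduces the issue to checking that the solution family of \eqref{eq:product-state-constraint} does not fracture into mutually orthogonal pieces. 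Verifying spanning and connectedness across the degenerate sub-strata is the bulk of the work, while the $\caS_\rmE$ side is short once the magic-basis collapse is observed.
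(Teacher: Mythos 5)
Your overall architecture coincides with the paper's: reduce everything to $d_\mathrm{Prod}(U)$ and the connectivity of $\mathrm{Prod}(U)$ via \pref{pro:muUdprod} and \thref{thm:minS-entFree}, dispose of $\caS_\rmE$ by showing the magic-basis constraint forces $\gamma_4=0$ (hence $d_\mathrm{Prod}(U)=3$ and $\mu(U)=5$ with no entanglement-free protocol), and handle $\caS_\mathrm{EF}$ by exhibiting an EFMIS stratum by stratum. Your $\caS_\rmE$ argument is complete and correct, and your treatment of the boundary strata $\alpha_1=\alpha_2=\pi/4$ and $\alpha_1=\alpha_2=\alpha_3=0$ agrees with the paper's cases 1 and 2.

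The gap is the generic stratum, which is where essentially all of the work lies and which you yourself flag as "the bulk of the work." You propose the seeds $(\gamma_1,\gamma_2)=(1,0),(0,1),(1,1)$ plus one sign flip of $\gamma_4$, but you verify neither (i) linear independence nor (ii) connectedness of the transition graph. For (i), your "key point" $\sqrt{r_{31}+r_{32}}\neq\sqrt{r_{31}}+\sqrt{r_{32}}$ is itself an unproven inequality, and it degenerates on sub-strata where coefficients of \eref{eq:product-state-constraint} vanish: e.g.\ $r_{41}=0$ whenever $\alpha_2=\alpha_3$, and $r_{32}=r_{41}=0$ when $\alpha_2=\alpha_3=0$, in which case the seed $(1,0)$ has $\gamma_4=0$ and flipping its $\gamma_4$ sign produces no new state, so the fourth direction is not automatically supplied. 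For (ii), your fallback — proving that $\mathrm{Prod}(U)$ is itself connected on all of $\caS_\mathrm{EF}$ — is a substantive claim that you assert but do not establish, and your own observation that natural choices at $(\pi/4,\pi/4,\alpha_3)$ are mutually orthogonal shows it cannot be waved through. The paper closes exactly this hole by a single explicit construction (\aref{app:case3-EFMIS}): one normalized solution of the product-state constraint and its three sign-flip images, whose Gram matrix is computed in closed form with determinant $64|\gamma_1\gamma_2\gamma_3\gamma_4|^2\neq0$ (spanning) and whose off-diagonal entries $g_1,g_2,g_3,g_4$ are shown nonzero by a contradiction argument using $\alpha_1>\alpha_3$ and $\alpha_2<\pi/4$ (connectedness). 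Until you supply an analogous verification for your candidate states, the $\mu(U)=4$ half of the theorem is not proved.
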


\begin{proof}
To prove \thref{thm:MinStwoqubit}, it suffices to prove \eref{eq:MinStwoqubit}, which implies 
the last statement in the theorem according to \thref{thm:minS-entFree}. To prove \eref{eq:MinStwoqubit}, we shall first construct 
a four-setting entanglement-free protocol for verifying $U(\alpha_1,\alpha_2,\alpha_3)$ when $(\alpha_1,\alpha_2,\alpha_3) \in \caS_\mathrm{EF}$. To this end we need to consider three different cases and construct an  EFMIS in each case (cf. \thref{thm:minS-entFree}).  
\begin{enumerate}
		
		\item[1.] $\alpha_1=\alpha_2=\pi/4$
		
		In this case,  according to  	\esref{eq:outputStateCB}-\eqref{eq:PSCstandard}, the product-state constraint under the computational basis reads 
		\begin{equation}
		a_1 a_2 b_1 b_2 \cos(2\alpha_3) =0. 
		\end{equation}
	In addition, $|\zeta_0|=|\zeta_1|=|\zeta_2|=|\zeta_3|=1/2$ according to \eref{eq:zetak}. So a pure product state satisfies the product-state constraint if one of the reduced states is an eigenstate of $\sigma_3$. Based on this observation we can construct an EFMIS  as follows:
  \begin{equation}\label{eq:1st-case MIS}
  \begin{aligned}
    &|\phi_1\>=|0{+}\>,\quad
    |\phi_2\>=|1{+}\>,\\ 
    &|\phi_3\>=|{-}0\>,\quad 
    |\phi_4\>=|{+}0\>,
  \end{aligned}
  \end{equation}
where $|\pm\>=\frac{1}{\sqrt{2}}(|0\>\pm |1\>)$ are the two eigenstates of $\sigma_1$.
Note that these product states 	remain as product states after the action of 	$U(\alpha_1,\alpha_2,\alpha_3)$ as expected. In addition, the transition graph of these states is connected. Therefore, 
$U(\alpha_1,\alpha_2,\alpha_3)$ can be  verified by an entanglement-free protocol based on four experimental settings, which confirms \eref{eq:MinStwoqubit}.

		\item[2.] $\alpha_1=\alpha_2=\alpha_3=0$
		
		In this case, $U(\alpha_1,\alpha_2,\alpha_3)$ is equal to the identity, so all product states satisfy the product-state constraint, and it is easy to construct an EFMIS. Actually, the EFMIS constructed in case 1 still works. Therefore,
	$U(\alpha_1,\alpha_2,\alpha_3)$ can be  verified by an entanglement-free protocol based on four experimental settings, which confirms \eref{eq:MinStwoqubit}.

		\item[3.] $\alpha_1>\alpha_3$ and  $\alpha_2<\pi/4$. 
		
	 In this case, it is more convenient to consider the magic basis. Suppose the state $|\phi_0\rangle$  has the expansion
	$|\phi_0\rangle = \sum_{k=1}^4 \gamma_k |\Phi_k \rangle$ with the normalization condition $\sum_{k=1}^4 |\gamma_k|^2 =1$.
Then the product-state constraint is satisfied if 	the coefficients $\gamma_1^2, \gamma_2^2, \gamma_3^2, \gamma_4^2$ have the form as shown in 
 \aref{app:case3-EFMIS}. Moreover, an EFMIS can be constructed as follows (in the magic basis):
		\begin{equation}\label{eq:case3-input}
		\begin{aligned}
		&|\phi_{1}\>=
		\begin{pmatrix}
		\gamma_{1} \\ \gamma_{2} \\ \gamma_{3} \\ \gamma_{4}
		\end{pmatrix},\quad 
		&|\phi_{2}\>=
		\begin{pmatrix}
		\gamma_{1} \\ -\gamma_{2} \\ \gamma_{3} \\ \gamma_{4}
		\end{pmatrix}, \\
		&|\phi_{3}\>=
		\begin{pmatrix}
		\gamma_{1} \\ \gamma_{2} \\ \gamma_{3} \\ -\gamma_{4}
		\end{pmatrix},\quad 
		&|\phi_{4}\>=
		\begin{pmatrix}
		-\gamma_{1} \\ \gamma_{2} \\ \gamma_{3} \\ -\gamma_{4}
		\end{pmatrix}.
		\end{aligned}
		\end{equation}
Therefore, 
$U(\alpha_1,\alpha_2,\alpha_3)$ can be verified by an entanglement-free protocol based on four experimental settings,  which confirms \eref{eq:MinStwoqubit}.	
	\end{enumerate}

To complete the proof of \thref{thm:MinStwoqubit}, it remains to determine $\mu(U(\alpha_1,\alpha_2,\alpha_3))$ in the case $(\alpha_1,\alpha_2,\alpha_3) \in \caS_\rmE$, which means $0<\alpha_1=\alpha_2=\alpha_3<\pi/4$.  Suppose the input state $|\phi_0\rangle$  has the expansion
$|\phi_0\rangle = \sum_{k=1}^4 \gamma_k |\Phi_k \rangle$ with $\sum_{k=1}^4 |\gamma_k|^2 =1$ in the magic basis. According to \eref{eq:product-state-constraint},  the product-state constraint amounts to the following equality:
\begin{equation}
(\gamma_1^2, \gamma_2^2, \gamma_3^2, \gamma_4^2) = (\gamma_1^2, \gamma_2^2, -\gamma_1^2- \gamma_2^2, 0),
\end{equation}
which implies that  $d_{\mathrm{Prod}}(U) = 3$. So  $U(\alpha_1,\alpha_2,\alpha_3)$ cannot be verified by an entanglement-free protocol according to \thref{thm:minS-entFree}. Nevertheless, $U(\alpha_1,\alpha_2,\alpha_3)$ can be verified by a five-setting protocol based on local operations, given that $\mu(U(\alpha_1,\alpha_2,\alpha_3))=5$ according to  \pref{pro:muUdprod}. This result
 confirms \eref{eq:MinStwoqubit} and completes the proof of \thref{thm:MinStwoqubit}. 
\end{proof}

Next, we generalize \thref{thm:MinStwoqubit} to the whole parameter range $0\le \alpha_3,\alpha_2,\alpha_1 < 2\pi$. 
Define 
\begin{align}\label{eq:ent-free area2}
\tilde{\caS}:=&\Bigl\{  (\alpha_1,\alpha_2,\alpha_3)\Big| 0 \leq \alpha_3,\alpha_2,\alpha_1 < 2\pi \Bigr\},\\
\tilde{\caS}_\rmE:=&\Bigl\{ \Bigl(\frac{\pi}{2} k_1+\frac{\pi}{4} \pm \alpha,\frac{\pi}{2} k_2+\frac{\pi}{4} \pm \alpha,\frac{\pi}{2} k_3+\frac{\pi}{4} \pm \alpha \Bigr) \Big| \nonumber \\
&  0 < \alpha < \frac{\pi}{4}, \;
k_1,k_2,k_3=0,1,2,3 \Bigr\},\\
\tilde{\caS}_\mathrm{EF}:=&\tilde{\caS} \setminus \tilde{\caS}_\rmE.
\end{align} 
The following corollary  is proved in \aref{app:cor:MinStwoqubit-general}.

\begin{corollary}\label{cor:MinStwoqubit-general}
Suppose $0\le \alpha_3,\alpha_2,\alpha_1 < 2\pi$. Then 
\begin{align}
\mu(U(\alpha_1,\alpha_2,\alpha_3))=\begin{cases}
4 & \mbox{if } (\alpha_1,\alpha_2,\alpha_3) \in \tilde{\caS}_\mathrm{EF},\\
5 & \mbox{if } (\alpha_1,\alpha_2,\alpha_3) \in \tilde{\caS}_\rmE. 
\end{cases}
\end{align}
$U(\alpha_1,\alpha_2,\alpha_3)$ can be verified by an entanglement-free protocol iff $(\alpha_1,\alpha_2,\alpha_3) \in \tilde{\caS}_\mathrm{EF}$. 
\end{corollary}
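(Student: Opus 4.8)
The plan is to reduce the whole cube $\tilde{\caS}=[0,2\pi)^3$ to the canonical chamber $\caS$ and exploit the fact that $\mu$ is an invariant of the Schmidt coefficients alone. By \thref{thm:MinStwoqubitG}, for \emph{any} two-qubit unitary one has $\mu(U)=5$ precisely when the sorted Schmidt coefficients satisfy $s_0>s_1=s_2=s_3>0$, and $\mu(U)=4$ otherwise; moreover $U$ admits an entanglement-free protocol iff $\mu(U)=4$ by \thref{thm:minS-entFree}. Hence the corollary is equivalent to the single claim that, for $(\alpha_1,\alpha_2,\alpha_3)\in[0,2\pi)^3$, the Schmidt coefficients of $U(\alpha_1,\alpha_2,\alpha_3)$ have the form $s_0>s_1=s_2=s_3>0$ if and only if $(\alpha_1,\alpha_2,\alpha_3)\in\tilde{\caS}_\rmE$. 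Everything therefore reduces to locating this ``special'' locus.

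First I would record the elementary symmetries of the Schmidt coefficients. Since the $\sigma_k\otimes\sigma_k$ pairwise commute and $\rme^{-\rmi(\pi/2)\sigma_k\otimes\sigma_k}=-\rmi\,\sigma_k\otimes\sigma_k$, the shift $\alpha_k\mapsto\alpha_k+\pi/2$ multiplies $U$ by the local unitary $\sigma_k\otimes\sigma_k$ up to a phase and hence preserves the Schmidt coefficients. Complex conjugation gives $U(-\alpha_1,-\alpha_2,-\alpha_3)=U^*$, while conjugation by $\sigma_i\otimes I$ flips the signs of the other two parameters; composing these realizes a single-coordinate sign flip $\alpha_k\mapsto-\alpha_k$ as a Schmidt-coefficient-preserving operation. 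Finally, the local unitaries $V\otimes V$ implementing $SO(3)$ rotations of the Pauli axes (together with a sign flip for odd permutations) permute $(\alpha_1,\alpha_2,\alpha_3)$. Let $G$ be the group generated by these $\pi/2$-translations, sign flips, and permutations; each generator is a local equivalence or complex conjugation, so it leaves the multiset $\{|\zeta_k|\}$ and hence $\mu$ invariant.

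Next I would invoke the canonical reduction of \sref{sec:CanonicalForm}: every $U(\alpha_1,\alpha_2,\alpha_3)$ is local-unitary/complex-conjugation equivalent to a chamber gate with $(\alpha_1,\alpha_2,\alpha_3)\in\caS=\{0\le\alpha_3\le\alpha_2\le\alpha_1\le\pi/4\}$, i.e.\ every cube point is $G$-equivalent to a chamber point with identical Schmidt coefficients. By \lref{lem:SchmidtCoeffSpecial} a chamber point carries the special pattern iff it lies on the open diagonal $\caS_\rmE=\{(\alpha,\alpha,\alpha):0<\alpha<\pi/4\}$ (on the face $\alpha_1=\pi/4$ one has instead $|\zeta_0|=|\zeta_1|$, so those points are excluded). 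Thus the special locus in $\tilde{\caS}$ is exactly $G\cdot\caS_\rmE\cap[0,2\pi)^3$, and the final task is the explicit orbit computation. Acting coordinatewise by $\alpha\mapsto\pm\alpha+\tfrac{\pi}{2}m$ on a diagonal value $\alpha\in(0,\pi/4)$ produces, in each slot independently, the set $\{\tfrac{\pi}{2}m\pm\alpha:m\in\bbZ\}$. Setting $\alpha'=\pi/4-\alpha\in(0,\pi/4)$ and noting $\tfrac{\pi}{4}-\alpha'=\alpha$ while $\tfrac{\pi}{4}+\alpha'=\tfrac{\pi}{2}-\alpha$, this set coincides with $\{\tfrac{\pi}{2}k+\tfrac{\pi}{4}\pm\alpha':k\in\bbZ\}$, which is precisely the coordinatewise form (independent $k_i$, independent signs, common magnitude) defining $\tilde{\caS}_\rmE$. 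Intersecting with $[0,2\pi)^3$ gives $k_i\in\{0,1,2,3\}$ and establishes $G\cdot\caS_\rmE\cap[0,2\pi)^3=\tilde{\caS}_\rmE$; the values of $\mu$ and the entanglement-free criterion then follow on $\tilde{\caS}_\rmE$ and $\tilde{\caS}_\mathrm{EF}=\tilde{\caS}\setminus\tilde{\caS}_\rmE$.

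The main obstacle I anticipate is the last bookkeeping step: confirming that the per-coordinate orbit, with its independent $\pi/2$-shifts and independent sign choices but a single shared magnitude, reproduces the $\pi/4$-shifted parametrization of $\tilde{\caS}_\rmE$ exactly, with neither over- nor under-counting, and that no boundary point $\alpha_1=\pi/4$ of the chamber leaks into the special locus. The one place where a short computation is genuinely required is verifying that a single-parameter sign flip preserves the Schmidt coefficients (built from complex conjugation composed with a two-axis local conjugation); the rest is orbit bookkeeping layered on top of \thref{thm:MinStwoqubit} and \lref{lem:SchmidtCoeffSpecial}.
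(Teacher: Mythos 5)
Your proposal is correct and follows essentially the same route as the paper's Appendix~F: both reduce the full cube $[0,2\pi)^3$ to the canonical chamber by exploiting the $\pi/2$-periodicity (realized by local Paulis) and the reflection/sign-flip symmetries (realized by complex conjugation composed with local Paulis), and then invoke the chamber result. The only cosmetic differences are that you route the invariance through the Schmidt coefficients and \thref{thm:MinStwoqubitG} rather than directly through $\mu$ and \thref{thm:MinStwoqubit}, and that you make explicit the permutation symmetry needed to land in the ordered chamber, which the paper leaves implicit.
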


\Thref{thm:MinStwoqubit} and \crref{cor:MinStwoqubit-general} imply that generic two-qubit unitary transformations (except for a  set of measure zero) can be verified by entanglement-free protocols based on four experimental settings. In principle we can reach arbitrarily high precision as long as sufficiently many tests can be performed. 
Nevertheless, certain special unitary transformations cannot be verified by entanglement-free protocols, in which case five experimental settings are necessary. Note that the minimum number of settings is not continuous, which is expected for a discrete figure of merit. For each unitary  $U$ in the later case, we can find a nearby unitary $U'$ that can be verified by an entanglement-free protocol. In this way $U$ can be verified approximately by an entanglement-free protocol. However, the precision is limited by the entanglement infidelity between $U'$ and $U$; in addition, the target unitary transformation $U$ cannot pass all the tests with certainty. 
To enhance  the precision, we can find a better approximation to $U$, but the precision is still limited for any given approximation. Although any two-qubit unitary transformation can be verified with five measurement settings (only four settings in the generic case),  quite often the sample efficiency can be improved by increasing the number of measurement settings. The tradeoff between the sample efficiency and the number of experimental settings deserves further studies.\\

\subsection{Examples}\label{example}

In this section we present explicit EFMISs for several well-known two-qubit gates, from which entanglement-free verification protocols can be constructed immediately.

\subsubsection{CNOT}
The 
CNOT gate is equivalent to $U(\frac{\pi}{4},0,0)$ according to the following decomposition
\begin{equation}
\begin{pmatrix}
1 & 0 & 0 & 0 \\
0 & 1 & 0 & 0 \\
0 & 0 & 0 & 1 \\
0 & 0 & 1 & 0
\end{pmatrix}
= V_\rmA \otimes W_\rmB U(\frac{\pi}{4},0,0) \tilde{V}_\rmA \otimes \tilde{W}_\rmB,
\end{equation}
where
\begin{equation}
\begin{aligned}
  V_\rmA&=\frac{1}{\sqrt{2}}
  \begin{pmatrix}
  1 & 1 \\
  \rmi & -\rmi
  \end{pmatrix},&\quad 
  \tilde{V}_\rmA&=\frac{1}{\sqrt{2}}
  \begin{pmatrix}
  1 & 1 \\
  1 & -1
  \end{pmatrix}, \\
  W_\rmB&=\frac{1}{\sqrt{2}}
  \begin{pmatrix}
  1 & \rmi \\
  -\rmi & -1
  \end{pmatrix},&\quad
  \tilde{W}_\rmB&=
  \begin{pmatrix}
  1 & 0 \\
  0 & -1
  \end{pmatrix}.
\end{aligned}
\end{equation}
To construct an entanglement-free protocol for verifying the CNOT gate, it suffices to construct an EFMIS. To this end, we can first construct an EFMIS for
$U(\frac{\pi}{4},0,0)$ and then apply a suitable local unitary transformation, although it is easy to construct an EFMIS for the CNOT gate directly. 
According to \esref{eq:outputStateCB}-\eqref{eq:PSCstandard}, the product-state constraint for  $U(\frac{\pi}{4},0,0)$ under the computational basis can be expressed as
\begin{equation}
  (a_1^2-a_2^2)(b_1^2-b_2^2)=0. 
\end{equation}
A product state satisfies the constraint iff
one of the reduced states is an eigenstate of $\sigma_1$. Based on this observation, an EFMIS can be constructed  as
\begin{equation}\label{eq:input-CNOT0}
\begin{aligned}
  |\phi_1\>&=|0+\>, &\quad 
  |\phi_2\>&=|1+\>,\\ 
  |\phi_3\>&=|{-}0\>,&\quad 
  |\phi_4\>&=|{+}0\>,
\end{aligned}
\end{equation}
where $|\pm\>=(|0\>\pm |1\>)/\sqrt{2}$ are the two eigenstates of $\sigma_1$. By multiplying the local unitary operator $(\tilde{V}_\rmA \otimes \tilde{W}_\rmB)^\dag$, we can construct an EFMIS for the CNOT gate as
\begin{equation}\label{eq:input-CNOT}
\begin{aligned}
  |\tilde{\phi}_1\>&=|{+}{-}\>, &\quad
  |\tilde{\phi}_2\>&=|{-}{-}\>,\\
  |\tilde{\phi}_3\>&=|10\>,&\quad
  |\tilde{\phi}_4\>&=|00\>.
\end{aligned}
\end{equation}

\subsubsection{CZ}

The CZ gate is equivalent to the CNOT gate
according to the identity
\begin{align}
\mathrm{CZ}=(I\otimes H) \mathrm{CNOT} (I\otimes H),
\end{align}
where $H$ is the Hadamard gate.
Therefore, any EFMIS for the CNOT gate can be turned into an EFMIS for the CZ gate by simply applying the local unitary operator $I\otimes H$. 
For example, one EFMIS for the CZ gate can be constructed by applying 
$I\otimes H$ to the states in \eref{eq:input-CNOT}, which yields
\begin{equation}\label{eq:EFMIS-CZ}
\begin{aligned}
  |\phi_1\>&=|{+}1\>, &\quad 
  |\phi_2\>&=|{-}1\>,\\ 
  |\phi_3\>&=|1{+}\>,&\quad
  |\phi_4\>&=|0{+}\>.
\end{aligned}
\end{equation}

\subsubsection{C-Phase}

The C-Phase gate with nontrivial phase $0<\varphi<2\pi$ reads
\begin{equation}
\begin{pmatrix}
1 & 0 & 0 & 0 \\
0 & 1 & 0 & 0 \\
0 & 0 & 1 & 0 \\
0 & 0 & 0 & \rme^{\rmi \varphi}
\end{pmatrix}.
\end{equation}
The conjugate of the C-Phase gate is equivalent to $U\bigl(\frac{\varphi}{4},0,0\bigr)$ according to the following decomposition
\begin{equation}
\begin{pmatrix}
1 & 0 & 0 & 0 \\
0 & 1 & 0 & 0 \\
0 & 0 & 1 & 0 \\
0 & 0 & 0 & \rme^{-\rmi \varphi}
\end{pmatrix}
= V_\rmA \otimes W_\rmB U\Bigl(\frac{\varphi}{4},0,0\Bigr) \tilde{V}_\rmA \otimes \tilde{W}_\rmB,
\end{equation}
where
\begin{equation}
\begin{aligned}
V_\rmA&=\frac{1}{\sqrt{2}}
\begin{pmatrix}
1 & 1 \\
-\rme^{-\rmi \frac{\varphi}{2}} & \rme^{-\rmi \frac{\varphi}{2}}
\end{pmatrix},&\!
\tilde{V}_\rmA&=\frac{1}{\sqrt{2}}
\begin{pmatrix}
1 & -1 \\
1 & 1
\end{pmatrix},\\
W_\rmB&=\frac{1}{\sqrt{2}}
\begin{pmatrix}
\rme^{\rmi \frac{\varphi}{4}} & \rme^{\rmi \frac{\varphi}{4}} \\
\rme^{-\rmi \frac{\varphi}{4}} & -\rme^{-\rmi \frac{\varphi}{4}}
\end{pmatrix},&\!
\tilde{W}_\rmB&=\frac{1}{\sqrt{2}}
\begin{pmatrix}
1 & 1 \\
1 & -1
\end{pmatrix}.
\end{aligned}
\end{equation}

According to  \esref{eq:outputStateCB}-\eqref{eq:PSCstandard}, the product-state constraint for $U(\frac{\varphi}{4},0,0)$ under the computational basis can be expressed as
\begin{equation}
  %(\zeta_0+\zeta_1)^2 a_1a_2b_1b_2=0. 
  (a_1^2-a_2^2)(b_1^2-b_2^2)\sin\frac{\varphi}{2}=0. 
\end{equation}
A product state satisfies the constraint if
one of the reduced states is an eigenstate of $\sigma_1$. So the states in \eref{eq:input-CNOT0} also form an EFMIS for $U(\frac{\varphi}{4},0,0)$. 
By applying the local unitary operator $(\tilde{V}_\rmA \otimes \tilde{W}_\rmB)^\dag$, we can construct an EFMIS for the C-Phase gate (and its conjugate) as
\begin{equation}
\begin{aligned}
  |\phi_1\>&=|{-}0\>, &\quad 
  |\phi_2\>&=|{+}0\>,\\ 
  |\phi_3\>&=-|1{+}\>,&\quad
  |\phi_4\>&=|0{+}\>.
\end{aligned}
\end{equation}
Note that this EFMIS applies to the C-Phase gate with an arbitrary phase. Incidentally, the four states in \eref{eq:EFMIS-CZ} also form an  EFMIS for the C-Phase gate with an arbitrary phase.

\subsubsection{SWAP}
The SWAP gate is equal to $U(\frac{\pi}{4},\frac{\pi}{4},\frac{\pi}{4})$ up to an overall phase factor according to the following identity
\begin{equation}
\begin{pmatrix}
1 & 0 & 0 & 0 \\
0 & 0 & 1 & 0 \\
0 & 1 & 0 & 0 \\
0 & 0 & 0 & 1
\end{pmatrix}
=\frac{1+\rmi}{\sqrt{2}} U(\frac{\pi}{4},\frac{\pi}{4},\frac{\pi}{4}).
%\rme^{-\rmi H(\frac{\pi}{4},\frac{\pi}{4},\frac{\pi}{4})}.
\end{equation}
Thanks to this identity, the EFMIS for $U(\frac{\pi}{4},\frac{\pi}{4},\frac{\pi}{4})$ presented in \eref{eq:1st-case MIS} is also an EFMIS for the SWAP gate. In addition, any product state satisfies the product-state constraint, so any MIS composed of product states is an EFMIS for the SWAP gate.

\section{\label{sec:summary}Summary}

We studied systematically QSV and QGV with a focus on the number of experimental settings based on local operations.   We showed that any bipartite pure state can be verified by only two measurement settings based on local projective measurements. The minimum number of experimental settings required to verify a bipartite unitary increases linearly with the total dimension. 
In addition, we introduced the concept of entanglement-free verification, which does not generate any entanglement in the verification procedure. 
The connection with minimal-setting verification is also clarified. 
Finally, we determined the minimum number of experimental settings required to verify each two-qubit unitary. It turns out any two-qubit unitary can be verified using at most five settings based on local operations, and a generic two-qubit unitary requires only four settings. 
In the course of study we derived a number of results on two-qubit unitaries and their Schmidt coefficients, which are of independent interest. Our work significantly promotes the current understanding
on  QSV and QGV with respect to the number of required experimental settings, which is instructive for both theoretical studies and practical applications.
In addition, our work shows that verification protocols with minimal settings are in general not balanced and thus do not have natural analogs in QSV, which reflects a key distinction between QGV and QSV that is not recognized before.
In the future it would be desirable to generalize our results to the multipartite setting.

\section*{Acknowledgments}
This work is  supported by   the National Natural Science Foundation of China (Grants No.~92165109 and No.~11875110) and  Shanghai Municipal Science and Technology Major Project (Grant No.~2019SHZDZX01).

\appendix

\section{Proofs of \lsref{lem:CLISmax} and \ref{lem:SpanBasis}}\label{app:lem:SpanBasis}

\begin{proof}[Proof of \lref{lem:CLISmax}]
Suppose on the contrary that $\scrS$ is a maximal CLIS contained in $\scrT$ and that $\scrS$ is not a basis for $\caH$. Let $\caH_1$ be the span of $\scrS$ and let $\caH_2$ be the orthogonal complement of $\caH_1$. Then $\caH_1$ and $\caH_2$ have dimensions at least one; in addition,
 $\scrT$ contains a ket $|\psi\>$ that is supported neither in $\caH_1$ nor in $\caH_2$ since otherwise $\scrT$ cannot be connected. Therefore, $\scrS\cup \{|\psi\>\}\subseteq \scrT$ is a CLIS that contains $\scrS$ as a proper subset. This contradiction completes the proof of \lref{lem:CLISmax}. 
\end{proof}

\begin{proof}[Proof of \lref{lem:SpanBasis}]
The first statement in \lref{lem:SpanBasis} follows from \lref{lem:CLISmax}; note that any maximal CLIS contained in the connected spanning set forms a connected basis. To prove the second statement, suppose $\scrT$ is a set of kets in $\caH$ and contains a connected spanning set $\scrS$. Then $\scrT$ is also a spanning set. In addition, each ket in $\scrT$ is not orthogonal to at least one ket in $\scrS$. As a consequence, the transition graph of $\scrT$ is connected given that the transition graph of $\scrS$ is connected. So $\scrT$ is itself a connected spanning set, which completes the proof of \lref{lem:SpanBasis}. 
\end{proof}

\section{Proof of \lref{lem:TestStateProdU}}\label{app:lem:TestStateProdU}

\begin{proof}
For an entanglement-free verification protocol, the conclusion follows from the very definition. So it remains to consider the case in which the verification protocol is composed of $d$ experimental settings based on local operations.  
Then  we have $d\leq |\scrT|\leq d$, where the lower bound follows from the fact that $\scrT$ is a spanning set and the upper bound follows from the fact that the number of experimental settings cannot be smaller than the number of test states. 
It follows that $|\scrT|=d$ and 
$\scrT$ is composed of  $d$ product states. In addition, the number of experimental settings is equal to the number of test states. So the output state associated with each input state in $\scrT$ is also a product state given that at least two measurement settings are required to verify an entangled output state (cf. \thref{theorem:bipartite fewest settings}). Therefore, $\scrT\subseteq \mathrm{Prod}(U)$, which completes the proof of \lref{lem:TestStateProdU}.
\end{proof}

\section{Proof of  \pref{pro:muUdprod}}\label{app:proposition:mu(U)}

\begin{proof}
To prove \eref{eq:mu(U)} in \pref{pro:muUdprod}, we shall first prove the following inequality
		\begin{equation}\label{eq:mu(U)proof}
	\mu(U) \geq d_\mathrm{Prod}(U) + 2[d-d_\mathrm{Prod}(U)].
	\end{equation}
Let  $\scrT$ be the set of test states 	of a verification protocol of $U$ that can be realized by $\mu(U)$ experimental settings. 
Then $\scrT$ is a finite spanning set (of $\caH$)  whose cardinality satisfies $d\leq |\scrT|\leq \mu(U)$.  Let $\scrT'=\mathrm{Prod}(U)\cap\scrT$ and $\scrT''=\scrT\setminus \scrT'$. Then 
\begin{gather}
\dim \spa (\scrT')\leq d_\mathrm{Prod}(U), \\
\dim \spa (\scrT'')\geq d-\dim \spa (\scrT')
\geq d-d_\mathrm{Prod}(U).
\end{gather}
The output state associated with each input state in $\scrT'$ is a product state, so  one measurement setting is required to verify it. 
By contrast, the output state associated with each input state in $\scrT''$ is entangled, 
so at least two measurement settings are required to verify it according to \thref{theorem:bipartite fewest settings}. Therefore,
\begin{align}
\mu(U)&\geq |\scrT'|+2|\scrT''|\geq \dim \spa (\scrT') +2 \dim \spa (\scrT'')\nonumber\\
&\geq \dim \spa (\scrT')+2 [d-\dim \spa (\scrT')]\nonumber\\
&=2d-\dim \spa (\scrT')\geq 2d-d_\mathrm{Prod}(U),
\end{align}
which implies \eref{eq:mu(U)proof}.

Next, suppose $d_\mathrm{Prod}(U)<d$. To prove \eref{eq:mu(U)}, it remains to prove the opposite inequality to \eref{eq:mu(U)proof}. 
Let $\scrS$ be a subset of  $\mathrm{Prod}(U)$ that is composed of $d_\mathrm{Prod}(U)$ linearly independent states.  By adding $d-d_\mathrm{Prod}(U)-1$ suitable product states, we can construct a set $\scrS'$ of $d-1$ linearly independent product states. 
Now we can add a product state that is not in the span of $\scrS'$ and is not orthogonal to any state in $\scrS'$. The resulting set $\scrS''$ forms a connected basis for $\caH$ and so can identify unitaries. 	In addition, the output state associated with each state in $\scrS$ is a product state and so can be verified by one measurement setting based on a local projective measurement. The output state associated with each state in $\scrS''\setminus\scrS$ can be verified by two measurement settings according to \thref{theorem:bipartite fewest settings}. Therefore,
\begin{align}
\mu(U)&\leq |\scrS|+2 |\scrS''\setminus\scrS|
=d_\mathrm{Prod}(U)+2[d-d_\mathrm{Prod}(U)]\nonumber\\
&=2d-d_\mathrm{Prod}(U),
\end{align}
which implies \eref{eq:mu(U)} given the opposite inequality in \eref{eq:mu(U)proof}. 

Now let us consider the case in which $d_\mathrm{Prod}(U)=d$. If the set $\mathrm{Prod}(U)$ is connected, then it contains a connected basis composed of product states
by \lref{lem:SpanBasis}. Moreover, the output state associated with each state in the basis is also a product state and so can be verified by one measurement setting. Therefore, $U$ can be verified by $d$ experimental settings, which means $\mu(U)=d$. 
 
If the set $\mathrm{Prod}(U)$ is not connected, then the set of test states of any valid verification protocol for $U$ contains at least one state not contained in $\mathrm{Prod}(U)$, which implies that $\mu(U)\geq d+1$ [cf. \thref{thm:minS-entFree} and the derivation that leads to \eref{eq:mu(U)proof}].  To complete the proof of \pref{pro:muUdprod}, it remains to construct a verification protocol for $U$ that requires only $d+1$ experimental settings. Let $\scrS$ be a subset of  $\mathrm{Prod}(U)$ that is composed of $d-1$ linearly independent states.  
We can add a product state that is not in the span of $\scrS$ and is not orthogonal to any state in $\scrS$. The resulting set $\scrS'$ forms a connected basis for $\caH$ and so can identify unitaries. 	In addition, the output state associated with each state in $\scrS$ is a product and so can be verified by one measurement setting based on a local projective measurement. The output state associated with the additional product state can be verified by two measurement settings according to \thref{theorem:bipartite fewest settings}. Therefore, $U$ can be verified by $d+1$ experimental settings, that is, $\mu(U)\leq d+1$. In conjunction with the opposite inequality derived above, we conclude that
 $\mu(U)= d+1$ when $d_\mathrm{Prod}(U)=d$ and the set $\mathrm{Prod}(U)$ is not connected.  
\end{proof}

\section{Proofs of \lsref{lem:SchmidtCoeffOrder}-\ref{lem:SchmidtCoeffSame} and and \crref{cor:SchmidtCoeffSame}}\label{app:lem:SchmidtCoeff}

\begin{proof}[Proof of \lref{lem:SchmidtCoeffOrder}]	
Let  $c_j=\cos\alpha_j$ and $s_j=\sin\alpha_j$	for $j=1,2,3$. Then the four Schmidt coefficients of the unitary operator $U(\alpha_1,\alpha_2,\alpha_3)$ can be expressed as follows:
	\begin{equation}\label{eq:SchmidtCoeffShort}
	\begin{aligned}
	|\zeta_0| = \sqrt{c_1^2 c_2^2 c_3^2 + s_1^2 s_2^2 s_3^2},\\
	|\zeta_1| = \sqrt{c_1^2 s_2^2 s_3^2 + s_1^2 c_2^2 c_3^2},\\
	|\zeta_2| = \sqrt{s_1^2 c_2^2 s_3^2 + c_1^2 s_2^2 c_3^2},\\
	|\zeta_3| = \sqrt{s_1^2 s_2^2 c_3^2 + c_1^2 c_2^2 s_3^2}.
	\end{aligned}
	\end{equation}
Now  the assumption $ 0 \leq \alpha_3 \leq \alpha_2 \leq \alpha_1 \leq \pi/4$ implies that
\begin{equation}
0\leq s_3\leq s_2\leq s_1
\leq \frac{\sqrt{2}}{2}\leq c_1\leq c_2\leq c_3\leq 1,
\end{equation}
which in turn implies that
\begin{equation}\label{eq:zi-zj}
\begin{aligned}
|\zeta_0|^2-|\zeta_1|^2=(c_1^2-s_1^2)(c_2^2 c_3^2-s_2^2 s_3^2) \ge 0,  \\
|\zeta_1|^2-|\zeta_2|^2=(c_3^2-s_3^2)( s_1^2 c_2^2-c_1^2 s_2^2) \ge 0,\\
|\zeta_2|^2-|\zeta_3|^2=(c_1^2-s_1^2)( s_2^2 c_3^2-c_2^2 s_3^2) \ge 0.
\end{aligned}
\end{equation}
Therefore,
\begin{equation}\label{eq:SchmidtCoeffOrderProof}
|\zeta_0| \ge |\zeta_1| \ge |\zeta_2| \ge |\zeta_3|\geq 0,
\end{equation}
which confirms \eref{eq:SchmidtCoeffOrder} in \lref{lem:SchmidtCoeffOrder}. The first inequality $|\zeta_0| \ge |\zeta_1|$  is saturated iff $c_1^2=s_1^2$ or $c_2^2 c_3^2=s_2^2 s_3^2$, which holds iff  $\alpha_1 = \pi/4$. 
The second inequality $|\zeta_1| \ge |\zeta_2|$ is saturated iff $c_3^2=s_3^2$ or $s_1^2c_2^2=c_1^2 s_2^2$, which holds iff $\alpha_2=\alpha_1$. 
The third inequality $|\zeta_2| \ge |\zeta_3|$  is saturated iff $c_1^2=s_1^2$ or $s_2^2c_3^2=c_2^2 s_3^2$, which holds iff $\alpha_1=\frac{\pi}{4}$ or $\alpha_3=\alpha_2$. Finally,
the last inequality $|\zeta_3|\geq 0$  is saturated iff
$s_1^2s_2^2=s_3^2=0$,  which holds iff 
$\alpha_2=\alpha_3=0$.
\end{proof}

\begin{proof}[Proof of \lref{lem:SchmidtCoeffSpecial}]
If  $0<\alpha_1=\alpha_2=\alpha_3 <\pi/4$, then \lref{lem:SchmidtCoeffOrder} implies that  $|\zeta_0|>|\zeta_1|=|\zeta_2|=|\zeta_3|>0$.

Next, suppose $|\zeta_0|>|\zeta_1|=|\zeta_2|=|\zeta_3|>0$. Then the inequality $|\zeta_0|>|\zeta_1|$ implies that $\alpha_1<\pi/4$ according to
\lref{lem:SchmidtCoeffOrder}; in addition, the equalities $|\zeta_1|=|\zeta_2|=|\zeta_3|$ imply that $\alpha_1=\alpha_2=\alpha_3$; finally, the inequality $|\zeta_3|>0$ implies that $\alpha_2>0$. 
Combining these results we can deduce that $0<\alpha_1=\alpha_2=\alpha_3 <\pi/4$, which completes the proof of \lref{lem:SchmidtCoeffSpecial}.
\end{proof}

\begin{proof}[Proof of \lref{lem:SchmidtCoeffSame}]
If  the condition in \eref{eq:SchmidtCoeffSameCon1}
holds, that is, $\alpha_j=\alpha_j'$ for $j=1,2,3$, then  $U(\alpha_1,\alpha_2,\alpha_3)$ and $U(\alpha_1',\alpha_2',\alpha_3')$ have the same Schmidt coefficients.  If the condition in \eref{eq:SchmidtCoeffSameCon2} holds, then 
$U(\alpha_1,\alpha_2,\alpha_3)$ and $U(\alpha_1',\alpha_2',\alpha_3')$ also have the same Schmidt coefficients according to \eref{eq:zetakSqSpecial}.

To prove the converse implication in \lref{lem:SchmidtCoeffSame},  let $C_j=\cos(2\alpha_j)$, $S_j=\sin(2\alpha_j)$, $C_j'=\cos(2\alpha_j')$, and $S_j'=\sin(2\alpha_j')$ for $j=1,2,3$;
then  the assumptions
 $ 0 \leq \alpha_3 \leq \alpha_2 \leq \alpha_1 \leq \pi/4$  and $0 \leq \alpha_3' \leq \alpha_2' \leq \alpha_1' \leq \pi/4$ imply  that
\begin{align}\label{eq:Corder}
0\leq C_1\leq C_2\leq C_3\leq 1, \quad 0\leq C_1'\leq C_2'\leq C_3'\leq 1.
\end{align}  
In addition, 
$C_j=0$  iff $\alpha_j=\pi/4$; similarly, $C_j'= 0$ iff $\alpha_j'=\pi/4$.  Furthermore,  according to \eref{eq:zetak}, the Schmidt coefficients of $U(\alpha_1,\alpha_2,\alpha_3)$ satisfy the following relations,
  \begin{align}\label{eq:dk_properties}
&|\zeta_0|^2+|\zeta_3|^2 = \half (1-C_1 C_2),\nonumber \\
&|\zeta_0|^2-|\zeta_2|^2 = \half C_2 (C_1+ C_3),\\ 
&|\zeta_0|^2-|\zeta_3|^2 = \half C_3 (C_1 + C_2),\nonumber
\end{align}
and the Schmidt coefficients of $U(\alpha_1',\alpha_2',\alpha_3')$ satisfy similar relations.

Suppose $U(\alpha_1,\alpha_2,\alpha_3)$ and $U(\alpha_1',\alpha_2',\alpha_3')$ have the same Schmidt coefficients. 
  Then \eref{eq:dk_properties} implies that
 \begin{equation}
	\begin{gathered}\label{eq:sameC}
	C_1 C_2=C_1' C_2', \\
	C_2 C_3=C_2' C_3',\\
	C_1 C_3=C_1' C_3'. 
	\end{gathered}
\end{equation} 
If $\alpha_3\leq \alpha_2\leq \alpha_1<\pi/4$, so that $C_3\geq C_2\geq C_1>0$, then \eref{eq:sameC} implies that $C_j'=C_j$ and  $\alpha_j'=\alpha_j$ for $j=1,2,3$, which confirms \eref{eq:SchmidtCoeffSameCon1}.

If $\alpha_1=\alpha_2=\pi/4$, then we have $C_1=C_2=0$, which implies that $C_1'=C_2'=0$ and $\alpha_1'=\alpha_2'=\alpha_1=\alpha_2=\pi/4$ given \esref{eq:Corder} and \eqref{eq:sameC}. In this case \eref{eq:SchmidtCoeffSameCon2} holds.
 
If $\alpha_1=\pi/4$ and $\alpha_3\leq \alpha_2<\pi/4$, then $C_1=0$ and $C_3\geq C_2>0$, which implies that $C_1'=0$, $C_3', C_2'>0$, and $\alpha_1'=\pi/4$ given \eref{eq:sameC}.  In addition, by virtue of \eref{eq:zetakSqSpecial} we can further 
deduce that $C_2 C_3=C_2' C_3'$ since $U(\alpha_1,\alpha_2,\alpha_3)$ and $U(\alpha_1',\alpha_2',\alpha_3')$ have the same Schmidt coefficients. So \eref{eq:SchmidtCoeffSameCon2} also holds in this case. 
\end{proof}

\begin{proof}[Proof of \crref{cor:SchmidtCoeffSame}]
	As shown in   \sref{sec:CanonicalForm}, $U$ is equivalent to $U(\alpha_1,\alpha_2,\alpha_3)$ or $U^*(\alpha_1,\alpha_2,\alpha_3)$ with the constraint $0 \leq \alpha_3 \leq \alpha_2 \leq \alpha_1 \leq \pi/4$, and $U'$ is equivalent to $U(\alpha_1',\alpha_2',\alpha_3')$ or $U^*(\alpha_1',\alpha_2',\alpha_3')$ with $0 \leq \alpha_3' \leq \alpha_2' \leq \alpha_1' \leq \pi/4$. By assumption $U(\alpha_1,\alpha_2,\alpha_3)$  and $U(\alpha_1',\alpha_2',\alpha_3')$ have the same Schmidt coefficients
	$s_0, s_1, s_2,  s_3$, which satisfy $s_0 > s_1 \ge s_2 \ge s_3$, so we  have $\alpha_1<\pi/4$ and $\alpha_1'<\pi/4$ by \lref{lem:SchmidtCoeffOrder}.
	In addition, $\alpha_j=\alpha_j'$ for $j=1,2,3$ and $U(\alpha_1,\alpha_2,\alpha_3)=U(\alpha_1',\alpha_2',\alpha_3')$ according to \lref{lem:SchmidtCoeffSame}. 
 Therefore,  $U'$ is equivalent to either $U$ or $U^*$ under local unitary transformations.
\end{proof}

\section{Two inequivalent unitary operators with the same Schmidt coefficients}\label{app:eg same coeff}

According to \eref{eq:zetakSqSpecial}, we can choose the following parameters
\begin{align}
\alpha_1&=\frac{\pi}{4},&
\alpha_2&=\arccos \sqrt{\frac{11}{16}},& \alpha_3&=\arccos \sqrt{\frac{17}{24}},\\
\alpha_1'&=\frac{\pi}{4},&
\alpha_2'&=\arccos\sqrt{\frac{5}{8}},& \alpha_3'&=\arccos \sqrt{\frac{13}{16}}, \end{align}
which satisfy $\cos(2\alpha_2')\cos(2\alpha_3')=\cos(2\alpha_2)\cos(2\alpha_3)$. 
It is easy to verify that the two inequivalent unitary operators $U(\alpha_1,\alpha_2,\alpha_3)$ and $U(\alpha_1',\alpha_2',\alpha_3')$
have the same Schmidt coefficients:
\begin{equation}
\sqrt{\frac{37}{128}},\quad \sqrt{\frac{37}{128}},\quad \sqrt{\frac{27}{128}},\quad \sqrt{\frac{27}{128}}. 
\end{equation}

\section{EFMIS for $U(\alpha_1,\alpha_2,\alpha_3)$ when  $0 \leq \alpha_3 \leq \alpha_2 \leq \alpha_1 \leq \pi/4$,  $\alpha_1>\alpha_3$, and  $\alpha_2<\pi/4$ }\label{app:case3-EFMIS}
In this appendix we construct an EFMIS for the unitary $U(\alpha_1,\alpha_2,\alpha_3)$ when  $0 \leq \alpha_3 \leq \alpha_2 \leq \alpha_1 \leq \pi/4$,  $\alpha_1>\alpha_3$, and  $\alpha_2<\pi/4$, which corresponds to the third case in the proof of \thref{thm:MinStwoqubit}.

Suppose in the magic basis the input state $|\phi_0\rangle$  has the expansion
$|\phi_0\rangle = \sum_{k=1}^4 \gamma_k |\Phi_k \rangle$, where the coefficients satisfy the normalization condition 
\begin{equation}
\sum_{k=1}^4 |\gamma_k|^2 =1. \label{eq:NormalCon}
\end{equation}
Then the product-state constraint holds if the coefficients $\gamma_1^2, \gamma_2^2, \gamma_3^2, \gamma_4^2$ can be expressed as follows:
  \begin{equation}\label{eq:case3-gamma}
  \begin{aligned}
  \gamma_1^2=&\rme^{2\rmi \alpha_1} \gamma_0^2,\quad 
  \gamma_2^2=\rme^{2\rmi \alpha_2} \gamma_0^2,\\
  \gamma_3^2=&\exp[\rmi (2\alpha_1+2\alpha_2-2\alpha_3+\pi)] \\
  &\times \frac{\sin(2\alpha_1+2\alpha_3)+\sin(2\alpha_2+2\alpha_3)}{\sin(2\alpha_1+2\alpha_2)} \gamma_0^2,\\
  \gamma_4^2=&\exp[\rmi (-2\alpha_3+\pi)] \\
  &\times \frac{\sin(2\alpha_1-2\alpha_3)+\sin(2\alpha_2-2\alpha_3)}{\sin(2\alpha_1+2\alpha_2)} \gamma_0^2,
  \end{aligned}
  \end{equation}
where 
\begin{align}\label{eq:case3-gamma0}
\gamma_0^2=&\frac{\sin(2\alpha_1+2\alpha_2)}{2\sin(2\alpha_1+2\alpha_2)+2[\sin(2\alpha_1)+\sin(2\alpha_2)]\cos(2\alpha_3)}
\end{align} 
is determined by the  normalization condition in \eref{eq:NormalCon}. Note that $\sin(2\alpha_1+2\alpha_2)>0$ by assumption.

Moreover, an EFMIS can be constructed as follows (in the magic basis):
\begin{equation}\label{eq:case3-inputApp}
\begin{aligned}
&|\phi_{1}\>=
\begin{pmatrix}
\gamma_{1} \\ \gamma_{2} \\ \gamma_{3} \\ \gamma_{4}
\end{pmatrix},\quad 
&|\phi_{2}\>=
\begin{pmatrix}
\gamma_{1} \\ -\gamma_{2} \\ \gamma_{3} \\ \gamma_{4}
\end{pmatrix}, \\
&|\phi_{3}\>=
\begin{pmatrix}
\gamma_{1} \\ \gamma_{2} \\ \gamma_{3} \\ -\gamma_{4}
\end{pmatrix},\quad 
&|\phi_{4}\>=
\begin{pmatrix}
-\gamma_{1} \\ \gamma_{2} \\ \gamma_{3} \\ -\gamma_{4}
\end{pmatrix}.
\end{aligned}
\end{equation}
The Gram matrix of the four states  reads 
  \begin{equation}\label{eq:gram matrix-case3}
  G = 
  \begin{pmatrix}
  1 & g_2 & g_4 & h_1 \\
  g_2 & 1 & h_2 & -g_3\\
  g_4 & h_2 & 1 & g_1 \\
  h_1 & -g_3 & g_1 & 1
  \end{pmatrix},
  \end{equation}
where $h_1 = g_1 + g_4-1, h_2=g_2+g_4-1 $ and $g_j=1-2|\gamma_j|^2$ for $j=1,2,3,4$. Its determinant is 
$64 |\gamma_1 \gamma_2 \gamma_3 \gamma_4|^2 \ne 0$, which implies that the four states in \eref{eq:case3-inputApp} span the whole Hilbert space.  
In addition, we have 
  \begin{align}\label{eq:gne0}
  g_1,g_2,g_3,g_4\ne0
  \end{align}
as proved below, which means  the corresponding transition graph is connected, so the states in \eref{eq:case3-inputApp} indeed form an EFMIS.

\begin{proof}[Proof of \eref{eq:gne0}]
We shall prove \eref{eq:gne0} by reduction to absurdity. Suppose $g_1=0$ or $g_2=0$; then we have $|\gamma_0|^2=1/2$. Let $S_j=\sin(2\alpha_j)$ and $C_j=\cos(2\alpha_j)$ for $j=1,2,3$. From \eref{eq:case3-gamma0}, we can deduce that
\begin{equation}
  (S_1+S_2)C_3=0.
\end{equation}
Therefore,  $\alpha_1=\alpha_2=\alpha_3=0$ or $\pi/4$, which contradicts the assumption. This contradiction shows that $g_1 \ne 0$ and $g_2 \ne 0$. 

Suppose $g_3=0$; then  $|\gamma_3|^2=1/2$. From \esref{eq:case3-gamma} and \eqref{eq:case3-gamma0} we can deduce that
\begin{equation}\label{eq:g3=0}
  C_1S_3+C_2S_3=C_1S_2+S_1C_2.
\end{equation}
Meanwhile, the assumptions  $\alpha_1>\alpha_3$ and $\alpha_2<\pi/4$ imply that  $C_2>0$, $S_2\ge S_3$, $S_1>S_3$, and
\begin{equation}
  C_1S_3+C_2S_3<C_1S_2+S_1C_2,
\end{equation}
which contradicts \eref{eq:g3=0}. This contradiction shows that $g_3 \ne 0$.

Suppose $g_4=0$; then $|\gamma_4|^2=1/2$. From \esref{eq:case3-gamma} and \eqref{eq:case3-gamma0} we can deduce that
\begin{equation}\label{eq:g4=0}
  -(C_1+C_2)S_3=\sin(2\alpha_1+2\alpha_2).%C_1S_2+S_1C_2.
\end{equation}
However, this equation cannot hold given the assumptions $\alpha_1>\alpha_3$ and $\alpha_2<\pi/4$. 
This contradiction shows that $g_4 \ne 0$ and completes the proof of \eref{eq:gne0}. 
\end{proof}

\bigskip

\section{Proof of \crref{cor:MinStwoqubit-general}}\label{app:cor:MinStwoqubit-general}

\begin{proof}
\Crref{cor:MinStwoqubit-general} follows from \thref{thm:MinStwoqubit} and the following equations:
\begin{align}
\mu(U(\alpha_1+\pi/2,\alpha_2,\alpha_3))&=\mu(U (\alpha_1,\alpha_2,\alpha_3)), \label{eq:period1}\\
\mu(U(\alpha_1,\alpha_2+\pi/2,\alpha_3))&=\mu(U (\alpha_1,\alpha_2,\alpha_3)),\label{eq:period2}\\
\mu(U(\alpha_1,\alpha_2,\alpha_3+\pi/2))&=\mu(U (\alpha_1,\alpha_2,\alpha_3)),\label{eq:period3}\\
\mu(U (\pi/4-\alpha_1,\alpha_2,\alpha_3))&=\mu(U(\pi/4+\alpha_1,\alpha_2,\alpha_3)),\label{eq:reflection1}\\
\mu(U (\alpha_1,\pi/4-\alpha_2,\alpha_3))&=\mu(U(\alpha_1,\pi/4+\alpha_2,\alpha_3)),\label{eq:reflection2}\\
\mu(U (\alpha_1,\alpha_2,\pi/4-\alpha_3))&=\mu(U(\alpha_1,\alpha_2,\pi/4+\alpha_3)).\label{eq:reflection3}
\end{align}
\Esref{eq:period1}-\eqref{eq:period3} mean $\mu(U (\alpha_1,\alpha_2,\alpha_3))$ is periodic in $\alpha_1,\alpha_2,\alpha_3$, respectively, with the common period of $\pi/2$. \Esref{eq:reflection1}-\eqref{eq:reflection3} mean
$\mu(U (\alpha_1,\alpha_2,\alpha_3))$ is invariant under reflection with respect to the three planes specified by $\alpha_1=\pi/4$, $\alpha_2=\pi/4$, $\alpha_3=\pi/4$, respectively.

\Eref{eq:period1} follows from the equality
\begin{equation}
  U (\alpha_1,\alpha_2,\alpha_3)=\rmi (\sigma_1 \otimes \sigma_1)U(\alpha_1+\pi/2,\alpha_2,\alpha_3),
\end{equation}
given that  $\mu(U)$ is invariant under local unitary transformations. \Esref{eq:period2} and \eqref{eq:period3} can be proved in a similar way.

\Eref{eq:reflection1} follows from the equality
\begin{equation}
  U (\pi/4-\alpha_1,\alpha_2,\alpha_3) = -\rmi \sigma_1^\rmA  U^*(\pi/4+\alpha_1,\alpha_2,\alpha_3) \sigma_1^\rmB,
\end{equation}
given that  $\mu(U)$ is also invariant under complex conjugation. \Esref{eq:reflection2} and \eqref{eq:reflection3} can be proved in a similar way. 
\end{proof}

\bibliography{MinS_ref}
\end{document}